\documentclass{article}
\usepackage[T1]{fontenc}
\usepackage[utf8]{inputenc}
\usepackage{amsmath}
\usepackage{amsthm}
\usepackage{amssymb}
\usepackage{geometry}
\usepackage[pdfusetitle,
 bookmarks=true,bookmarksnumbered=false,bookmarksopen=false,
 breaklinks=false,pdfborder={0 0 1},backref=false,colorlinks=false]
 {hyperref}

\makeatletter
\newcommand{\lyxaddress}[1]{
	\par {\raggedright #1
	\vspace{1.4em}
	\noindent\par}
}

\usepackage{stmaryrd}
\usepackage{cases}

\makeatother

\theoremstyle{plain}
\newtheorem{thm}{\protect\theoremname}
\theoremstyle{definition}
\newtheorem{defn}[thm]{\protect\definitionname}
\theoremstyle{plain}
\newtheorem{lem}[thm]{\protect\lemmaname}
\theoremstyle{definition}
\newtheorem{example}[thm]{\protect\examplename}
\theoremstyle{plain}
\newtheorem{cor}[thm]{\protect\corollaryname}

\providecommand{\corollaryname}{Corollary}
\providecommand{\definitionname}{Definition}
\providecommand{\examplename}{Example}
\providecommand{\lemmaname}{Lemma}
\providecommand{\theoremname}{Theorem}

\begin{document}
\title{Quantum Sufficiency for Self-Adjoint Statistical Models via Likelihood-Type
Operators on Real $*$-Subalgebras and Real Jordan Algebras}
\author{Koichi Yamagata}
\maketitle

\lyxaddress{Institute of Science and Engineering, Kanazawa University\\
Kanazawa, Ishikawa, 920-1192, Japan}

\global\long\def\E{\mathcal{E}}%
\global\long\def\S{\mathcal{S}}%
\global\long\def\R{\mathbb{R}}%
\global\long\def\C{\mathbb{C}}%
\global\long\def\N{\mathbb{N}}%
\global\long\def\Z{\mathbb{Z}}%
\global\long\def\D{\mathcal{D}}%
\global\long\def\M{\mathcal{M}}%
\global\long\def\X{\mathcal{X}}%
\global\long\def\F{\mathcal{F}}%
\global\long\def\B{\mathcal{B}}%
\global\long\def\T{\mathcal{T}}%
\global\long\def\P{\mathcal{P}}%
\global\long\def\H{\mathcal{H}}%
\global\long\def\Y{\mathcal{Y}}%
\global\long\def\A{\mathcal{A}}%
\global\long\def\bra#1{\left\langle #1\right|}%
\global\long\def\ket#1{\left|#1\right\rangle }%
\global\long\def\tr{{\rm tr}\,}%
\global\long\def\Tr{{\rm Tr}\,}%
\global\long\def\braket#1#2{\left\langle #1\mid#2\right\rangle }%
\global\long\def\V{\mathcal{V}}%
\global\long\def\re{{\rm Re}\,}%
\global\long\def\im{{\rm Im}\,}%
\global\long\def\id{{\rm id}}%
\global\long\def\ii{\sqrt{-1}}%
\global\long\def\supp{\mathrm{supp}\,}%
\global\long\def\Span{\mathrm{span}}%

\begin{abstract}
We develop a theory of quantum sufficiency based on real $*$-subalgebras
and real Jordan algebras. In contrast to the conventional formulation,
which is built on families of states, complex completely positive
coarse-grainings, and Radon--Nikodym cocycles associated with faithful
reference states, our framework allows models consisting of general
self-adjoint operators, including derivatives of states. Within this
framework, square-root likelihood ratios and symmetric logarithmic
derivatives arise naturally as fundamental self-adjoint likelihood-type
objects. This makes it possible to treat ordinary quantum statistical
models and local quantum statistical structures within a unified setting.

We introduce sufficient real positive maps and show that sufficient
complex $*$-subalgebras, sufficient real $*$-subalgebras, and sufficient
real Jordan algebras correspond respectively to complex completely
positive maps, real completely positive maps, and real positive maps.
The associated sufficient maps admit realizations in terms of complex
conditional expectations, real conditional expectations, and real
Jordan conditional expectations. We further characterize minimal sufficient
real $*$-subalgebras by self-adjoint likelihood-type objects together
with $\rho$-modular invariance, without using Radon--Nikodym cocycles,
and show that the real Jordan algebra generated by the likelihood-ratio
set and the orthogonal projection of the reference state is the minimal
sufficient real Jordan algebra. Our framework also provides a natural
formulation of quantum exponential families, including degenerate-rank
cases, in terms of square-root likelihood ratios.

As structural consequences, we obtain Koashi--Imoto type decompositions
for sufficient real $*$-subalgebras and sufficient real Jordan algebras
associated with general self-adjoint operator models. Our formulation
admits degenerate reference states and separates the likelihood-ratio
aspect of sufficiency from its genuinely quantum modular aspect. These
results suggest that real Jordan structure provides a natural framework
for the statistical aspect of quantum theory beyond the conventional
complex $*$-algebraic setting.
\end{abstract}

\section{Introduction}

The notion of sufficiency in quantum statistics originates in the
operator-algebraic work of Petz, who formulated a noncommutative analogue
of classical sufficient statistics. In his foundational work, sufficiency
of a subalgebra was described in terms of restriction to the subalgebra,
with preservation of relative entropy providing one of its basic characterizations.
More fundamentally, however, Petz showed that sufficiency is governed
by operator-algebraic structure: it is equivalent to the inclusion
of the relevant Radon--Nikodym cocycle in the reduced algebra and
to the existence of an appropriate generalized conditional expectation
\cite{sufficient_petz}. Thus, from the outset, quantum sufficiency
was understood not merely in terms of a scalar information quantity,
but in terms of modular structure and exact reconstructability from
a restricted algebra of observables.

In the finite-dimensional setting, this viewpoint naturally extends
from subalgebra inclusions to coarse-grainings between matrix algebras.
One then asks whether a coarse-graining preserves all statistical
information carried by a given family of density operators. This formulation
was systematically developed by Jen\v{c}ov\'a and Petz, who established
several equivalent characterizations of sufficient coarse-grainings
and connected sufficiency with reversibility on the given model, factorization-type
descriptions, and modular conditions expressed directly in terms of
density operators \cite{KI-deco2}. In this way, quantum sufficiency
is naturally understood as the possibility of an information-lossless
reduction by a coarse-graining, rather than merely as a property of
a distinguished subalgebra.

A further structural step was made by Mosonyi and Petz, who showed
that sufficient coarse-grainings impose a highly constrained decomposition
of the underlying density matrices \cite{sufficient_moso}. In particular,
when a coarse-graining is sufficient for a pair of states, the states
admit a block-type decomposition compatible with the reduction map.
Although this is not yet the Koashi--Imoto decomposition itself,
it already exhibits the same basic mechanism: the model splits into
state-dependent and state-independent parts. This makes the internal
structure behind sufficiency explicit and clarifies why sufficiency
is closely related to exact reconstruction phenomena. The connection
with the Koashi--Imoto decomposition was made more explicit in the
work of Jen\v{c}ov\'a and Petz, who treated the Imoto--Koashi theorem
as an application of the general sufficiency framework \cite{KI-deco2}.
Taken together, these works established a theory of quantum sufficiency
centered on four closely related ingredients: sufficient subalgebras,
sufficient coarse-grainings, modular characterizations via cocycles,
and KI-type structural decompositions.

In its standard form, however, this theory is formulated within the
framework of complex matrix algebras and their $*$-subalgebras. In
particular, a coarse-graining is described by a complex-linear unital
completely positive map, or more generally by a Schwarz map. This
requirement is physically well motivated, since such maps represent
admissible quantum operations. If, however, the aim is not to model
a physical process but only to identify when two statistical descriptions
carry the same information about a given model, then this requirement
may be stronger than necessary. From a statistical point of view,
it is therefore natural to ask whether the relevant notion of reduction
can be formulated under weaker assumptions, for example in terms of
real-linear positive maps acting on self-adjoint observables.

This question becomes especially compelling once one recalls that
the observables relevant to statistical discrimination are self-adjoint
operators. Their natural linear structure is real, and their canonical
algebraic closure is Jordan rather than associative. Moreover, a related
limitation of the standard theory is that the statistical model is
usually taken to be a family of states, that is, positive operators
of unit trace. In the theory of sufficient subalgebras, one starts
from a family of quantum states $\mathcal{S}$ and chooses a strictly
positive reference state $\rho\in\mathrm{span}\,\mathcal{S}$. The
associated cocycles between $\rho$ and states $\sigma\in\mathcal{S}$
play the role of noncommutative analogues of classical likelihood
ratios, and sufficiency is formulated in terms of their behavior under
restriction or coarse-graining. Thus, the traditional formulation
is built on modular objects associated with a faithful reference state.

From the viewpoint of modern quantum statistics, however, this framework
is not always the most natural one. In the theory of quantum local
asymptotic normality (q-LAN), the square-root likelihood ratio $R$
defined by
\[
\sigma=R\rho R,\qquad R\ge0,
\]
plays a central role \cite{qLAN1,qLAN2,qLAN3}. In local quantum estimation
theory, symmetric logarithmic derivatives (SLDs) play a similarly
fundamental role. More generally, local statistical models are governed
not only by the states themselves but also by their infinitesimal
variations with respect to the parameters, such as derivatives $\partial_{i}\rho_{\theta}$.
The corresponding self-adjoint likelihood-type objects, including
square-root likelihood ratios and SLDs, therefore arise naturally
as basic quantities. They belong to the self-adjoint part of the operator
algebra, but are not restricted to positive operators or states. Moreover,
in these contexts the reference state $\rho$ is not naturally required
to be strictly positive.

This reveals a gap between the traditional formulation of quantum
sufficiency and the structures that arise naturally in contemporary
quantum statistical theory. While Petz's framework is expressed in
terms of cocycles associated with faithful reference states, the basic
objects in q-LAN and local estimation are self-adjoint likelihood-type
operators that remain meaningful without the strict positivity assumption.
The relation of such objects to sufficiency has not yet been clarified
in a general framework.

In the present work, we address this problem by developing a theory
of sufficiency for more general operator families $\mathcal{S}$,
allowing not only positive operators but also self-adjoint operators
arising, for example, as derivatives of states with respect to parameters.
Our formulation takes self-adjoint likelihood-type objects, including
positive square-root likelihood ratios and SLDs, as primary. This
has two main consequences. First, it removes the need to assume that
the reference state is strictly positive. Second, it makes it natural
to formulate the theory on real $*$-subalgebras and, more fundamentally,
on real Jordan algebras. In this way, our framework extends the traditional
theory of quantum sufficiency toward a form more intrinsic to quantum
statistical inference.

Closely related work has recently been developed by van Luijk and
Wilming \cite{vanluijk_wilming}. They study sufficiency and interconversion
of quantum statistical experiments under positive trace-preserving
maps and characterize the corresponding structure in terms of minimal
sufficient Jordan algebras. The present paper treats self-adjoint
statistical models, including derivatives of states, and develops
a sufficiency theory based on real-linear positive maps and self-adjoint
likelihood-type objects, leading naturally to real $*$-subalgebras,
real Jordan algebras, $\rho$-modular invariance, and a natural formulation
of quantum exponential families.

Our main results are as follows. First, we generalize the notion of
statistical model from families of states to families of self-adjoint
operators, thereby incorporating local statistical objects such as
derivatives of states. Within this framework, square-root likelihood
ratios and symmetric logarithmic derivatives (SLDs) arise naturally
as fundamental self-adjoint likelihood-type objects. Second, we replace
the conventional complex-linear completely positive coarse-graining
framework by a real-linear positive one and show that the resulting
notion of sufficiency admits natural formulations in terms of self-adjoint
likelihood-type objects. Third, we establish a corresponding hierarchy
of sufficient subalgebras: sufficient complex $*$-subalgebras, sufficient
real $*$-subalgebras, and sufficient real Jordan algebras are characterized
respectively by complex completely positive maps, real completely
positive maps, and real positive maps, and the corresponding sufficient
maps admit realizations in terms of complex conditional expectations,
real conditional expectations, and real Jordan conditional expectations.
Fourth, we identify the minimal sufficient real $*$-subalgebra as
the smallest real $*$-subalgebra containing the relevant self-adjoint
likelihood-type objects and closed under the transformation $A\mapsto\rho A\rho^{-1}$.
Fifth, if $\rho_{0}$ is the orthogonal projection of the reference
state $\rho$ onto this real $*$-subalgebra, then the real Jordan
algebra generated by $\rho_{0}$ and the self-adjoint likelihood-type
objects is the minimal sufficient real Jordan algebra. This indicates
that self-adjoint likelihood-type objects, rather than Radon--Nikodym
cocycles, should be regarded as the appropriate noncommutative analogue
of classical likelihood ratios. This viewpoint also provides a natural
formulation of quantum exponential families, including degenerate-rank
cases, in terms of self-adjoint likelihood-type objects. Finally,
the resulting framework admits degenerate reference states and extends
KI-type decomposition results to real $*$-subalgebras and real Jordan
algebras.

The remainder of this paper is organized as follows. Section \ref{sec:main}
introduces the basic framework and summarizes the main results. Section
\ref{sec:real_algebra} reviews the structure of real $*$-subalgebras
and establishes basic facts on real positive maps and real conditional
expectations. Section \ref{sec:modular} is devoted to degenerate
$\rho$-modular invariance and its characterization in terms of real
conditional expectations. Section \ref{sec:jordan_pos} develops the
structure of real Jordan algebras and real positive maps. In Section
\ref{sec:pos2ce}, we show how sufficient real positive maps give
rise to sufficient real Jordan algebras, sufficient real $*$-subalgebras,
and sufficient complex $*$-subalgebras. Section \ref{sec:min_suff}
characterizes minimal sufficient real $*$-subalgebras and minimal
sufficient real Jordan algebras in terms of the likelihood-ratio set
and the $\rho$-modular invariance condition. In Section \ref{sec:KI},
we establish Koashi--Imoto type decompositions in the real $*$-algebraic
and real Jordan settings. Section \ref{sec:support} discusses an
application to sufficient support size in quantum estimation. Section
\ref{sec:examples} presents several examples illustrating the present
framework. Finally, the appendices contain supplementary discussions,
including the necessity of restricting to $\mathcal{H}_{\mathcal{S}}$
and the relation between $\mathcal{D}_{\rho}$-invariance and $\rho$-modular
invariance.

\section{Framework and main results\label{sec:main}}

Let $\mathcal{H}$ be a finite-dimensional Hilbert space. We denote
by $\mathcal{B}(\mathcal{H})$ the algebra of linear operators on
$\mathcal{H}$, and by $\mathcal{B}_{h}(\mathcal{H})$ the set of
Hermitian operators. We write $\mathcal{B}(\mathcal{H})_{+}$ for
the set of positive operators, and $\mathcal{B}(\mathcal{H})_{++}$
for the set of strictly positive operators. We now introduce real
$*$-subalgebras of $\mathcal{B}(\mathcal{H})$.
\begin{defn}
\label{def:real_alg}Let $\mathcal{\A_{\R}}$ be a real subspace of
$\B(\mathcal{H})$. We call $\mathcal{\A_{\R}}$ a real $*$-subalgebra
of $\mathcal{B}(\mathcal{H})$ if the following conditions hold:
\begin{description}
\item [{(i)}] If $A,B\in\A_{\R}$, then $A+B\in\A_{\R}$. 
\item [{(ii)}] If $A,B\in\A_{\R}$, then $AB\in\A_{\R}$.
\item [{(iii)}] If $A\in\A_{\R}$ and $r\in\R$, then $rA\in\A_{\R}$. 
\item [{(iv)}] If $A\in\A_{\R}$, then $A^{*}\in\A_{\R}$.
\item [{(v)}] $I\in\A_{\R}$.
\end{description}
\end{defn}

Note that condition (iii) requires closure only under real scalars
$r\in\mathbb{R}$, not necessarily under complex scalars. In the decomposition
into irreducible components of a finite-dimensional real $*$-subalgebra
of $\mathcal{B}(\mathcal{H})$, one encounters not only complex matrix
algebras $\mathbb{M}_{n}(\mathbb{C})$, but also real matrix algebras
$\mathbb{M}_{n}(\mathbb{R})$ and quaternionic matrix algebras $\mathbb{M}_{n}(\mathbb{H})$.
See Section \ref{sec:real_algebra} for details.

We next introduce real Jordan algebras $\mathcal{A}_{J}$.
\begin{defn}
\label{def:jordan_alg}Let $\mathcal{A}_{J}\subset\mathcal{B}_{h}(\mathcal{H})$.
We call $\mathcal{A}_{J}$ a real Jordan algebra if the following
conditions hold:
\begin{description}
\item [{(i)}] If $A,B\in\mathcal{A}_{J}$, then $A+B\in\mathcal{A}_{J}$. 
\item [{(ii)}] If $A,B\in\mathcal{A}_{J}$, then $A\circ B\in\mathcal{A}_{J}$.
\item [{(iii)}] If $A\in\mathcal{A}_{J}$ and $r\in\R$, then $rA\in\mathcal{A}_{J}$. 
\item [{(iv)}] $I\in\mathcal{A}_{J}$.
\end{description}
Here $A\circ B=\frac{1}{2}\left(AB+BA\right)$ denotes the Jordan
product. In the decomposition into irreducible components of a finite-dimensional
real Jordan algebra, one encounters not only the Hermitian parts of
$\mathbb{M}_{n}(\mathbb{C})$, $\mathbb{M}_{n}(\mathbb{R})$, and
$\mathbb{M}_{n}(\mathbb{H})$, but also spin factors $\Gamma_{n}$.
See Section \ref{sec:jordan_pos} for details.
\end{defn}

We now introduce several classes of real positive maps. A real-linear
map $\alpha:\mathcal{B}(\mathcal{H})\to\mathcal{B}(\mathcal{H})$
is called positive if $\alpha(B)\ge0$ for every $B\in\mathcal{B}(\mathcal{H})_{+}$.
A real positive map $\alpha$ is called unital if $\alpha(I)=I$.
A real positive map $\alpha$ is called faithful if, for every $B\in\mathcal{B}(\mathcal{H})_{+}$,
the equality $\alpha(B)=0$ implies $B=0$. Let $\mathcal{A}_{\mathbb{R}}\subset\mathcal{B}(\mathcal{H})$
be a real $*$-subalgebra. A real unital positive map $\alpha:\mathcal{B}(\mathcal{H})\to\mathcal{A}_{\mathbb{R}}$
is called a real conditional expectation if
\[
A\alpha(B)=\alpha(AB)
\]
holds for every $A\in\mathcal{A}_{\mathbb{R}}$ and $B\in\mathcal{B}(\mathcal{H})$.
We say that $\alpha$ is real completely positive if, for every finite
collections $\{A_{i}\}^{n}_{i=1}\subset\mathcal{A}_{\mathbb{R}},\{B_{i}\}^{n}_{i=1}\subset\mathcal{B}(\mathcal{H}),$
one has
\[
\sum_{i,j}A^{*}_{i}\alpha(B^{*}_{i}B_{j})A_{j}\geq0.
\]
Let $\mathcal{A}_{J}\subset\mathcal{B}_{h}(\mathcal{H})$ be a real
Jordan algebra. A real unital positive map $\alpha:\mathcal{B}_{h}(\mathcal{H})\to\mathcal{A}_{J}$
is called a real Jordan conditional expectation if
\[
A\circ\alpha(B)=\alpha(A\circ B)
\]
holds for every $A\in\mathcal{A}_{J}$ and $B\in\mathcal{B}_{h}(\mathcal{H})$.
Note that all maps introduced here are assumed to be real-linear,
not complex-linear. When one wishes to describe physically admissible
transitions, one often requires complex complete positivity. By contrast,
when the purpose is to compare statistical models and transform POVM,
it is sufficient to work with real-linear positive maps. Indeed, if
$\{M_{x}\}_{x\in\mathsf{X}}$ is a POVM, that is, $M_{x}\in\mathcal{B}(\mathcal{H})_{+}$
and $\sum_{x}M_{x}=I$, then a real unital positive map $\alpha$
produces another POVM $\{\alpha(M_{x})\}_{x\in\mathsf{X}}$. This
distinction is one of the key points of the present work.

We next introduce the notion of a model. Throughout this paper, a
model on a finite-dimensional Hilbert space $\mathcal{H}$ means a
family $\mathcal{S}\subset\mathcal{B}_{h}(\mathcal{H})$ of self-adjoint
operators. Note that, unlike in the conventional setting, a model
in our sense is not restricted to families of states or positive operators.
We next introduce two important types of models: absolutely continuous
models and local models. An absolutely continuous model is defined
as follows. Fix a reference state $\rho\in\mathcal{S}$ with $\rho\ge0$,
and assume that for every $X\in\mathcal{S}$ there exists an operator
$R_{X}\ge0$ such that 
\begin{equation}
X=R_{X}\rho R_{X}.\label{eq:ratio}
\end{equation}
This setting is intended for families of quantum states. In \cite{qLAN2},
when such an operator $R_{X}$ exists, $X\geq0$ is said to be absolutely
continuous with respect to $\rho$, and $R_{X}$ is called the square-root
likelihood ratio. We emphasize, however, that the notion of absolute
continuity used here should not be understood in the conventional
support-theoretic sense. A local model is defined as follows. Fix
a reference state $\rho\in\mathcal{S}$ with $\rho\ge0$, and assume
that every $X\in\mathcal{S}$ can be written in the form
\begin{equation}
X=\frac{1}{2}(L_{X}\rho+\rho L_{X})\label{eq:sld}
\end{equation}
for some self-adjoint operator $L_{X}\in\mathcal{B}_{h}(\mathcal{H})$.
This setting is motivated by a smoothly parameterized family of quantum
states $\left\{ \rho_{\theta}\in\mathcal{B}(\mathcal{H})_{+}\mid\theta\in\Theta\subset\mathbb{R}^{d}\right\} $.
Fixing $\theta_{0}\in\Theta$, we have in mind the model $\mathcal{S}=\{\rho_{\theta_{0}}\}\cup\left\{ \frac{\partial}{\partial\theta^{i}}\rho_{\theta_{0}}\right\} ^{d}_{i=1}$.
Taking $\rho=\rho_{\theta_{0}}$ as the reference state, let $L_{i}\in\mathcal{B}_{h}(\mathcal{H})$
be defined by $\frac{\partial}{\partial\theta^{i}}\rho_{\theta_{0}}=\frac{1}{2}(L_{i}\rho+\rho L_{i})$.
We call $L_{i}$ the symmetric logarithmic derivative (SLD) associated
with $\frac{\partial}{\partial\theta^{i}}\rho_{\theta_{0}}$. In general,
elements of a local model $\mathcal{S}$ need not be positive operators.
One may also consider models in which absolutely continuous and local
components are mixed.

For a model $\mathcal{S}$, define a subspace $\H_{\S}\subset\mathcal{H}$
by
\[
\H_{\S}:=\Span_{\C}\{X(\mathcal{H})\mid X\in\mathcal{S}\}.
\]
Then every $X\in\mathcal{S}$ may be regarded as an operator on $\H_{\S}$.
Since for every $B\in\mathcal{B}(\mathcal{H})$ and $X\in\mathcal{S}$,
\[
\Tr XB=\Tr X(PBP^{*})
\]
where $P:\mathcal{H}\to\H_{\S}$ denotes the orthogonal projection,
there is no loss in restricting all operators to $\mathcal{B}(\H_{\S})$.
If one does not restrict to $\H_{\S}$, pathological examples may
occur; see Appendix \ref{sec:ill_ex}. This is true not only in the
present setting but also in the conventional theory and even in classical
statistics. In the conventional approach, each $X\in\mathcal{S}$
is assumed to satisfy $X\ge0$, and one chooses $\sigma\in\Span_{\R}\mathcal{S}$
such that for every $X\in\mathcal{S}$ there exists $t>0$ with $t\sigma\ge X$.
The Hilbert space is then restricted to $(\ker\sigma)^{\perp}\subset\mathcal{H}$.
In the present work, no such operator $\sigma$ is needed. 

For a model $\S$, a real unital positive map $\alpha:\mathcal{B}(\H_{\S})\to\mathcal{B}(\H_{\S})$
is called sufficient if, for every $B\in\mathcal{B}(\H_{\S})$,
\begin{equation}
\re\Tr XB=\re\Tr X\alpha(B)\qquad\forall X\in\S.\label{eq:sufficient}
\end{equation}
A real $*$-subalgebra $\mathcal{A}_{\mathbb{R}}\subset\mathcal{B}(\H_{\S})$
is called a sufficient real $*$-subalgebra if there exists a sufficient
real unital positive map $\alpha$ such that $\alpha(\B(\H_{\S}))\subset\A_{\R}$.
A real Jordan algebra $\mathcal{A}_{J}\subset\mathcal{B}_{h}(\H_{\S})$
is called sufficient if there exists a sufficient real unital positive
map $\alpha:\mathcal{B}(\H_{\S})\to\mathcal{B}(\H_{\S})$ such that
$\alpha\bigl(\mathcal{B}_{h}(\H_{\S})\bigr)\subset\mathcal{A}_{J}$.
We equip $\mathcal{B}(\H_{\S})$ with the real Hilbert--Schmidt inner
product
\[
\left\langle A,B\right\rangle :=\re\Tr A^{*}B\qquad A,B\in\mathcal{B}(\H_{\S}).
\]

We prove the following theorem.
\begin{thm}
\label{thm:pos_to_CE}Let $\mathcal{S}$ be a model, and let $\alpha:\mathcal{B}(\H_{\S})\to\mathcal{B}(\H_{\S})$
be a sufficient real unital positive map. Let $\mathcal{A}_{J}\subset\mathcal{B}_{h}(\H_{\S})$
be the set of fixed points of $\alpha$ in $\mathcal{B}_{h}(\H_{\S})$.
Then $\mathcal{A}_{J}$ is a real Jordan algebra. Let $\mathcal{A}_{\mathbb{R}}$
and $\mathcal{A}_{\mathbb{C}}$ be the real and complex $*$-subalgebras
generated by $\mathcal{A}_{J}$, respectively. Then there exist a
sufficient real Jordan conditional expectation $\beta_{J}:\mathcal{B}_{h}(\H_{\S})\to\mathcal{A}_{J}$,
a sufficient real conditional expectation $\beta_{\mathbb{R}}:\mathcal{B}(\H_{\S})\to\mathcal{A}_{\mathbb{R}}$,
and a sufficient complex conditional expectation $\beta_{\mathbb{C}}:\mathcal{B}(\H_{\S})\to\mathcal{A}_{\mathbb{C}}$,
and these maps can be chosen to be faithful.

Moreover, there exists $\omega\in\mathcal{B}(\H_{\S})_{++}$ such
that $\beta_{J}$, $\beta_{\mathbb{R}}$, and $\beta_{\mathbb{C}}$
are sufficient for $\mathcal{S}\cup\{\omega\}$, and satisfy
\[
[\omega,A]=0\qquad\forall A\in\A_{\C},
\]
\[
[\omega,X]=0\quad\forall X\in\S,
\]
\[
\Pi_{J}(\omega)=\Pi_{\R}(\omega)=\Pi_{\C}(\omega)=I,
\]
and
\begin{equation}
\Pi_{J}(X)=\Pi_{\R}(X)=\Pi_{\C}(X)=X\omega^{-1}\quad\forall X\in\S.\label{eq:sigma-omega}
\end{equation}
Here $\Pi_{J}$, $\Pi_{\mathbb{R}}$, and $\Pi_{\mathbb{C}}$ denote
the orthogonal projections from $\mathcal{B}(\H_{\S})$ onto $\mathcal{A}_{J}$,
$\mathcal{A}_{\mathbb{R}}$, and $\mathcal{A}_{\mathbb{C}}$, respectively,
with respect to the real Hilbert--Schmidt inner product.
\end{thm}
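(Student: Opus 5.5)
Our plan is to pass from $\alpha$ to an idempotent, faithful, trace-compatible map by ergodic averaging and a faithful invariant state, and then to read off all claims from the resulting conditional-expectation structure.

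\textbf{Step 1 (ergodic averaging).} Consider the Cesàro means $\alpha_N=\frac1N\sum_{k=0}^{N-1}\alpha^k$. Each power $\alpha^k$ is a real unital positive map and is sufficient for $\S$, since (\ref{eq:sufficient}) iterates. In finite dimensions positive unital maps are contractions in operator norm, so $\{\alpha_N\}$ is bounded and has a limit point $E$. Being a limit of $\alpha_N$, $E$ is a real unital positive sufficient idempotent with $E\circ\alpha=\alpha\circ E=E$. Its range on $\B_h(\H_\S)$ is exactly the fixed-point set $\A_J$ of $\alpha$.

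\textbf{Step 2 (faithful invariant density and the Jordan structure).} Sufficiency says that the real-dual map $\alpha^\dagger$ with respect to $\langle\cdot,\cdot\rangle$ fixes every $X\in\S$, and so does $E^\dagger$. The density we want is
\[
\omega_0:=E^\dagger\Bigl(\sum_{X\in\S'}X^2\Bigr)
\]
for a finite spanning subset $\S'$, or alternatively $E^\dagger$ of a strictly positive operator built from $\S$ and $I$. The key point is to show that some $E^\dagger$-invariant positive $\omega$ is strictly positive on $\H_\S$. Since $E^\dagger$ fixes each $X$, the support of any invariant density produced from $\sum_X |X|$-type objects should contain every $X(\H)$, hence all of $\H_\S$. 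We expect this to be the main obstacle. Because $X$ need not be positive, we would first take $\omega=E^\dagger(I)$ and show $\supp\omega\supseteq X(\H)$ by a Kadison--Schwarz/support argument: if $P$ is the projection onto $\ker\omega$, then $E(P)$ has zero expectation in $\omega$, and positivity should force $\re\Tr XB=0$ for $B$ supported on $P$. Once a faithful invariant $\omega$ is available, the standard argument (Effros--Størmer type) applies. Indeed $E$ is a faithful positive unital idempotent, because $\Tr\omega E(B)=\Tr\omega B$. Hence its range $\A_J$ is a Jordan algebra under $\circ$ and $E$ is a Jordan conditional expectation: the Kadison inequality $E(A^2)\ge E(A)^2$, together with $\Tr\omega(E(A^2)-E(A)^2)=0$ for $A\in\A_J$, gives $A^2\in\A_J$. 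This yields $\beta_J:=E|_{\B_h}$. One should also replace $\omega$ by its $\A$-conditional part so that $[\omega,\A_\C]=0$; we would instead take $\beta_J$ to be the $\omega$-weighted projection, which is allowed by Section \ref{sec:jordan_pos}.

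\textbf{Step 3 (extensions to $\A_\R$ and $\A_\C$).} Using the structure theorems of Sections \ref{sec:real_algebra} and \ref{sec:jordan_pos}, we decompose $\A_J$, $\A_\R$, and $\A_\C$ into blocks. On each block we write $\omega=\bigoplus_i I\otimes\tau_i$ in the commutant, which gives $[\omega,\A_\C]=0$. We then define $\beta_\C(B)$ as the partial trace against $\tau_i$ followed by the appropriate projection, and similarly $\beta_\R$ and $\beta_J$ as the corresponding real and Jordan reductions. Faithfulness comes from $\tau_i>0$. These maps preserve $\Tr\omega(\cdot)$, so their duals fix $\omega$, which gives sufficiency for $\omega$.

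\textbf{Step 4 (sufficiency and the identities).} Sufficiency for $\S$ follows from showing that each $X$ satisfies $X=\omega^{1/2}\Pi(X\omega^{-1})\omega^{1/2}$ with $X\omega^{-1}\in\A_J$. We get this because $E^\dagger(X)=X$ forces $X\omega^{-1}$ to be the Radon--Nikodym derivative $dX/d\omega$ of the functional on $\A_J$, which lies in $\A_J$ and commutes with $\omega$. Hence $[\omega,X]=0$, $X\omega^{-1}\in\A_J\subset\A_\R\subset\A_\C$, and the projections give (\ref{eq:sigma-omega}). Finally, $\Pi(\omega)=I$ follows from $\langle\omega,A\rangle=\langle I,\beta(A)\omega\rangle$ applied to a trace-normalized choice of $\tau_i$.
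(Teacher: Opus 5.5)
Your skeleton is the paper's: average $\alpha$ to a sufficient idempotent $E$, define $\omega$ by $\Tr\omega B=\Tr E(B)$, use the Schwarz inequality and faithfulness to get a Jordan conditional expectation onto $\mathcal{A}_J$, and set $\beta=\Pi(\omega^{1/2}\,\cdot\,\omega^{1/2})$. However, the two steps that carry the content are not proved.

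The first is faithfulness, which you flag yourself. From $\Tr E(P)=\Tr\omega P=0$ you do get $E(P)=0$. But killing only operators with $B=PBP$ yields $PXP=0$, and this forces $PX=0$ only when $X\ge0$. For the self-adjoint models of this paper you must also kill the off-diagonal blocks. The paper's device (Lemma~\ref{lem:faithful}) goes as follows. First, $E(I-P)=I$. Next, for Hermitian $B$ with $(I-P)B(I-P)=0$ and any $\varepsilon>0$, one has $\varepsilon(I-P)+\lambda P\pm B\ge0$ for large $\lambda$. Applying $E$ gives $\pm E(B)\le\varepsilon I$, so $E(B)=0$. Sufficiency then gives $\Tr XB=0$ for all such $B$, i.e.\ $PX=0$ for every $X\in\mathcal{S}$, whence $P=0$ by the definition of $\mathcal{H}_{\mathcal{S}}$. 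Also, run the averaging on $\mathcal{B}_h(\mathcal{H}_{\mathcal{S}})$ only. Positivity does not constrain a real-linear map on $\mathrm{i}\mathcal{B}_h$: for instance $H+\mathrm{i}K\mapsto\alpha(H)+2\mathrm{i}K$ ($H,K$ Hermitian) is still unital, positive and sufficient. So your Ces\`aro means need not be bounded on $\mathcal{B}(\mathcal{H}_{\mathcal{S}})$.

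The second and more serious gap is $[\omega,\mathcal{A}_J]=0$. All of Steps 3--4 rest on it, and you never establish it. Writing $\omega=\bigoplus_i I\otimes\tau_i$ ``in the commutant'' presupposes it. Replacing $\omega$ by some other element of the commutant would destroy the invariance $\Tr\omega E(B)=\Tr\omega B$ that Step 4 uses. Without commutation, your Radon--Nikodym argument only gives $X=\omega\circ D$ with $D\in\mathcal{A}_J$, since $\Tr XB=\Tr XE(B)=\Tr\omega\,(D\circ E(B))=\Tr\omega\,E(D\circ B)=\Tr\omega\,(D\circ B)$. It does not give $X=\omega D$, so neither $[\omega,X]=0$ nor $X\omega^{-1}\in\mathcal{A}_J$ follows. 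The missing idea (Lemma~\ref{lem:omega}(iii)) is to apply the identity $E(ACA)=AE(C)A$ of Theorem~\ref{thm:pos2J} to symmetries $A=2Q-I$, with $Q$ a spectral projection of an element of $\mathcal{A}_J$. Then $\Tr(A\omega A)C=\Tr E(ACA)=\Tr AE(C)A=\Tr E(C)=\Tr\omega C$, so $A\omega A=\omega$. Such symmetries span $\mathcal{A}_J$, so $\omega$ commutes with $\mathcal{A}_J$ and hence with $\mathcal{A}_{\mathbb{C}}$. After that, $\omega\circ D=\omega D$ and your Step 4 goes through.

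Finally, $\Pi(\omega)=I$ cannot come from a ``trace-normalized choice of $\tau_i$'', because the $\tau_i$ are fixed by $\omega$. What you do get is $\Pi_J(\omega)=I$, from $\Tr\omega A=\Tr E(A)=\Tr A$ for $A\in\mathcal{A}_J$. Upgrading this to $\Pi_{\mathbb{R}}$ and $\Pi_{\mathbb{C}}$ needs the self-adjoint central elements of $\mathcal{A}_{\mathbb{R}}$ and $\mathcal{A}_{\mathbb{C}}$ to lie in $\mathcal{A}_J$. The paper reads this off from (\ref{eq:jordan_deco}), but it fails when a factor is represented together with its transpose. For example, work on $\mathbb{C}^n\oplus\mathbb{C}^n$ ($n\ge2$) and take $p,q>0$, $p+q=1$, $p\neq q$. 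The map $B\mapsto C\oplus C^{T}$ with $C=pB_{11}+qB_{22}^{T}$ is unital, positive, idempotent and sufficient for $\mathcal{S}=\{pZ\oplus qZ^{T}\mid Z\in\mathbb{M}_n(\mathbb{C})_h\}$. Here $\mathcal{A}_J=\{M\oplus M^{T}\}$, $\mathcal{A}_{\mathbb{C}}=\mathbb{M}_n(\mathbb{C})\oplus\mathbb{M}_n(\mathbb{C})$ and $\omega=2pI\oplus2qI$. Hence $\Pi_{\mathbb{C}}(\omega)=\omega\neq I$, and $\Pi_{\mathbb{C}}(\omega^{1/2}\,\cdot\,\omega^{1/2})$ is not even unital.
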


We call $\omega$ a supporting operator. With the real inner product
on $\mathcal{B}(\H_{\S})$ defined by $\left\langle A,B\right\rangle _{\omega}=\re\Tr\omega A^{*}B$
for $A,B\in\mathcal{B}(\H_{\S})$, the maps $\beta_{J}$, $\beta_{\mathbb{R}}$,
and $\beta_{\mathbb{C}}$ become the orthogonal projections with respect
to this inner product. The supporting operator $\omega$ does not
necessarily belong to $\Span_{\R}\S$. As an extreme example, $\mathcal{B}(\H_{\S})$
itself may be a sufficient algebra; in this case $\omega=I$, and
in general $I\notin\Span_{\R}\S$.

We next introduce the likelihood-ratio set, which will be used to
characterize minimal sufficient $*$-subalgebras. For an absolutely
continuous model $\mathcal{S}$ with reference state $\rho\in\mathcal{S}$,
the set $\mathcal{R}=\{R_{X}\mid X\in\mathcal{S}\}$ of square-root
likelihood ratios is called the likelihood-ratio set. Likewise, for
a local model $\mathcal{S}$ with reference state $\rho\in\mathcal{S}$,
the set $\mathcal{R}=\{L_{X}\mid X\in\mathcal{S}\}$ is called the
likelihood-ratio set. When $\rho$ is not strictly positive, the operators
$R_{X}$ and $L_{X}$ are in general not uniquely determined by $X\in\mathcal{S}$.
In such cases, we always choose the one with minimal Hilbert--Schmidt
norm. We are now in a position to characterize minimal sufficient
real and complex $*$-subalgebras in terms of the likelihood-ratio
set together with the $\rho$-invariance condition. 
\begin{thm}
\label{thm:min_suff_alg} Let $\mathcal{S}$ be a model with reference
state $\rho\in\mathcal{S}$ and likelihood-ratio set $\mathcal{R}$.
Here and below, $\rho^{-1}$ denotes the generalized inverse of $\rho$.
Let $\mathcal{A}\subset\mathcal{B}(\H_{\S})$ be a real $*$-subalgebra
satisfying
\begin{description}
\item [{(i)}] $\rho A\rho^{-1}\in\A$ for every $A\in\mathcal{A}$,
\item [{(ii)}] $\mathcal{R}\subset\A$ . 
\end{description}
Then $\mathcal{A}$ is a sufficient real $*$-subalgebra admitting
a sufficient real conditional expectation. If $\mathcal{A}$ is moreover
a complex $*$-subalgebra, then it admits a sufficient complex conditional
expectation. 

Furthermore, the smallest real $*$-subalgebra satisfying (i) and
(ii) is the minimal sufficient real $*$-subalgebra, and the smallest
complex $*$-subalgebra satisfying (i) and (ii) is the minimal sufficient
complex $*$-subalgebra.
\end{thm}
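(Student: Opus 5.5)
The proof has two parts. First, an algebra $\mathcal{A}$ satisfying (i) and (ii) is sufficient, via an explicitly constructed conditional expectation. Second, every sufficient real (resp. complex) $*$-subalgebra contains the smallest algebra satisfying (i) and (ii); this uses Theorem \ref{thm:pos_to_CE}.

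\emph{Sufficiency.} Let $\mathcal{A}$ satisfy (i) and (ii). Decompose $\mathcal{A}$ into irreducible real components as in Section \ref{sec:real_algebra}. Condition (i) says that $A\mapsto\rho A\rho^{-1}$ leaves $\mathcal{A}$ invariant. By the degenerate $\rho$-modular invariance results of Section \ref{sec:modular}, this should be equivalent to the existence of a real conditional expectation $E:\mathcal{B}(\H_{\S})\to\mathcal{A}$ that preserves $\rho$, i.e. $\re\Tr\rho B=\re\Tr\rho E(B)$, and more strongly satisfies $\re\Tr\rho AB=\re\Tr\rho A E(B)$ for $A\in\mathcal{A}$. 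I expect $E$ to be realized as the orthogonal projection onto $\mathcal{A}$ for the inner product $\langle A,B\rangle_\rho=\re\Tr\rho A^*B$, corrected on $\ker\rho$ so that it stays unital and faithful on $\H_{\S}$. This degenerate-support correction is the step I expect to be the main obstacle. Near $\ker\rho$ the modular group is not defined. I would handle it by compressing to $\supp\rho$, applying the faithful case there, and extending by a fixed conditional expectation on the complement compatible with $\mathcal{A}$. This requires that the support projection of $\rho$ lies in (the centre-like part of) $\mathcal{A}$; that follows from (i), since $\rho\rho^{-1}=\rho I\rho^{-1}\in\mathcal{A}$.

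Sufficiency for each $X\in\S$ then reduces to the module property. For absolutely continuous $X=R_X\rho R_X$ with $R_X\in\mathcal{A}$, we get
\[
\re\Tr XE(B)=\re\Tr\rho R_XE(B)R_X=\re\Tr\rho E(R_XBR_X)=\re\Tr\rho R_XBR_X=\re\Tr XB .
\]
For local $X=\tfrac12(L_X\rho+\rho L_X)$, the same computation applies with $\re\Tr\rho L_XE(B)=\re\Tr\rho E(L_XB)$. The minimal-norm choice of $R_X$ and $L_X$ makes them supported compatibly with $\supp\rho$, so these identities survive the degenerate correction. If $\mathcal{A}$ is complex, the same projection is complex-linear, giving a complex conditional expectation.

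\emph{Minimality.} Let $\mathcal{A}'$ be any sufficient real $*$-subalgebra, with sufficient map $\alpha$ landing in it. By Theorem \ref{thm:pos_to_CE}, the fixed-point Jordan algebra $\mathcal{A}_J$ of $\alpha$ generates a real $*$-algebra $\mathcal{A}_{\R}\subset\mathcal{A}'$. There is also a supporting operator $\omega>0$ commuting with $\mathcal{A}_{\C}$ and with $\S$, and $X\omega^{-1}=\Pi_{\R}(X)\in\mathcal{A}_{\R}$ for all $X\in\S$. Since $\rho\in\S$, we get $\rho\omega^{-1}\in\mathcal{A}_{\R}$, and $\omega$ commutes with $\mathcal{A}_{\R}$. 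Hence $\rho A\rho^{-1}=(\rho\omega^{-1})A(\rho\omega^{-1})^{-1}$ for $A\in\mathcal{A}_{\R}$, using generalized inverses in the finite-dimensional algebra, and so condition (i) holds for $\mathcal{A}_{\R}$.

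For (ii), write $X\omega^{-1}=R_X\rho R_X\omega^{-1}$. I then need to solve for $R_X$ inside $\mathcal{A}_{\R}$. Since $T:=\rho\omega^{-1}$ and $Y:=X\omega^{-1}$ are positive elements of $\mathcal{A}_{\R}$ and $\omega$ is central relative to them, $R_X$ is the minimal-norm positive solution of $R T R=Y$. That solution is $T^{-1/2}(T^{1/2}YT^{1/2})^{1/2}T^{-1/2}$, a functional-calculus expression in $\mathcal{A}_{\R}$. In the local case, $L_X$ is the minimal-norm solution of the Lyapunov equation $\tfrac12(LT+TL)=Y$, which also lies in $\mathcal{A}_{\R}$ because the solution operator preserves the algebra. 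Hence $\mathcal{R}\subset\mathcal{A}_{\R}\subset\mathcal{A}'$.

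So every sufficient algebra contains an algebra satisfying (i) and (ii), and therefore contains the smallest one. Combined with the first part, the smallest such algebra is the minimal sufficient one. The complex case is identical with $\mathcal{A}_{\C}$ in place of $\mathcal{A}_{\R}$.
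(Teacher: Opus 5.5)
Your overall structure is right, but two steps fail as written, and both fail in the degenerate regime the theorem is meant to cover.

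\textbf{The explicit formula for $R_X$.} In the minimality half, $T^{-1/2}(T^{1/2}YT^{1/2})^{1/2}T^{-1/2}$ does not solve $RTR=Y$ in general. A direct computation gives $RTR=sYs$ with $s=\supp T=\supp\rho$, so it works only when $\supp X\le\supp\rho$. The paper explicitly says its notion of absolute continuity is not the support-theoretic one. Take $\rho=|0\rangle\langle0|$, $X=|+\rangle\langle+|$ and $\omega=I$ (e.g.\ $\alpha=\mathrm{id}$). Your formula returns $2^{-1/2}|0\rangle\langle0|$, whereas $R_X=\sqrt{2}\,|+\rangle\langle+|$. The correct expression is $Y^{1/2}(Y^{1/2}TY^{1/2})^{-1/2}Y^{1/2}$. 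It equals $X^{1/2}(X^{1/2}\rho X^{1/2})^{-1/2}X^{1/2}=R_X$ because $\omega$ commutes with $X$ and $\rho$; this is Theorem \ref{thm:ratio_abs}, and it lies in $\mathcal{A}_{\R}$ by functional calculus. The same example refutes your remark that the minimal-norm $R_X$ is ``supported compatibly with $\supp\rho$''. That remark is also unnecessary: the sufficiency computation only uses $R_X,L_X\in\mathcal{A}$ and the module property of a $\rho$-preserving conditional expectation.

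\textbf{The construction of $E$ for degenerate $\rho$.} Your sketch (compress to $\supp\rho$, then glue a conditional expectation on the complement) does not work. Condition (i) gives $s\in\mathcal{A}$, not that $s$ is central. For $\mathcal{A}=\mathbb{M}_2(\R)$ and $\rho=|0\rangle\langle0|$, which satisfy (i), the element $|0\rangle\langle1|+|1\rangle\langle0|\in\mathcal{A}$ is off-diagonal with respect to $s$. Any map built from $sBs$ and $\kappa B\kappa$, with $\kappa=I-s$, annihilates it.

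The missing idea is the paper's auxiliary faithful operator $\sigma=\rho+\delta+\tilde\kappa$. Here $\delta=\sum_kF_k\rho F_k^*$ over a basis $\{F_k\}$ of $\kappa\mathcal{A}s$, and $\tilde\kappa$ is the projection onto the remaining complement. This $\sigma$ still satisfies $\sigma\mathcal{A}\sigma^{-1}\subset\mathcal{A}$, so the faithful construction applies to it. Then $\re\Tr\rho B=\re\Tr\rho E(B)$ follows from $\rho=\sigma s$ and $s\in\mathcal{A}$. Since this is exactly Theorem \ref{thm:modular} (iii)$\Rightarrow$(i), you should simply cite it, as the paper does.

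\textbf{What is right, and how it compares.} With these repairs the argument is sound, and your minimality half is tighter than the paper's. The paper says it suffices that every sufficient real $*$-subalgebra satisfies (i) and (ii). Taken literally this fails: any real $*$-subalgebra containing a sufficient one is itself sufficient, but need not be $\rho$-modular invariant. Your passage to the subalgebra $\mathcal{A}_{\R}\subset\mathcal{A}'$ generated by the fixed points of $\alpha$ is the correct formulation. Your direct check of (i) through $\rho A\rho^{-1}=(\rho\omega^{-1})A(\rho\omega^{-1})^{-1}$, using $[\omega,\rho]=0$ and $[\omega,\mathcal{A}_{\R}]=0$, also avoids appealing to Theorem \ref{thm:modular}, whose hypothesis $\supp\rho\in\mathcal{A}_{\R}$ would otherwise need checking. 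The SLD step is fine: $J_T$ is Hilbert--Schmidt self-adjoint and leaves $\mathcal{A}_{\R}$ invariant, so its generalized inverse does too.
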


Note that $\rho$ need not be strictly positive. In the work \cite{qLAN3},
which established the asymptotic representation theorem in q-LAN,
the $\mathcal{D}$-invariant extension of square-root likelihood ratios
played a crucial role. Theorem \ref{thm:min_suff_alg} is closely
related to that theory. Indeed, condition (i) has a natural interpretation
in terms of $\mathcal{D_{\rho}}$-invariant extension; see Appendix
\ref{sec:dinv}. We next show the following result concerning the
minimal sufficient real Jordan algebra.
\begin{thm}
\label{thm:min_suff_J}In the setting of Theorem \ref{thm:min_suff_alg},
let $\rho_{0}$ be the orthogonal projection of the reference state
$\rho$ onto a real $*$-subalgebra $\mathcal{A}$ satisfying conditions
(i) and (ii), with respect to the real Hilbert--Schmidt inner product.
Then the real Jordan algebra $\mathcal{A}_{J}$ generated by $\mathcal{R}\cup\{\rho_{0}\}$
is a sufficient real Jordan algebra. Moreover, if $\mathcal{A}$ is
the minimal sufficient real $*$-subalgebra, then $\mathcal{A}_{J}$
is the minimal sufficient real Jordan algebra. The same conclusion
holds if $\mathcal{A}$ is the minimal sufficient complex $*$-subalgebra.
\end{thm}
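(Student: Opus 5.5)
Our plan is to combine Theorem \ref{thm:min_suff_alg} with Theorem \ref{thm:pos_to_CE}, showing sufficiency first and minimality second.

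\textbf{Sufficiency.} By Theorem \ref{thm:min_suff_alg}, $\mathcal{A}$ admits a sufficient real conditional expectation $\beta_{\mathbb{R}}:\mathcal{B}(\H_{\S})\to\mathcal{A}$. We will first show that $\rho_0:=\Pi_{\mathcal{A}}(\rho)$ can serve as a (generally degenerate) reference operator inside $\mathcal{A}$ for the projected model. Concretely, $\Pi_{\mathcal{A}}$ restricted to $\mathcal{A}$-bimodule-type expressions should satisfy $\Pi_{\mathcal{A}}(R\rho R)=R\rho_0 R$ for $R\in\mathcal{R}\subset\mathcal{A}$ and $\Pi_{\mathcal{A}}(L\circ\rho)=L\circ\rho_0$. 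Indeed, for every $A\in\mathcal{A}$,
\[
\re\Tr (R\rho R)A=\re\Tr\rho(RAR)=\re\Tr\rho_0 (RAR),
\]
and the Jordan case is analogous. Let $\mathcal{A}_J$ be the Jordan algebra generated by $\mathcal{R}\cup\{\rho_0\}$. Then $\Pi_{\mathcal{A}}(X)\in\mathcal{A}_J$ for all $X\in\S$, since Jordan algebras are closed under $R\rho_0R=2R\circ(R\circ\rho_0)-R^2\circ\rho_0$.

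Let $\Pi_J$ be the Hilbert--Schmidt projection onto $\mathcal{A}_J$. We then define
\[
\alpha=\Pi_J\circ\beta_{\mathbb{R}}
\]
on the Hermitian part, or better, construct a faithful positive weight as in Theorem \ref{thm:pos_to_CE}. Set $\omega_0$ to be a strictly positive element commuting with $\mathcal{A}_J$, for instance the image of $I$ adjusted by $\rho_0$, and take the $\omega_0$-weighted Jordan conditional expectation onto $\mathcal{A}_J$. This expectation is positive, and it is unital because $I\in\mathcal{A}_J$. Sufficiency then reduces to checking that $\re\Tr X B=\re\Tr X\alpha(B)$. Writing $X$ in the form $\omega_0\cdot(\text{element of }\mathcal{A}_J)$ after projection, this identity is exactly the defining property of the weighted projection.

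\textbf{Main obstacle.} The hardest step is producing a positive unital map onto $\mathcal{A}_J$, because a plain HS projection onto a Jordan subalgebra need not be positive. We would first try to use the structure theory of Section \ref{sec:jordan_pos}: decompose $\mathcal{A}_J$ into factors, and use the center and the existence of a Jordan conditional expectation with respect to a faithful trace-like state on $\mathcal{A}_J$. Sufficiency then requires $\Pi_{\mathcal{A}}(X)$ to have the form $Y\circ\omega$ with $Y\in\mathcal{A}_J$ and $\omega$ central-like. We would take $\omega$ related to $\rho_0$'s invariance under condition (i), namely that $\rho A\rho^{-1}\in\mathcal{A}$ forces $\rho_0$ to commute appropriately.

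\textbf{Minimality.} Let $\mathcal{B}_J$ be any sufficient real Jordan algebra, and apply Theorem \ref{thm:pos_to_CE} to a sufficient map into it, obtaining a fixed-point Jordan algebra $\mathcal{A}'_J\subset\mathcal{B}_J$ with generated real $*$-algebra $\mathcal{A}'_{\mathbb{R}}$. By minimality of $\mathcal{A}$ we have $\mathcal{A}\subset\mathcal{A}'_{\mathbb{R}}$. We will then show $\mathcal{R}\subset\mathcal{A}'_J$ and $\rho_0\in\mathcal{A}'_J$ using \eqref{eq:sigma-omega}. First, $\Pi_J(\rho)=\rho\omega^{-1}$ with $\omega$ commuting with $\mathcal{A}'_{\mathbb{C}}$ and $\S$. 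Second, $\Pi_J(X)=X\omega^{-1}$, and we express $R_X$ via $X\omega^{-1}=R_X(\rho\omega^{-1})R_X$. The minimal-norm choice of $R_X$ should force $R_X$ to lie in the Jordan algebra generated by $X\omega^{-1}$ and $\rho\omega^{-1}$, which in turn lies in $\mathcal{A}'_J$. For the SLD case we would solve a Lyapunov equation inside $\mathcal{A}'_J$. Finally, $\rho_0=\Pi_{\mathcal{A}}(\rho)$ should be obtained from $\rho\omega^{-1}$ by a central rescaling lying in $\mathcal{A}'_J$. The complex case follows by the same argument.
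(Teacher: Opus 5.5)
\textbf{Sufficiency.} You write down the paper's map $\Pi_J\circ\beta_{\mathbb{R}}$, and you also prove the identity $\Pi_{\mathcal{A}}(X)\in\mathcal{A}_J$. You then set the map aside because a Hilbert--Schmidt projection onto a Jordan subalgebra supposedly ``need not be positive''. That is false; see Lemma~\ref{lem:proj-J}. The spectral projections $P_i$ of $\Pi_J(B)$ lie in $\mathcal{A}_J$, so $\lambda_i\Tr P_i=\Tr P_iB\ge 0$. Your replacement, the ``$\omega_0$-weighted Jordan conditional expectation'', is never defined, and neither its positivity nor its sufficiency is checked. As written, this half is unproved.

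It closes at once with what you already have. Since $\mathcal{A}_J\subset\mathcal{A}$ and $\Pi_{\mathcal{A}}(X)\in\mathcal{A}_J$, you get $\Pi_J(X)=\Pi_{\mathcal{A}}(X)$. Hence for $B\in\mathcal{B}_h(\mathcal{H}_{\mathcal{S}})$ and $X\in\mathcal{S}$,
\[
\re\Tr X\,\Pi_J(\beta_{\mathbb{R}}(B))=\Tr\Pi_J(X)\,\beta_{\mathbb{R}}(B)=\Tr\Pi_{\mathcal{A}}(X)\,\beta_{\mathbb{R}}(B)=\Tr X\,\beta_{\mathbb{R}}(B)=\Tr XB .
\]
This is slightly more direct than the paper. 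The paper checks only $\rho$, via $\Pi_J(\rho)=\rho_0$, and invokes Theorem~\ref{thm:pos2J} to obtain the Jordan conditional expectation property needed for the other $X$.

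\textbf{Minimality.} The inclusion $\mathcal{R}\subset\mathcal{A}'_J$ is Theorems~\ref{thm:ratio_abs} and~\ref{thm:ratio_local}, so that part is fine. The gap is $\rho_0\in\mathcal{A}'_J$.

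You only know $\mathcal{A}\subseteq\mathcal{A}'_{\mathbb{R}}$, so \eqref{eq:sigma-omega} gives $\rho\omega^{-1}=\Pi_{\mathcal{A}'_{\mathbb{R}}}(\rho)$, not $\Pi_{\mathcal{A}}(\rho)$. Modular invariance yields only $\rho\omega^{-1}=\rho_0h$, with $h\ge0$ commuting with $\mathcal{A}$ on $\mathrm{supp}\,\rho$. Such an $h$ is generally not central in $\mathcal{A}'_{\mathbb{R}}$. Already for $\mathcal{B}_J=\mathcal{B}_h(\mathcal{H}_{\mathcal{S}})$ one has $\omega=I$, and $h=\rho_0^{-1}\rho$ is non-scalar as soon as $\rho\notin\mathcal{A}$. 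Moreover, proving $h\in\mathcal{A}'_J$ is the same as proving $\rho_0\in\mathcal{A}'_J$. So the ``central rescaling'' step presupposes the conclusion.

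The paper's device is to work with $\mathcal{A}'_J\subseteq\mathcal{A}$. Then the real $*$-subalgebra generated by $\mathcal{A}'_J$ is a sufficient subalgebra of $\mathcal{A}$, hence equal to $\mathcal{A}$ by minimality. The two projections therefore coincide, and $\rho_0=\Pi_{\mathcal{A}}(\rho)=\rho\omega^{-1}\in\mathcal{A}'_J$ with no correction needed.

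The paper asserts this reduction without comment. You can justify it as follows. Apply Theorem~\ref{thm:pos_to_CE} to $\gamma=\frac12(\beta_{\mathbb{R}}+\beta')$, where $\beta'$ is the sufficient map into $\mathcal{B}_J$.
\begin{itemize}
\item The supporting operator of $\gamma$ gives a faithful $\gamma$-invariant functional.
\item For a fixed point $A$, the Schwarz inequality then forces $\gamma(A^2)=A^2$.
\item Write $\gamma(A^2)-A^2$ as the two Schwarz defects plus $\frac14(\beta_{\mathbb{R}}(A)-\beta'(A))^2$. All three terms are nonnegative, so $\beta_{\mathbb{R}}(A)=\beta'(A)=A$.
\end{itemize}
Hence the fixed-point algebra of $\gamma$ is $\mathcal{A}'_J\cap\mathcal{A}\subseteq\mathcal{B}_J$, a sufficient Jordan algebra inside $\mathcal{A}$.
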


This theorem indicates that self-adjoint likelihood-type objects,
rather than Radon–Nikodym cocycles, should be regarded as the appropriate
noncommutative analogue of classical likelihood ratios.

We next establish a real Jordan algebra version of the Koashi--Imoto
decomposition. Here, as throughout the paper, the underlying model
is not assumed to consist of states, but may be a family of general
self-adjoint operators. In general, it is known that every finite-dimensional
real Jordan algebra is unitarily equivalent to a direct sum of the
form \cite{jordan1,jordan2}
\begin{equation}
\A_{J}\cong\left\{ \bigoplus_{i}\mathbb{M}_{n_{i}}(\C)_{h}\otimes I_{m_{i}}\right\} \oplus\left\{ \bigoplus_{j}\mathbb{M}_{n_{j}}(\R)_{h}\otimes I_{m_{j}}\right\} \oplus\left\{ \bigoplus_{k}\mathbb{M}_{n_{k}}(\mathbb{H})_{h}\otimes I_{m_{k}}\right\} \oplus\left\{ \bigoplus_{l}\Gamma_{n_{l}}\otimes I_{m_{l}}\right\} .\label{eq:jordan_deco}
\end{equation}
Here, $\mathbb{M}_{n}(\mathbb{C})$ denotes the algebra of $n\times n$
complex matrices, $\mathbb{M}_{n}(\mathbb{R})$ the algebra of $n\times n$
real matrices, and $\mathbb{M}_{n}(\mathbb{H})$ the algebra of $n\times n$
quaternionic matrices. The algebra $\mathbb{M}_{n}(\mathbb{H})$ can
be represented by $(2n)\times(2n)$ complex matrices (See Section
\ref{sec:real_algebra}). Moreover, $I_{m}$ denotes the $m\times m$
identity matrix. The spaces $\mathbb{M}_{n}(\mathbb{C})_{h}$, $\mathbb{M}_{n}(\mathbb{R})_{h}$,
and $\mathbb{M}_{n}(\mathbb{H})_{h}$ denote the Hermitian parts of
$\mathbb{M}_{n}(\mathbb{C})$, $\mathbb{M}_{n}(\mathbb{R})$, and
$\mathbb{M}_{n}(\mathbb{H})$, respectively. The factor $\Gamma_{n}$
denotes a spin factor. Namely, there exist Hermitian operators $\{\gamma_{i}\}^{n}_{i=1}$
satisfying $\gamma_{i}\circ\gamma_{j}=\delta_{ij}I$, and $\Gamma_{n}=\Span_{\mathbb{R}}\bigl(\{\gamma_{i}\}^{n}_{i=1}\cup\{I\}\bigr).$
Accordingly, the complex $*$-subalgebra case involves only $\mathbb{M}_{n}(\mathbb{C})$,
while the real $*$-subalgebra case also includes $\mathbb{M}_{n}(\mathbb{R})$
and $\mathbb{M}_{n}(\mathbb{H})$. In the real Jordan algebra setting,
spin factors $\Gamma_{n}$ appear as additional building blocks. See
Section \ref{sec:real_algebra} and Section \ref{sec:jordan_pos}
for details. For the sufficient real Jordan algebra $\mathcal{A}_{J}$
obtained in Theorem \ref{thm:min_suff_J}, the following holds.
\begin{thm}
\label{thm:KI_deco}Suppose that $\mathcal{A}_{J}$ is a sufficient
real Jordan algebra admitting a real Jordan conditional expectation,
and that $\mathcal{A}_{J}$ admits a decomposition of the form (\ref{eq:jordan_deco}).
Then every $X\in\mathcal{S}$ can be decomposed as
\[
X=\left\{ \bigoplus_{i}X^{(\C)}_{i}\otimes P^{(\C)}_{i}\right\} \oplus\left\{ \bigoplus_{j}X^{(\R)}_{j}\otimes P^{(\R)}_{j}\right\} \oplus\left\{ \bigoplus_{k}X^{(\mathbb{H})}_{k}\otimes P^{(\mathbb{H})}_{k}\right\} \oplus\left\{ \bigoplus_{l}X^{(\Gamma)}_{l}\otimes P^{(\Gamma)}_{l}\right\} ,
\]
where
\[
X^{(\C)}_{i}\in\mathbb{M}_{n_{i}}(\C)_{h},\qquad X^{(\R)}_{j}\in\mathbb{M}_{n_{j}}(\R)_{h},\qquad X^{(\mathbb{H})}_{k}\in\mathbb{M}_{n_{k}}(\mathbb{H})_{h},\qquad X^{(\Gamma)}_{l}\in\Gamma_{n_{l}},
\]
and
\[
P^{(\C)}_{i}\in\mathbb{M}_{m_{i}}(\R),\qquad P^{(\R)}_{j}\in\mathbb{M}_{m_{j}}(\R),\qquad P^{(\mathbb{H})}_{k}\in\mathbb{M}_{m_{k}}(\R),\qquad P^{(\Gamma)}_{l}\in\mathbb{M}_{m_{l}}(\R).
\]
Moreover, the matrices $P^{(\C)}_{i},P^{(\R)}_{j},P^{(\mathbb{H})}_{k},P^{(\Gamma)}_{l}$
can be taken to be diagonal matrices with strictly positive eigenvalues
and do not depend on $X$, whereas $X^{(\C)}_{i},X^{(\R)}_{j},X^{(\mathbb{H})}_{k},X^{(\Gamma)}_{l}$
depend on $X$.
\end{thm}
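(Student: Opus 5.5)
The plan is to feed the supporting operator of Theorem \ref{thm:pos_to_CE} into the given block decomposition. Let $\beta_J:\mathcal{B}_h(\H_\S)\to\mathcal{A}_J$ be the sufficient real Jordan conditional expectation. I expect to be able to choose it faithful. I then invoke the second half of Theorem \ref{thm:pos_to_CE}, applied with $\alpha=\beta_J$ (extended real-linearly to $\mathcal{B}(\H_\S)$). Since $\beta_J$ is idempotent with image $\mathcal{A}_J$, its fixed-point algebra is $\mathcal{A}_J$ itself. This gives $\omega\in\mathcal{B}(\H_\S)_{++}$ with $[\omega,A]=0$ for all $A\in\mathcal{A}_{\C}$ and $X=\Pi_J(X)\,\omega$ for all $X\in\S$, using (\ref{eq:sigma-omega}) and $[\omega,X]=0$. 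The statement therefore reduces to two tasks: describe $\omega$ in the commutant of $\mathcal{A}_{\C}$, and describe $\Pi_J(X)$ inside $\mathcal{A}_J$.

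\textbf{Step 1 (commutant).} Work in the unitary frame of (\ref{eq:jordan_deco}). For each block type, the complex $*$-algebra generated by the summand is computed on its own block. For $\mathbb{M}_{n}(\C)_h\otimes I_m$ it is $\mathbb{M}_n(\C)\otimes I_m$. For the real, quaternionic and spin factors it is the full complex envelope: $\mathbb{M}_n(\C)$, $\mathbb{M}_{2n}(\C)$, or the Clifford image, tensored with $I_m$. There is one caveat: where the complex envelope splits (e.g.\ some spin factors in even dimension, or $\mathbb{M}_n(\C)_h$ together with its transpose), I first refine the block decomposition so each block carries an irreducible envelope. With this refinement, Schur's lemma shows that $\omega$ commuting with $\mathcal{A}_{\C}$ has the form $\bigoplus I\otimes W$ with $W>0$ on the multiplicity space $\C^m$.

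\textbf{Step 2 (diagonalize $W$).} I want to diagonalize $W$ using a unitary on the multiplicity factor, $I\otimes V$. Such a change leaves $\mathcal{A}_J$ in exactly the form (\ref{eq:jordan_deco}), because $\mathcal{A}_J$ acts trivially on that factor. After this change $W=P$ is diagonal with strictly positive entries, and in particular $P\in\mathbb{M}_m(\R)$. These $P$'s depend only on $\omega$, hence not on $X$.

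\textbf{Step 3 (the first factor).} Since $\Pi_J(X)\in\mathcal{A}_J$, it is a direct sum $\bigoplus X_i\otimes I_{m_i}$ with $X_i$ in the corresponding Hermitian or spin factor. I have to check that the orthogonal projection lands in $\mathcal{A}_J$ blockwise; this holds by definition of $\Pi_J$. Multiplying by $\omega=\bigoplus I\otimes P_i$ then gives $X=\bigoplus X_i\otimes P_i$, which is the claimed form.

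\textbf{Main obstacle.} I expect Step 1 to be the main difficulty, specifically in the spin-factor and quaternionic cases. There the complex algebra generated by the block may act reducibly on the $n$-dimensional part, so $\omega$ need not have the naive form $I\otimes W$. My first attempt would be to use the freedom in the decomposition (\ref{eq:jordan_deco}) to absorb any such splitting into the multiplicity. Failing that, I would avoid Schur's lemma altogether. Instead I would use that $\omega$ commutes with $\mathcal{A}_J$ and that the Jordan conditional expectation is the $\langle\cdot,\cdot\rangle_\omega$-orthogonal projection. I would then show directly that $\mathrm{Tr}_{n}$-compatibility forces $\omega$ to be of the form $I\otimes W$ on each block.
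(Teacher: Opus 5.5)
Your proposal is correct and follows the paper's proof: take the supporting operator $\omega$ from Theorem \ref{thm:pos_to_CE} (applied to the sufficient Jordan conditional expectation, whose fixed-point set is exactly $\mathcal{A}_J$), use $[\omega,\mathcal{A}_J]=0$ to write $\omega=\bigoplus I\otimes C$, diagonalize each $C$ by a unitary on the multiplicity factor, and multiply $X\omega^{-1}=\Pi_J(X)\in\mathcal{A}_J$ by $\omega$. The obstacle you flag in Step 1 does not arise under the hypothesis: each block of (\ref{eq:jordan_deco}) is realized irreducibly and generates the full matrix algebra on its block (the paper notes this for $\Gamma_n$), with trivial blocks such as $\mathbb{M}_1(\mathbb{H})_h$ absorbed into the multiplicity as you suggest, so Schur's lemma applies directly and your second fallback is not needed (it would in fact fail for non-standard embeddings such as $A\mapsto A\oplus A^{T}$, which the hypothesis excludes).
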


This theorem extends the Koashi--Imoto decomposition to the setting
of real Jordan algebras for self-adjoint models. For many statistical
problems concerning a statistical model $\mathcal{S}$, one can work
entirely within the framework of real Jordan algebras. Indeed, in
Ref. \cite{support}, the dimension $\dim\mathcal{A}_{J}$ of a sufficient
real Jordan algebra plays an important role in estimating the size
of sufficient POVMs in quantum estimation problems.

Theorem \ref{thm:pos_to_CE} is proved in Section \ref{sec:pos2ce},
Theorems \ref{thm:min_suff_alg} and \ref{thm:min_suff_J} are proved
in Section \ref{sec:min_suff}, and Theorem \ref{thm:KI_deco} is
proved in Section \ref{sec:KI}. 

\section{Real $*$-Subalgebras\label{sec:real_algebra}}

In this section, we discuss in more detail real $*$-subalgebras $\mathcal{A}_{\mathbb{R}}$
as defined in Definition \ref{def:real_alg}. It is known that, after
a suitable unitary change of basis, $\mathcal{A}_{\mathbb{R}}$ admits
a block decomposition of the form \cite{real_alg1,real_alg2}
\begin{equation}
\A_{\R}\cong\left\{ \bigoplus_{i}\mathbb{M}_{n_{i}}(\C)\otimes I_{m_{i}}\right\} \oplus\left\{ \bigoplus_{j}\mathbb{M}_{n_{j}}(\R)\otimes I_{m_{j}}\right\} \oplus\left\{ \bigoplus_{k}\mathbb{M}_{n_{k}}(\mathbb{H})\otimes I_{m_{k}}\right\} .\label{eq:real_decomp}
\end{equation}
Here, $\mathbb{M}_{n}(\mathbb{C})$ denotes the algebra of $n\times n$
complex matrices, $\mathbb{M}_{n}(\mathbb{R})$ the algebra of $n\times n$
real matrices, and $\mathbb{M}_{n}(\mathbb{H})$ the algebra of $n\times n$
quaternionic matrices. The algebra $\mathbb{M}_{n}(\mathbb{H})$ can
be represented by $(2n)\times(2n)$ complex matrices. More precisely,
in the standard complex realization, each quaternionic entry is represented
by a $2\times2$ complex matrix of the form
\[
wI_{2}+i(xX+yY+zZ),\qquad w,x,y,z\in\mathbb{R},
\]
where $X,Y,Z$ are the Pauli matrices. Also, $I_{m}$ denotes the
$m\times m$ identity matrix. Thus, in the complex algebra case only
$\mathbb{M}_{n}(\mathbb{C})$ appears, whereas in the real $*$-subalgebra
setting one also encounters $\mathbb{M}_{n}(\mathbb{R})$ and $\mathbb{M}_{n}(\mathbb{H})$.
One may also regard a complex $*$-subalgebra as a special case of
a real $*$-subalgebra.

We next prove the following lemma and theorem concerning positive
maps associated with real $*$-subalgebras.
\begin{lem}
\label{lem:CE2CP}Let $\mathcal{A}\subset\mathcal{B}(\mathcal{H})$
be a real $*$-subalgebra. Then every real conditional expectation
$\alpha:\mathcal{B}(\mathcal{H})\to\mathcal{A}$ is real completely
positive.
\end{lem}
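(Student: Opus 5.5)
The plan is to reduce real complete positivity to positivity of $\alpha$ on a single positive operator, using the module property on both sides. Put $C_{j}:=B_{j}A_{j}\in\mathcal{B}(\mathcal{H})$. Suppose we also had the right-module identity
\[
\alpha(YA)=\alpha(Y)A\qquad\forall A\in\mathcal{A},\ Y\in\mathcal{B}(\mathcal{H}).
\]
Together with the defining left-module identity this gives
\[
\sum_{i,j}A_{i}^{*}\alpha(B_{i}^{*}B_{j})A_{j}=\sum_{i,j}\alpha(A_{i}^{*}B_{i}^{*}B_{j}A_{j})=\alpha\Bigl(\bigl(\textstyle\sum_{i}C_{i}\bigr)^{*}\bigl(\textstyle\sum_{j}C_{j}\bigr)\Bigr)\ge0,
\]
by positivity of $\alpha$. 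Here the left step uses $A_{i}^{*}\in\mathcal{A}$, by condition (iv) of Definition \ref{def:real_alg}. So the whole lemma reduces to establishing a right-module property, at least in the form needed inside this sum.

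Preliminary facts I would record first:
\begin{itemize}
\item $\alpha(A)=A\alpha(I)=A$ for $A\in\mathcal{A}$, so $\alpha$ is idempotent onto $\mathcal{A}$.
\item $\alpha$ maps Hermitian operators to Hermitian operators, by writing $H=H_{+}-H_{-}$ and using real-linearity.
\end{itemize}

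A warning is needed here: $\alpha$ need not be $*$-preserving on all of $\mathcal{B}(\mathcal{H})$. For $\mathcal{A}=\mathbb{R}I$, the map $\alpha(X)=(\re\Tr\rho X+\lambda(\im X))I$ is a real conditional expectation for any real-linear $\lambda$ on anti-Hermitian operators, and it fails $\alpha(Y^{*})=\alpha(Y)^{*}$. Hence I would not try to derive the right-module property from $*$-preservation. Instead I would derive it only on the arguments that actually occur, namely $Y=A_{i}^{*}B_{i}^{*}B_{j}$ combined symmetrically over the pairs $(i,j)$ and $(j,i)$, so that the total argument is the Hermitian operator $(\sum C_{i})^{*}(\sum C_{j})$.

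The key positivity observation is the following. For $B\ge0$ and $A\in\mathcal{A}$, $A^{*}BA\ge0$, so by the left-module property
\[
\alpha(A^{*}BA)=A^{*}\alpha(BA)
\]
is positive, in particular Hermitian. Hence $A^{*}\alpha(BA)=\alpha(BA)^{*}A$. Polarizing over $A\to A+A'$ and $A\to A+\ii A'$ is not available, since $\ii A'$ need not lie in $\mathcal{A}$. Instead I would polarize over real combinations $A+tA'$ with $t\in\mathbb{R}$ and over $B$ replaced by positive $2\times2$-block combinations, to get
\[
A^{*}\alpha(BA')+A'^{*}\alpha(BA)=\bigl(\alpha(BA)^{*}A'+\alpha(BA')^{*}A\bigr).
\]
Taking $A=I$ yields $\alpha(BA')+A'^{*}\alpha(B)=\alpha(BA')^{*}+\alpha(B)A'$, and the left-module property $A'^{*}\alpha(B)=\alpha(A'^{*}B)$ lets me compare this with its adjoint. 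I expect this to yield $\alpha(BA')=\alpha(B)A'$ modulo an anti-Hermitian ambiguity. That ambiguity cancels in the symmetric sum above, because the sum is Hermitian in total, being a real combination of terms and their adjoints. To handle general, not necessarily positive, $B_{i}^{*}B_{j}$, I would use that $\sum_{i,j}$ runs over an $n\times n$ positive block matrix $[B_{i}^{*}B_{j}]$. I would apply the identity to the positive diagonal and $2\times2$ compressions $(B_{i}+sB_{j})^{*}(B_{i}+sB_{j})$ for real $s$, and extract cross terms by real polarization.

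I expect this last step to be the main obstacle: making the right-module identity hold up to exactly the anti-Hermitian defect that cancels in $\sum_{i,j}A_{i}^{*}\alpha(B_{i}^{*}B_{j})A_{j}$, using only real polarization. If it becomes unwieldy, my fallback is the structure theorem (\ref{eq:real_decomp}). Since $\mathcal{A}$ is the real span of its unitaries, it suffices to treat unitaries $A_{j}=U_{j}\in\mathcal{A}$. For these, the map $Y\mapsto U^{*}\alpha(UYU^{*})U$ is again a real conditional expectation onto $\mathcal{A}$, and the positivity argument above can be run with $A^{*}BA$ replaced by $U^{*}BU$.
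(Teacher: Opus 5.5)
Your closing display is exactly the paper's proof. The paper writes $\sum_{i,j}A_i^*\alpha(B_i^*B_j)A_j=\alpha\bigl(\sum_{i,j}A_i^*B_i^*B_jA_j\bigr)\ge0$, and so uses $\alpha(YA)=\alpha(Y)A$ without comment. You are right that the stated definition gives only left modularity, and that $\alpha$ need not be $*$-preserving. Under the literal definition this step therefore needs an argument, but the one you sketch does not work, so there is a genuine gap.

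From $\alpha(A^*BA)\ge0$ you extract only Hermiticity, and Hermiticity identities obtained by real polarization cannot force right modularity. Your own identity at $A=I$ says that $D:=\alpha(BA')-\alpha(B)A'$ satisfies $D=D^*$. So the undetermined defect is Hermitian, not anti-Hermitian as you expect.

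Here is a concrete obstruction. Take $\mathcal{H}=\mathbb{C}^2$, $\mathcal{A}=\{\mathrm{diag}(r_1,r_2):r_1,r_2\in\mathbb{R}\}$ and matrix units $E_{kl}$, and define
\[
\alpha(X)=(\re X_{11})\,E_{11}+(\re X_{22}+\re X_{21})\,E_{22}.
\]
This map is real-linear, unital, $\mathcal{A}$-valued and left $\mathcal{A}$-modular. It satisfies $A^*\alpha(BA)=\alpha(BA)^*A$, and hence your polarized identity, for all $A\in\mathcal{A}$ and Hermitian $B$. Yet $\alpha(XE_{11})\neq\alpha(X)E_{11}$ whenever $\re X_{21}\neq0$. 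Only positivity used beyond Hermiticity rules it out: it sends $\begin{pmatrix}4&-1\\-1&1/4\end{pmatrix}\ge0$ to $\mathrm{diag}(4,-3/4)$.

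The cancellation step is circular. The $(i,j)$ and $(j,i)$ terms are adjoints of each other only if $\alpha(B_j^*B_i)=\alpha(B_i^*B_j)^*$, which is the $*$-preservation you warned can fail. Real polarization over $(B_i+sB_j)^*(B_i+sB_j)$ only reaches $B_i^*B_j+B_j^*B_i$. The antisymmetric parts, once sandwiched as $A_i^*(\cdot)A_j$, do contribute. For example, with $\mathcal{A}=\mathbb{M}_2(\mathbb{R})$, $A_1=B_1=I$ and $A_2=B_2=W=\begin{pmatrix}0&1\\-1&0\end{pmatrix}$, they contribute $\alpha(W)W+W\alpha(W)=-2I$. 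Skew arguments are exactly where $\alpha$ is tied to its Hermitian part only through left modularity.

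The fallback is circular as well. The map $\alpha_U(Y)=U^*\alpha(UYU^*)U$ is a real conditional expectation, but by left modularity $\alpha_U(Y)=\alpha(YU^*)U$. So $\alpha_U=\alpha$ is precisely the right-module property for $U^*$. Nor can uniqueness help, since real conditional expectations onto $\mathcal{A}$ are not unique (for $\mathcal{A}=\mathbb{R}I$ every state gives one).

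The missing ingredient is a second-order use of positivity, as in the proof of Theorem~\ref{thm:pos2J}. For Hermitian $a\in\mathcal{A}$ one has $\alpha(a)=a$ and $\alpha(a^2)=a^2$. Lemma~\ref{lem:schwartz} applied to $a+tB$ then gives $\alpha(a\circ B)=a\circ\alpha(B)$, and with left modularity $\alpha(Ba)=\alpha(B)a$ for Hermitian $B$. This must still be extended to skew elements of $\mathcal{A}$ and to non-Hermitian arguments. One possible route is to pass to the blocks of (\ref{eq:real_decomp}) via the central projections of $\mathcal{A}$. Alternatively, if a real conditional expectation is understood to be two-sided $\mathcal{A}$-modular, the paper's one-line computation is already the whole proof.
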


\begin{proof}
Since $\alpha$ is real positive and a real conditional expectation,
for every finite collections $\{A_{i}\}^{n}_{i=1}\subset\mathcal{A},\{B_{i}\}^{n}_{i=1}\subset\mathcal{B}(\mathcal{H}),$
we have
\[
\sum_{i,j}A^{*}_{i}\,\alpha(B^{*}_{i}B_{j})\,A_{j}=\alpha\left(\sum_{i,j}A^{*}_{i}B^{*}_{i}B_{j}A_{j}\right)\ge0.
\]
Hence $\alpha$ is real completely positive.
\end{proof}

\begin{thm}
\label{thm:proj}Let $\mathcal{A}\subset\mathcal{B}(\mathcal{H})$
be a real $*$-subalgebra, and let $\Pi:\mathcal{B}(\mathcal{H})\to\mathcal{A}$
be the orthogonal projection with respect to the real Hilbert--Schmidt
inner product. Then the following hold:
\begin{description}
\item [{(i)}] For every $B\in\mathcal{B}(\mathcal{H})$, $\Pi(B^{*})=\Pi(B)^{*}$\textup{.}
\item [{(ii)}] $\Pi$ is a real conditional expectation.
\item [{(iii)}] $\Pi$ is real completely positive. 
\item [{(iv)}] $\Pi$ is faithful. 
\end{description}
If, moreover, $\mathcal{A}$ is a complex $*$-subalgebra, then $\Pi$
is complex completely positive and a complex conditional expectation.
\end{thm}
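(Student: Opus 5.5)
The proof rests on the orthogonality characterization of $\Pi$: $\Pi(B)\in\mathcal{A}$ and $\langle A,B-\Pi(B)\rangle=\re\Tr A^{*}(B-\Pi(B))=0$ for all $A\in\mathcal{A}$.

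For (i), I will observe that the map $B\mapsto B^{*}$ preserves the real inner product, since $\re\Tr A^{*}B=\re\overline{\Tr B^{*}A}=\re\Tr B^{*}A=\langle B,A\rangle=\langle A^{*},B^{*}\rangle$. It also maps $\mathcal{A}$ onto itself by (iv) of Definition \ref{def:real_alg}. Hence it commutes with the orthogonal projection onto $\mathcal{A}$. Concretely, $\Pi(B)^{*}\in\mathcal{A}$ and $\langle A,B^{*}-\Pi(B)^{*}\rangle=\langle A^{*},B-\Pi(B)\rangle=0$, so $\Pi(B^{*})=\Pi(B)^{*}$.

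For the module property in (ii), fix $A\in\mathcal{A}$. Then $A\Pi(B)\in\mathcal{A}$ by closure under products. For any $C\in\mathcal{A}$ we have $\langle C,AB-A\Pi(B)\rangle=\re\Tr (A^{*}C)^{*}(B-\Pi(B))=0$, because $A^{*}C\in\mathcal{A}$. Hence $\Pi(AB)=A\Pi(B)$. Unitality is immediate since $I\in\mathcal{A}$.

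Positivity is the main step. Since $\mathcal{A}$ is a finite-dimensional real $*$-algebra, it is a real C$^{*}$-algebra. So for $B\ge0$ we can write $\Pi(B)$ (which is self-adjoint by (i)) as $H_{+}-H_{-}$ with $H_{\pm}\ge0$, $H_{+}H_{-}=0$, and $H_{\pm}\in\mathcal{A}$, using the real functional calculus on self-adjoint elements. This calculus stays in $\mathcal{A}$ because spectral projections are real polynomials in the element. Take $P\in\mathcal{A}$ the spectral projection onto the negative part. Then, using self-adjointness of $\Pi$ for the real inner product,
\[
\Tr P\Pi(B)P=\langle P,\Pi(B)P\rangle=\langle P,\Pi(BP)\rangle=\langle P,BP\rangle=\Tr PBP\ge0.
\]
The left side equals $-\Tr H_{-}\le0$, so $H_{-}=0$. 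Faithfulness (iv) follows the same way. If $B\ge0$ and $\Pi(B)=0$, then $\Tr B=\langle I,B\rangle=\langle I,\Pi(B)\rangle=0$, so $B=0$. Part (iii) is then Lemma \ref{lem:CE2CP}.

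In the complex case, $i\mathcal{A}=\mathcal{A}$. The projection onto a complex subspace with respect to $\re\Tr A^{*}B$ is complex-linear, since multiplication by $i$ is orthogonal and preserves $\mathcal{A}$. The same argument as in (ii) then gives complex $\mathcal{A}$-bimodularity. For complex complete positivity I will use the standard fact that a positive complex conditional expectation onto a C$^{*}$-subalgebra is completely positive, via Tomiyama's argument. Alternatively, I can write $\Pi\otimes\id_{n}$ as the orthogonal projection onto $\mathcal{A}\otimes\mathbb{M}_{n}(\C)$ and apply the positivity argument above to that algebra. The only delicate point is ensuring the spectral projection lies in $\mathcal{A}$, which holds because finite-dimensional $*$-algebras are closed under real polynomial functional calculus of self-adjoint elements.
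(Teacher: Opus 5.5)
Your proposal is correct and follows the paper's proof essentially step by step. The same orthogonality computations give (i) and the module property, positivity comes from testing $\Pi(B)$ against its spectral projections in $\mathcal{A}$, faithfulness from $\Tr B=\langle I,\Pi(B)\rangle$, and (iii) from Lemma \ref{lem:CE2CP}. Your additions (spectral projections as real polynomials in $\Pi(B)$, and complex complete positivity via Tomiyama or via the positivity argument applied to $\mathcal{A}\otimes\mathbb{M}_{n}(\mathbb{C})$) only make explicit what the paper states briefly. One small step is unstated: $\Pi(B)P=\Pi(BP)$ is the right-module property, which follows from (i) together with left modularity, or can be bypassed since $\Tr P\Pi(B)P=\langle P,\Pi(B)\rangle=\langle P,B\rangle$.
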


\begin{proof}
We first prove (i). It suffices to show that $\re\Tr A\Pi(B^{*})=\re\Tr A\Pi(B)^{*}$
for every $A\in\mathcal{A}$ and $B\in\mathcal{B}(\mathcal{H})$.
Using the real Hilbert--Schmidt inner product $\langle\cdot,\cdot\rangle$,
we compute
\begin{align*}
\re\Tr A\Pi(B^{*}) & =\left\langle A^{*},\Pi(B^{*})\right\rangle =\left\langle A^{*},B^{*}\right\rangle \\
 & =\left\langle B,A\right\rangle =\left\langle \Pi(B),A\right\rangle =\re\Tr A\Pi(B)^{*}
\end{align*}
This proves (i).

Next we prove (ii) and (iii). Let $A\in\mathcal{A}$ and $B\in\mathcal{B}(\mathcal{H})$.
To show that $A\,\Pi(B)=\Pi(AB),$ it suffices to verify that $\re\Tr C^{*}A\Pi(B)=\re\Tr C^{*}\Pi(AB)$
for every $C\in\mathcal{A}$. Indeed, 
\begin{align*}
\re\Tr C^{*}A\Pi(B) & =\left\langle A^{*}C,\Pi(B)\right\rangle =\left\langle A^{*}C,B\right\rangle =\re\Tr C^{*}AB\\
 & =\left\langle C,AB\right\rangle =\left\langle C,\Pi(AB)\right\rangle =\re\Tr C^{*}\Pi(AB)
\end{align*}
Hence $\Pi$ is a real conditional expectation. 

We next show that $\Pi(B)\ge0$ for every $B\in\mathcal{B}(\mathcal{H})_{+}$.
By (i), $\Pi(B)$ is self-adjoint. Let $\Pi(B)=\sum_{i}\lambda_{i}P_{i}$
be its spectral decomposition, where $\lambda_{i}\in\mathbb{R}$ and
$P_{i}\in\mathcal{A}$ are the spectral projections. Then
\[
\Tr P_{i}\Pi(B)=\re\Tr P_{i}\Pi(B)=\left\langle P_{i},\Pi(B)\right\rangle =\left\langle P_{i},B\right\rangle =\Tr P_{i}B\geq0
\]
Since $\Tr P_{i}\Pi(B)=\lambda_{i}\Tr P_{i}$, it follows that $\lambda_{i}\ge0$
for all $i$, and hence $\Pi(B)\ge0$. Thus $\alpha$ is positive.
The real complete positivity of $\Pi$ now follows from Lemma \ref{lem:CE2CP}.

Assume now that $\mathcal{A}$ is a complex $*$-subalgebra. To prove
the last statement in this theorem, it suffices to show that $\Pi(\mathrm{i}B)=\mathrm{i}\Pi(B)$
for all $B\in\mathcal{B}(\mathcal{H}))$. Since $\mathrm{i}I\in\mathcal{A}$,
this follows from the fact that $\Pi$ is a real conditional expectation.
Hence $\Pi$ is complex-linear, and therefore complex completely positive
and a complex conditional expectation.

Finally, we prove (iv). Let $B\in\mathcal{B}(\mathcal{H})_{+}$ satisfy
$\Pi(B)=0$. Then
\[
\Tr B=\left\langle I,B\right\rangle =\left\langle I,\Pi(B)\right\rangle =\Tr\Pi(B)=0
\]
Since $B\ge0$, this implies $B=0$. Hence $\Pi$ is faithful.
\end{proof}

\begin{example}
When $\mathcal{A}=\mathbb{M}_{n}(\mathbb{R})$, the orthogonal projection
$\Pi$ is real completely positive, but not complex completely positive.
Indeed, in the case $n=2$, 
\[
\Pi\!\left(\begin{pmatrix}0 & 0\\
1 & 0
\end{pmatrix}\right)=\begin{pmatrix}0 & 0\\
1 & 0
\end{pmatrix},\qquad\Pi\!\left(\begin{pmatrix}0 & 0\\
\ii & 0
\end{pmatrix}\right)=\begin{pmatrix}0 & 0\\
0 & 0
\end{pmatrix}.
\]
On the other hand, if we restrict $\Pi$ to self-adjoint operators
and consider $\Pi_{h}:\mathcal{B}_{h}(\mathcal{H})\to\mathcal{A}\cap\mathcal{B}_{h}(\mathcal{H}),$
then
\[
\Pi'(B):=\Pi_{h}\!\left(\frac{B+B^{*}}{2}\right)+\ii\mathrm{i}\,\Pi_{h}\!\left(\frac{B-B^{*}}{2\mathrm{i}}\right)=\frac{1}{2}(B+B^{T})
\]
defines a complex-linear map. However, $\Pi'$ is no longer completely
positive, and moreover $\Pi'(\mathcal{B}(\mathcal{H}))\not\subset\mathcal{A}$.
\end{example}

\section{Degenerate $\rho$-Modular Invariance and Real Conditional Expectations\label{sec:modular}}

In this section, we show that, for a positive operator $\rho\in\mathcal{B}(\mathcal{H})_{+}$
on a finite-dimensional Hilbert space $\mathcal{H}$ and a real $*$-subalgebra
$\mathcal{A}\subset\mathcal{B}(\mathcal{H})$, the existence of a
real conditional expectation is characterized by $\rho$-modular invariance.
Here, $\rho$ is not assumed to be strictly positive.
\begin{thm}
\label{thm:modular} Let $\mathcal{A}\subset\mathcal{B}(\mathcal{H})$
be a real $*$-subalgebra, let $\rho\in\mathcal{B}(\mathcal{H})_{+}$,
and let $\Pi:\mathcal{B}(\mathcal{H})\to\mathcal{A}$ be the orthogonal
projection with respect to the real Hilbert--Schmidt inner product.
Put $\rho_{0}=\Pi(\rho)$, and let $s=\supp\rho$ be the support projection
of $\rho$. Let $\rho^{-1}$ and $\rho^{-1}_{0}$ denote the generalized
inverses. Then the following are equivalent:
\begin{description}
\item [{(i)}] There exists a real conditional expectation $\alpha:\mathcal{B}(\mathcal{H})\to\mathcal{A}$
such that $\re\Tr\rho B=\re\Tr\rho\alpha(B)$ for all $B\in\B(\H)$
and $s\in\mathcal{A}$.
\item [{(ii)}] For every $A\in\mathcal{A}$, 
\[
\rho A\rho^{-1}=\rho_{0}A\rho^{-1}_{0}.
\]
\item [{(iii)}] For every $A\in\mathcal{A}$, 
\[
\rho A\rho^{-1}\in\mathcal{A}.
\]
\end{description}
If, moreover, $\mathcal{A}$ is a complex $*$-subalgebra, then the
map $\alpha$ in (i) can be chosen to be a complex conditional expectation.
\end{thm}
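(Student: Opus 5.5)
The plan is to prove the cycle (ii)$\Rightarrow$(iii)$\Rightarrow$(i)$\Rightarrow$(ii), and to handle the complex case at the end.

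\emph{(ii)$\Rightarrow$(iii).} This step is immediate. By Theorem \ref{thm:proj}, $\rho_{0}=\Pi(\rho)$ is a positive element of $\mathcal{A}$. Its generalized inverse $\rho_{0}^{-1}$ is a real polynomial in $\rho_{0}$, so it also lies in $\mathcal{A}$. Hence $\rho_{0}A\rho_{0}^{-1}\in\mathcal{A}$, and (ii) then gives (iii).

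\emph{(iii)$\Rightarrow$(i).} Taking $A=I$ in (iii) gives $s=\rho\rho^{-1}\in\mathcal{A}$. I then use $s$ to split $\mathcal{A}$ and $\mathcal{B}(\mathcal{H})$ into corners. Because $s\in\mathcal{A}$, the corner $s\mathcal{A}s$ is a real $*$-subalgebra of $\mathcal{B}(s\mathcal{H})$, and on $s\mathcal{H}$ the operator $\rho$ is invertible. The map $\Delta(X)=\rho X\rho^{-1}$ is an injective linear map of the finite-dimensional space $s\mathcal{A}s$ into itself, hence bijective. The Cayley--Hamilton polynomial of $\Delta$ on $\mathcal{B}(s\mathcal{H})$ then shows that $\Delta^{-1}$, and more generally every real polynomial or spectral function of $\Delta$, preserves $s\mathcal{A}s$. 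Since $\Delta$ is positive with respect to the Hilbert--Schmidt inner product, $\Delta^{1/2}(X)=\rho^{1/2}X\rho^{-1/2}$ is such a spectral function and therefore also preserves $s\mathcal{A}s$.

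On this corner I define $\alpha_{s}$ as the orthogonal projection onto $s\mathcal{A}s$ for the inner product $\re\Tr\rho X^{*}Y$. The $\Delta^{1/2}$-invariance makes $\alpha_{s}$ the conjugate of the Hilbert--Schmidt projection $\Pi$ by $\rho^{1/2}$. Then the argument of Theorem \ref{thm:proj} shows that $\alpha_{s}$ is a unital, positive, $\rho$-preserving real conditional expectation. On the complementary part, where $\rho$ vanishes, I use $\Pi$ itself; there positivity and the bimodule property come from Theorem \ref{thm:proj}, and no $\rho$-preservation is needed.

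The main obstacle is the off-diagonal block $s\mathcal{A}(I-s)$, which need not vanish. The bimodule identity forces $\alpha$ to be a single conditional expectation, not a direct sum of two. My first attempt is to use the central decomposition of $\mathcal{A}$ from (\ref{eq:real_decomp}). For each simple summand $\mathbb{M}_{n}(\mathbb{K})\otimes I_{m}$, condition (iii) together with $s\in\mathcal{A}$ should force $\rho$ to have the form $\rho_{0}$-part $\otimes K$, with $K\ge0$ acting on the multiplicity space. The corresponding $\alpha$ is then $\mathrm{id}\otimes\Tr(K\,\cdot\,)/\Tr K$ on that block. This is visibly a conditional expectation, and it is $\rho$-preserving exactly as in the faithful case.

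\emph{(i)$\Rightarrow$(ii).} Here I dualize. From $\re\Tr\rho B=\re\Tr\rho\alpha(B)$ and the bimodule property I get $\re\Tr\rho AB=\re\Tr\rho A\alpha(B)$ for all $A\in\mathcal{A}$. Applying Theorem \ref{thm:proj} to $A\alpha(B)\in\mathcal{A}$, this equals $\re\Tr\rho_{0}A\alpha(B)$. So $\rho$ and $\rho_{0}$ induce the same functional on $\mathcal{A}\cdot\alpha(\mathcal{B}(\mathcal{H}))$. Combined with $s\in\mathcal{A}$, I expect this to yield the same block form $\rho=\bigoplus(\rho_{0}\text{-part})\otimes K$, with each $K$ commuting with $\mathcal{A}$ and $\rho_{0}^{-1}\rho$ in the commutant of $\mathcal{A}$ on the support. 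From that form, $\rho A\rho^{-1}=\rho_{0}A\rho_{0}^{-1}$ is a direct computation.

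\emph{Complex case.} If $\mathcal{A}$ is a complex $*$-subalgebra, then $\mathrm{i}I\in\mathcal{A}$. The bimodule property then gives $\alpha(\mathrm{i}B)=\mathrm{i}\alpha(B)$, exactly as in the proof of Theorem \ref{thm:proj}, so the $\alpha$ constructed above is a complex conditional expectation.
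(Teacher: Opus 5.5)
Your (ii)$\Rightarrow$(iii) step and the complex-case remark are fine. Neither of the two substantive implications is actually proved, however.

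\textbf{(iii)$\Rightarrow$(i).} Once you drop the corner-plus-complement construction, everything rests on the claim that (iii) and $s\in\mathcal{A}$ ``should force'' $\rho=\bigoplus_z\rho_z\otimes K_z$ along (\ref{eq:real_decomp}), with $\rho_z\ge0$ in the $z$-th summand and $K_z>0$. This route can be completed, but that block form is a Takesaki-type structure theorem and carries the whole content of the implication. To prove it you would need three things. First, $\rho$ commutes with the central projections of $\mathcal{A}$. Second, $A\mapsto\rho A\rho^{-1}$ is inner on each corner of $s\mathcal{A}s$ via an element $g$ of the corresponding $\mathbb{M}_{n'}(\mathbb{K})$; this is Skolem--Noether, and for complex summands you must first check that $\mathrm{i}z$ is fixed. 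Hence $\rho=g\otimes K'$ there. Third, positivity of $\rho$ lets you take $g\ge0$ inside $\mathbb{M}_{n'}(\mathbb{K})$; for $\mathbb{K}=\mathbb{R},\mathbb{H}$ this means ruling out anti-Hermitian $g$. None of this is supplied.

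Even granting the form, $\mathrm{id}\otimes\Tr(K\,\cdot\,)/\Tr K$ lands in the complex span of the summand (e.g.\ $\mathbb{M}_n(\mathbb{C})\otimes I_m$ for $\mathbb{K}=\mathbb{R}$). So on real and quaternionic summands you must still compose with the Hilbert--Schmidt projection onto the summand.

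Your corner step is also unjustified. The right map there is $\rho_0^{-1/2}\Pi(\rho^{1/2}B\rho^{1/2})\rho_0^{-1/2}$; a conjugate of $\Pi$ by $\rho^{1/2}$ does not land in $\mathcal{A}$. Its module property and positivity come from $\Delta^{1/2}=\Delta_0^{1/2}$ on $\mathcal{A}$, i.e.\ from (ii). They do not follow from the argument of Theorem \ref{thm:proj}, since $\re\Tr\rho P_iB$ need not be nonnegative.

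The paper needs no structure theory. It gets (iii)$\Rightarrow$(ii) from a short trace identity, and builds the faithful-case expectation from $f(\Delta)=f(\Delta_0)$ on $\mathcal{A}$. It handles the off-diagonal block $\kappa\mathcal{A}s$ by passing to the faithful operator $\sigma=\rho+\sum_kF_k\rho F_k^*+\tilde\kappa$, with $\{F_k\}$ a basis of $\kappa\mathcal{A}s$. This $\sigma$ still satisfies (iii), and the paper pulls back through $\rho=\sigma s$ with $s\in\mathcal{A}$.

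\textbf{(i)$\Rightarrow$(ii).} Here the argument is missing. You correctly obtain $\re\Tr\rho AB=\re\Tr\rho_0A\alpha(B)$ for all $B\in\mathcal{B}(\mathcal{H})$. You then weaken it to equality of the functionals of $\rho$ and $\rho_0$ on $\mathcal{A}\cdot\alpha(\mathcal{B}(\mathcal{H}))$. Since $\alpha$ is onto $\mathcal{A}$, that set is just $\mathcal{A}$, and there $\rho$ and $\rho_0$ agree for every $\rho$ by definition of $\Pi$. So nothing has been gained, and ``I expect this to yield the same block form'' is not a proof.

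The missing idea is to keep $B$ arbitrary and insert $\rho_0^{-1}A\rho_0\in\mathcal{A}$ in place of $A$. This gives $\re\Tr\rho(\rho_0^{-1}A\rho_0)B=\re\Tr sA\rho_0\alpha(B)=\re\Tr\rho\,\alpha(B)sA$. On the other side, $\re\Tr\rho(\rho^{-1}A\rho)B=\re\Tr sA\rho B=\re\Tr\rho BsA$ equals the same quantity, by $\rho$-preservation and the module property applied on the right ($\alpha(BsA)=\alpha(B)sA$). Hence $sA\rho=\rho\rho_0^{-1}A\rho_0$ for all $A\in\mathcal{A}$. Taking adjoints, multiplying on the right by $\rho^{-1}$, and using $\supp\rho_0=s$ (Lemma \ref{lem:supp}) yields (ii) directly.
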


\begin{proof}
(i)$\Rightarrow$(ii). It suffices to show that $\re\Tr\rho(\rho^{-1}A\rho)B=\re\Tr\rho(\rho^{-1}_{0}A\rho_{0})B$
for every $A\in\mathcal{A}$ and $B\in\mathcal{B}(\mathcal{H})$.
By Lemma \ref{lem:supp}, we have $s=\supp\rho=\supp\rho_{0}\in\A$.
Hence
\[
\re\Tr\rho(\rho^{-1}A\rho)B=\re\Tr sA\rho B=\re\Tr\rho\alpha(B)sA.
\]
On the other hand,
\begin{align*}
\re\Tr\rho(\rho^{-1}_{0}A\rho_{0})B & =\re\Tr\rho\alpha((\rho^{-1}_{0}A\rho_{0})B)\\
 & =\re\Tr\rho_{0}\alpha((\rho^{-1}_{0}A\rho_{0})B)\\
 & =\re\Tr\alpha(\rho_{0}(\rho^{-1}_{0}A\rho_{0})B)\\
 & =\re\Tr\alpha(sA\rho_{0}B)\\
 & =\re\Tr\rho\alpha(B)sA.
\end{align*}
Thus (i)$\Rightarrow$(ii).

(ii)$\Rightarrow$(iii). This is immediate.

(iii)$\Rightarrow$(ii). We first show that $s\in\mathcal{A}$. Since
$\rho I\rho^{-1}=s$, condition (iii) implies $s\in\mathcal{A}$.
By Lemma \ref{lem:supp}, we have $s=\supp\rho=\supp\rho_{0}.$ It
therefore suffices to show that $\re\Tr\rho_{0}C\rho A\rho^{-1}=\re\Tr\rho_{0}C\rho_{0}A\rho^{-1}_{0}$
for every $C\in\mathcal{A}$. Indeed,
\[
\re\Tr\rho_{0}C\rho A\rho^{-1}=\re\Tr\rho C\rho A\rho^{-1}=\re\Tr C\rho As=\re\Tr C\rho_{0}As=\re\Tr\rho_{0}C\rho_{0}A\rho^{-1}_{0}.
\]
Thus (iii)$\Rightarrow$(ii).

(ii)$\Rightarrow$(i). We first show that $s\in\mathcal{A}$. Since
$\rho I\rho^{-1}=\rho_{0}I\rho^{-1}_{0}=s$, we indeed have $s\in\mathcal{A}$.

We begin with the case $\rho>0$. Define
\begin{equation}
\alpha(B)=\rho^{-1/2}_{0}\Pi(\rho^{1/2}B\rho^{1/2})\rho^{-1/2}_{0},\label{eq:CE0}
\end{equation}
for $B\in\B(\H)$. Let $\Delta$ and $\Delta_{0}$ be the linear maps
on $\mathcal{B}(\mathcal{H})$ defined by
\[
\Delta(B)=\rho B\rho^{-1},\qquad\Delta_{0}(B)=\rho_{0}B\rho^{-1}_{0}.
\]
By assumption (ii), we have $\Delta(A)=\Delta_{0}(A)$ for $A\in\mathcal{A}$.
Since $\Delta$ and $\Delta_{0}$ are self-adjoint operators on $\mathcal{B}(\mathcal{H})$
with respect to the Hilbert--Schmidt inner product, and their spectra
are positive, there exist numbers $0<\lambda_{\min}\le\lambda_{\max}$
such that the spectra of both $\Delta$ and $\Delta_{0}$ are contained
in $[\lambda_{\min},\lambda_{\max}]$. Hence for every continuous
function $f:[\lambda_{\min},\lambda_{\max}]\to\mathbb{R}$, we have
$f(\Delta)(A)=f(\Delta_{0})(A)$. In particular, $\Delta^{1/2}(A)=\Delta^{1/2}_{0}(A)$,
that is,
\[
\rho^{1/2}A\rho^{-1/2}=\rho^{1/2}_{0}A\rho^{-1/2}_{0}.
\]
We now show that $A\alpha(B)=\alpha(AB)$ for every $A\in\mathcal{A},\ B\in\mathcal{B}(\mathcal{H})$.
Indeed,
\begin{align*}
\alpha(AB) & =\rho^{-1/2}_{0}\Pi(\rho^{1/2}AB\rho^{1/2})\rho^{-1/2}_{0}=\rho^{-1/2}_{0}\Pi(\rho^{1/2}A\rho^{-1/2}\rho^{1/2}B\rho^{1/2})\rho^{-1/2}_{0}\\
 & =\rho^{-1/2}_{0}\rho^{1/2}_{0}A\rho^{-1/2}_{0}\Pi(\rho^{1/2}B\rho^{1/2})\rho^{-1/2}_{0}\\
 & =A\rho^{-1/2}_{0}\Pi(\rho^{1/2}B\rho^{1/2})\rho^{-1/2}_{0}=A\alpha(B).
\end{align*}
Next, we show that $\Tr\rho B=\Tr\rho\alpha(B)$. Indeed,
\begin{align*}
\re\Tr\rho\alpha(B) & =\re\Tr\rho\rho^{-1/2}_{0}\Pi(\rho^{1/2}B\rho^{1/2})\rho^{-1/2}_{0}\\
 & =\re\Tr\rho_{0}\rho^{-1/2}_{0}\Pi(\rho^{1/2}B\rho^{1/2})\rho^{-1/2}_{0}\\
 & =\re\Tr\Pi(\rho^{1/2}B\rho^{1/2})=\re\Tr\rho^{1/2}B\rho^{1/2}\\
 & =\re\Tr\rho B.
\end{align*}
If $\mathcal{A}$ is a complex $*$-subalgebra, then by Theorem \ref{thm:proj},
the map $\alpha$ in (\ref{eq:CE0}) is also complex-linear.

We now turn to the general case where $\rho$ is not necessarily strictly
positive. Let $\{F_{k}\}_{k}$ be a basis of the real linear space
$\kappa\mathcal{A}s\subset\mathcal{A}$, where $\kappa=I-s$, and
define
\[
\delta=\sum_{k}F_{k}\rho F^{*}_{k}.
\]
Let $\tilde{s}=\supp\delta$, $\tilde{\kappa}=I-s-\tilde{s}$, and
\[
\sigma=\rho+\delta+\tilde{\kappa}.
\]
Then $\rho\perp\delta\perp\tilde{\kappa}$ and $\sigma>0$. We claim
that $\sigma$ satisfies condition (iii), namely, $\sigma A\sigma^{-1}\in\mathcal{A}$
for every $A\in\mathcal{A}$.

We first show that
\begin{equation}
\rho A\tilde{\kappa}=0\qquad(\forall A\in\mathcal{A}).\label{eq:rhoAk_zero}
\end{equation}
It suffices to prove that $\Tr A^{*}\rho^{2}A\tilde{\kappa}=0$. Indeed,
\begin{align}
\Tr A^{*}\rho^{2}A\tilde{\kappa} & \leq\Tr A^{*}\rho A\tilde{\kappa}=\Tr\kappa A^{*}\rho A\kappa\tilde{\kappa}\leq c\Tr\delta\tilde{\kappa}=0,\label{eq:rhoAk_zero1}
\end{align}
where $c>0$ satisfies $\kappa A^{*}\rho A\kappa\le c\delta$. Hence
$\rho A\tilde{\kappa}=0$. By (\ref{eq:rhoAk_zero}), we have also
\[
\delta A\tilde{\kappa}=\sum_{k}F_{k}\left\{ \rho F^{*}_{k}A\tilde{\kappa}\right\} =0.
\]

We now verify that $\delta A\rho^{-1}\in\mathcal{A}$ for $A\in\A$.
Indeed,
\begin{equation}
\delta A\rho^{-1}=\sum_{k}F_{k}\left(\rho F^{*}_{k}A\rho^{-1}\right)=\sum_{k}F_{k}\left(\rho_{0}F^{*}_{k}A\rho^{-1}_{0}\right)\in\A.\label{eq:del_A_eho}
\end{equation}
Similarly, we check that $\delta A\delta^{-1}\in\mathcal{A}.$ By
(\ref{eq:del_A_eho}),
\[
\delta A\delta^{-1}=\sum_{k}F_{k}\bigl(\rho F^{*}_{k}A\delta^{-1}\bigr)\in\mathcal{A}.
\]

We next show that $\tilde{s}\in\mathcal{A}$. Since there exist $a>b>0$
such that 
\[
b\sum_{k}F_{k}sF^{*}_{k}\le\sum_{k}F_{k}\rho F^{*}_{k}\le a\sum_{k}F_{k}sF^{*}_{k},
\]
we obtain $\supp\delta=\tilde{s}\in\A$. Hence $\tilde{\kappa}=I-s-\tilde{s}\in\mathcal{A}$,
and therefore $\tilde{\kappa}A\tilde{\kappa}\in\mathcal{A}.$

Combining the above facts, we conclude that $\sigma A\sigma^{-1}\in\mathcal{A}$
for $A\in\mathcal{A}$. By the already established strictly positive
case, there exists a real conditional expectation $\alpha:\mathcal{B}(\mathcal{H})\to\mathcal{A}$
such that $\re\Tr\sigma B=\re\Tr\sigma\alpha(B)$ for all $B\in\B(\H)$.
Using this, we obtain
\[
\re\Tr\rho B=\re\Tr\sigma sB=\re\Tr\sigma\alpha(sB)=\re\Tr\sigma s\alpha(B)=\re\Tr\rho\alpha(B),
\]
and the proof is complete.
\end{proof}

\begin{lem}
\label{lem:supp}Let $\rho\in\mathcal{B}(\mathcal{H})_{+}$ be a positive
operator on a Hilbert space $\mathcal{H}$, and let $\mathcal{A}\subset\mathcal{B}(\mathcal{H})$
be a real $*$-subalgebra. Let $\rho_{0}=\Pi(\rho)$, where $\Pi:\mathcal{B}(\mathcal{H})\to\mathcal{A}$
is the orthogonal projection with respect to the Hilbert--Schmidt
inner product. Then
\[
\supp\rho\le\supp\rho_{0}.
\]
If, moreover, $\supp\rho\in\mathcal{A}$, then $\supp\rho=\supp\rho_{0}$.
\end{lem}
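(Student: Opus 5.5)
We write $s=\supp\rho$ and $s_{0}=\supp\rho_{0}$, and argue directly from the defining property of the orthogonal projection $\Pi$ together with the facts collected in Theorem \ref{thm:proj}. By Theorem \ref{thm:proj} (i) and (iii), $\rho_{0}=\Pi(\rho)$ is self-adjoint and positive. Its spectral projections are polynomials in $\rho_{0}$ with real coefficients, so they lie in $\mathcal{A}$; in particular $s_{0}\in\mathcal{A}$, and so does $\kappa_{0}:=I-s_{0}$.

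For the inequality $s\le s_{0}$, it suffices to show $\kappa_{0}\rho\kappa_{0}=0$. Since $\kappa_{0}\in\mathcal{A}$, the projection property gives
\[
\Tr\kappa_{0}\rho=\left\langle \kappa_{0},\rho\right\rangle =\left\langle \kappa_{0},\Pi(\rho)\right\rangle =\Tr\kappa_{0}\rho_{0}=0 .
\]
Because $\rho\ge0$, the vanishing of $\Tr\kappa_{0}\rho\kappa_{0}=\Tr\kappa_{0}\rho$ forces $\kappa_{0}\rho\kappa_{0}=0$. This gives $\rho^{1/2}\kappa_{0}=0$, hence $\rho\kappa_{0}=0$, so the range of $\rho$ is orthogonal to the range of $\kappa_{0}$. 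That is, $s\le s_{0}$.

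For the equality under the assumption $s\in\mathcal{A}$, set $\kappa=I-s\in\mathcal{A}$; we will show $\rho_{0}\kappa=0$. Positivity of $\rho_{0}$ will do the work, so first compute
\[
\Tr\kappa\rho_{0}=\left\langle \kappa,\rho_{0}\right\rangle =\left\langle \kappa,\rho\right\rangle =\Tr\kappa\rho=0 .
\]
Since $\rho_{0}\ge0$, the same argument as before yields $\rho_{0}\kappa=0$. Hence $s_{0}\le s$, and combined with the first part, $s=s_{0}$. An alternative route to the same conclusion is the conditional expectation property $\Pi(\kappa\rho\kappa)=\kappa\rho_{0}\kappa$ from Theorem \ref{thm:proj} (ii), which gives $\kappa\rho_{0}\kappa=0$ directly.

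The only point needing care is that the spectral projections of $\rho_{0}$, in particular $\kappa_{0}$, belong to the \emph{real} $*$-subalgebra $\mathcal{A}$. This holds because any such projection is $p(\rho_{0})$ for a real polynomial $p$, and $\mathcal{A}$ is closed under real linear combinations, products, and contains $I$. With that in hand, every remaining step is a trace-vanishing argument for positive operators.
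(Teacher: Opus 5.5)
Your proof is correct and has the same structure as the paper's: you reduce $s\le s_{0}$ to $\kappa_{0}\rho\kappa_{0}=0$, then use $s\in\mathcal{A}$ for the reverse inequality. The differences are in packaging only. You inline the trace-duality computation $\Tr\kappa_{0}\rho=\langle\kappa_{0},\Pi(\rho)\rangle=\Tr\kappa_{0}\rho_{0}$ where the paper cites the bimodule property and faithfulness of $\Pi$, and you use $\Tr\kappa\rho_{0}=\Tr\kappa\rho=0$ where the paper uses $\rho\le as\Rightarrow\Pi(\rho)\le as$. Your explicit check that $\kappa_{0}\in\mathcal{A}$ (spectral projections are real polynomials in $\rho_{0}$) also makes precise a point the paper uses tacitly.
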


\begin{proof}
Let $s=\supp\rho$, $s_{0}=\supp\rho_{0}$, $\kappa=I-s$, and $\kappa_{0}=I-s_{0}.$
It suffices to show that $\kappa_{0}\rho\kappa_{0}=0$. Indeed,
\[
\Pi(\kappa_{0}\rho\kappa_{0})=\kappa_{0}\Pi(\rho)\kappa_{0}=\kappa_{0}\rho_{0}\kappa_{0}=0.
\]
Since $\Pi$ is faithful, it follows that $\kappa_{0}\rho\kappa_{0}=0$,
and hence $s\le s_{0}$.

Assume now that $s\in\mathcal{A}$. Then there exists $a>0$ such
that $as-\rho\ge0$. By Theorem \ref{thm:proj} (iii), $\Pi(\rho)\le as$.
Hence $s_{0}\le s$. Together with $s\le s_{0}$, this implies $s=s_{0}$.
\end{proof}

\section{Real Jordan Algebras and Real Positive Maps\label{sec:jordan_pos}}

In this section, we discuss in more detail the properties of real
Jordan algebras $\mathcal{A}_{J}$ introduced in Definition \ref{def:jordan_alg}.

It is known that every finite-dimensional real Jordan algebra admits
a decomposition of the form (\ref{eq:jordan_deco}) \cite{jordan1,jordan2,jordan3}.
Here $\Gamma_{n}$ denotes a spin factor. Namely, there exist $n$
Hermitian matrices $\{\gamma_{i}\}^{n}_{i=1}$ satisfying
\[
\gamma_{i}\circ\gamma_{j}=\delta_{ij}I,
\]
and $\Gamma_{n}=\Span_{\R}\bigl(\{\gamma_{i}\}^{n}_{i=1}\cup\{I\}\bigr)$.
For $\Gamma_{3}$, one may take $\gamma_{1},\gamma_{2},\gamma_{3}$
to be the usual Pauli matrices. Moreover, $\Gamma_{n+2}$ can be constructed
recursively from $\Gamma_{n}$. Indeed, if $\gamma_{1},\dots,\gamma_{n}$
generate $\Gamma_{n}$, then one may define
\[
\gamma'_{i}=\begin{pmatrix}0 & \gamma_{i}\\
\gamma_{i} & 0
\end{pmatrix}\qquad(1\leq i\leq n),
\]
together with
\[
\gamma'_{n+1}=\begin{pmatrix}I & 0\\
0 & -I
\end{pmatrix},\qquad\gamma'_{n+2}=\begin{pmatrix}0 & -\ii I\\
\ii I & 0
\end{pmatrix}.
\]
Then $\gamma_{i}'\circ\gamma_{j}'=\delta_{ij}I$ for $1\le i,j\le n+2$.
Since $\Gamma_{1}\cong\mathbb{R}\oplus\mathbb{R}$, $\Gamma_{2}\cong\mathbb{M}_{2}(\mathbb{R})_{h}$,
$\Gamma_{3}\cong\mathbb{M}_{2}(\mathbb{C})_{h}$, and $\Gamma_{5}\cong\mathbb{M}_{2}(\mathbb{H})_{h}$,
only the cases $n=4$ and $n\ge6$ give rise to spin factors that
are not already covered by the real $*$-subalgebra setting. One has
$\Gamma_{n}\subset\mathbb{M}_{2^{\lfloor n/2\rfloor}}(\mathbb{C}),$
and the real $*$-subalgebra generated by $\Gamma_{n}$ is $\mathbb{M}_{2^{\lfloor n/2\rfloor}}(\mathbb{C})$.

For the Jordan product, the following identity is useful. For $A,B,C\in\mathcal{B}_{h}(\mathcal{H})$,
\begin{equation}
\frac{1}{2}\left(ABC+CBA\right)=\left\{ A\circ(B\circ C)+(A\circ B)\circ C-(A\circ C)\circ B\right\} \label{eq:jordan3}
\end{equation}
In particular, if $A,B,C\in\mathcal{A}_{J}$, then $\frac{1}{2}(ABC+CBA)\in\mathcal{A}_{J}$.
Note that the definition of the square-root likelihood ratio in (\ref{eq:ratio})
and the definition of the SLD in (\ref{eq:sld}) can both be expressed
in terms of the Jordan product. By contrast, for $A,B,C,D\in\mathcal{A}_{J}$,
the inclusion $\frac{1}{2}(ABCD+DCBA)\in\mathcal{A}_{J}$ does not
hold in general.
\begin{example}
Let $\sigma_{1},\sigma_{2},\sigma_{3}\in\mathbb{M}_{2}(\mathbb{C})$
be the Pauli matrices, and define $\gamma_{1}=\sigma_{1}\otimes\sigma_{1}$,
$\gamma_{2}=\sigma_{1}\otimes\sigma_{2}$, $\gamma_{3}=\sigma_{1}\otimes\sigma_{3}$,
$\gamma_{4}=\sigma_{3}\otimes I_{2}$. Since $\gamma_{i}\circ\gamma_{j}=\delta_{ij}I$,
the space $\Gamma_{4}=\Span_{\R}\{\gamma_{1},\gamma_{2},\gamma_{3},\gamma_{4},I_{4}\}$
is a real Jordan algebra. However,
\[
\frac{1}{2}\left(\gamma_{1}\gamma_{2}\gamma_{3}\gamma_{4}+\gamma_{4}\gamma_{3}\gamma_{2}\gamma_{1}\right)=\sigma_{2}\otimes I_{2}\not\in\Gamma_{4}.
\]
\end{example}

We next prove several results concerning real Jordan algebras and
real positive maps. These results are closely related to the classical
connection between positive projections and Jordan structure developed
by Effros and Størmer \cite{effros}.
\begin{lem}
\label{lem:H2H}Let $\alpha:\mathcal{B}(\mathcal{H})\to\mathcal{B}(\mathcal{H})$
be a real positive map. If $B\in\mathcal{B}_{h}(\mathcal{H})$, then
$\alpha(B)\in\mathcal{B}_{h}(\mathcal{H})$.
\end{lem}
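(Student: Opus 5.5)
The claim is that a real positive map sends Hermitian operators to Hermitian operators. The plan uses only real-linearity and positivity, and writes a Hermitian operator as a difference of positive operators. Let $B\in\mathcal{B}_{h}(\mathcal{H})$ have spectral decomposition $B=\sum_{i}\lambda_{i}P_{i}$. Split it into positive and negative parts,
\[
B=B_{+}-B_{-},\qquad B_{+}=\sum_{\lambda_{i}>0}\lambda_{i}P_{i}\ge0,\qquad B_{-}=-\sum_{\lambda_{i}<0}\lambda_{i}P_{i}\ge0 .
\]

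Since $\alpha$ is real-linear and both $B_{\pm}$ are obtained from $B$ with real coefficients, we get $\alpha(B)=\alpha(B_{+})-\alpha(B_{-})$. Positivity of $\alpha$ gives $\alpha(B_{\pm})\ge0$. In a finite-dimensional complex Hilbert space a positive operator is self-adjoint, because $\langle v,Cv\rangle\in\mathbb{R}$ for all $v$ forces $C=C^{*}$ by polarization. Hence $\alpha(B_{\pm})$ are Hermitian, and so is their real difference $\alpha(B)$.

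The only point needing any care is the convention that $C\ge0$ includes self-adjointness. Under the standard definition via $\langle v,Cv\rangle\ge0$ this is automatic by the polarization identity over $\mathbb{C}$. No earlier result of the paper is needed, and there is no real obstacle: the decomposition never multiplies by $\mathrm{i}$, so real-linearity suffices.
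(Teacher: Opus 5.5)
Your proof is correct and follows the same route as the paper: write $B=P-N$ with $P,N\ge0$, then use real-linearity and positivity to express $\alpha(B)=\alpha(P)-\alpha(N)$ as a difference of positive, hence self-adjoint, operators. Your remark that positivity over $\mathbb{C}$ already forces self-adjointness (by polarization) only spells out a step the paper leaves implicit.
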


\begin{proof}
Write $B=P-N$ with $P\ge0$ and $N\ge0$. Then $\alpha(B)=\alpha(P)-\alpha(N),$
and since $\alpha(P)\ge0$ and $\alpha(N)\ge0$, it follows that $\alpha(B)$
is self-adjoint.
\end{proof}

\begin{lem}[Schwarz inequality]
\label{lem:schwartz}Let $\alpha:\mathcal{B}(\mathcal{H})\to\mathcal{B}(\mathcal{H})$
be a real unital positive map. Then for every $B\in\mathcal{B}_{h}(\mathcal{H})$,
\[
\alpha(B^{2})\ge\alpha(B)^{2}.
\]
\end{lem}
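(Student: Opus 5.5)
The plan is to reduce to the classical Kadison inequality by restricting $\alpha$ to the commutative algebra generated by $B$. Fix $B\in\mathcal{B}_{h}(\mathcal{H})$ and write its spectral decomposition $B=\sum_{i}\lambda_{i}P_{i}$, with real $\lambda_{i}$ and mutually orthogonal projections $P_{i}$ satisfying $\sum_{i}P_{i}=I$. Put $Q_{i}:=\alpha(P_{i})$. By positivity each $Q_{i}\ge0$, and $Q_{i}$ is self-adjoint by Lemma \ref{lem:H2H}. Unitality and real-linearity give $\sum_{i}Q_{i}=\alpha(I)=I$. Real-linearity also gives
\[
\alpha(B)=\sum_{i}\lambda_{i}Q_{i},\qquad\alpha(B^{2})=\sum_{i}\lambda_{i}^{2}Q_{i},
\]
because only real coefficients appear. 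This is the only place where real-linearity, as opposed to complex-linearity, matters, and it suffices.

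Next I would show $\sum_{i}\lambda_{i}^{2}Q_{i}\ge\bigl(\sum_{i}\lambda_{i}Q_{i}\bigr)^{2}$ for any POVM $\{Q_{i}\}$ and real $\lambda_{i}$. The direct approach is to expand
\[
\sum_{i}(\lambda_{i}-c)^{2}Q_{i}\ge0,
\]
but $c$ would have to be an operator, so this does not work as it stands. Instead I would use a Naimark dilation. Define $V:\mathcal{H}\to\mathcal{H}\otimes\mathbb{C}^{N}$ by $V\xi=\sum_{i}Q_{i}^{1/2}\xi\otimes e_{i}$. Then $V^{*}V=\sum_{i}Q_{i}=I$, so $V$ is an isometry. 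With $D:=\sum_{i}\lambda_{i}I\otimes\ket{e_{i}}\bra{e_{i}}$ (self-adjoint) we get $V^{*}DV=\sum_{i}\lambda_{i}Q_{i}$ and $V^{*}D^{2}V=\sum_{i}\lambda_{i}^{2}Q_{i}$. Since $VV^{*}\le I$,
\[
V^{*}D^{2}V-(V^{*}DV)^{2}=V^{*}D(I-VV^{*})DV\ge0,
\]
which is exactly $\alpha(B^{2})\ge\alpha(B)^{2}$.

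I expect no serious obstacle. The only subtle point is that we never need complete positivity of $\alpha$, because everything happens on the commutative $C^{*}$-algebra generated by $B$, and the isometry is built from the positive operators $Q_{i}$ directly. In effect, this is the standard fact that positive maps on commutative domains are completely positive, made explicit by the dilation.
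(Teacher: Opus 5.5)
Your proof is correct. It shares the paper's basic reduction: work only with the spectral projections of $B$, where positivity is as good as complete positivity. The way the inequality is then extracted is different.

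The paper asserts that the restriction of $\alpha$ to the commutative $*$-algebra generated by $B$ is completely positive, hence 2-positive. It applies this to $\begin{pmatrix} I & B\\ B & B^{2}\end{pmatrix}\ge0$ and reads off $\alpha(B^{2})\ge\alpha(B)^{2}$ from positivity of the image block matrix, which is a Schur-complement step. You instead dilate the POVM $\{\alpha(P_i)\}$ by an explicit isometry $V$. You then obtain the inequality from the compression identity $V^{*}D^{2}V-(V^{*}DV)^{2}=V^{*}D(I-VV^{*})DV\ge0$, which is the Stinespring-style proof of Kadison's inequality.

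The paper's version is shorter, but its appeal to ``positive on a commutative algebra implies completely positive'' is a theorem about complex-linear maps. For a merely real-linear $\alpha$, nothing controls $\alpha(\mathrm{i}P_i)$, so the claim must be read on the real span of the spectral projections. There it does hold, since
$\begin{pmatrix} I & \alpha(B)\\ \alpha(B) & \alpha(B^{2})\end{pmatrix}=\sum_i\begin{pmatrix}1&\lambda_i\\ \lambda_i&\lambda_i^{2}\end{pmatrix}\otimes\alpha(P_i)$
is a sum of tensor products of positive operators and uses only real coefficients. Your argument makes exactly this point explicit and is fully self-contained, at the cost of a few extra lines.
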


\begin{proof}
The $*$-subalgebra $\mathcal{A}_{B}$ generated by $B$ is commutative.
Hence the restriction of $\alpha$ to $\mathcal{A}_{B}$ is completely
positive. In particular, it is 2-positive. Since
\[
\begin{pmatrix}I & B\\
B & B^{2}
\end{pmatrix}\geq0,
\]
we obtain
\[
\begin{pmatrix}I & \alpha(B)\\
\alpha(B) & \alpha(B^{2})
\end{pmatrix}\geq0.
\]
This is equivalent to $\alpha(B^{2})\ge\alpha(B)^{2}$.
\end{proof}

\begin{thm}
\label{thm:pos2J}Let $\alpha:\mathcal{B}(\mathcal{H})\to\mathcal{B}(\mathcal{H})$
be a real unital positive map. Assume that $\alpha$ is faithful and
idempotent, i.e. $\alpha\circ\alpha=\alpha$. Then $\mathcal{A}_{J}:=\alpha(\mathcal{B}_{h}(\mathcal{H}))$
is a real Jordan algebra. Moreover, the restriction of $\alpha$ to
$\mathcal{B}_{h}(\mathcal{H})$ is a real Jordan conditional expectation.
That is, for every $A\in\mathcal{A}_{J}$ and $B\in\mathcal{B}_{h}(\mathcal{H})$,
\begin{align}
\alpha(A\circ B) & =A\circ\alpha(B),\label{eq:JordanCE}
\end{align}
and
\begin{equation}
A\alpha(B)A=\alpha(ABA).\label{eq:JordanCE2}
\end{equation}
\end{thm}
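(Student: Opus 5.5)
The plan follows the Effros--Størmer argument: first show that $\mathcal{A}_J$ is closed under squaring, then show that $\alpha$ acts as a bimodule map with respect to the Jordan product. By Lemma \ref{lem:H2H}, $\alpha$ maps $\mathcal{B}_h(\mathcal{H})$ into itself, so $\mathcal{A}_J\subset\mathcal{B}_h(\mathcal{H})$. Because $\alpha$ is idempotent, $\mathcal{A}_J$ is exactly the fixed-point set of $\alpha$ in $\mathcal{B}_h(\mathcal{H})$. It is therefore a real subspace, and it contains $I$ since $\alpha$ is unital. The identity $A\circ B=\frac12\bigl((A+B)^2-A^2-B^2\bigr)$ reduces Jordan closure to showing $A^2\in\mathcal{A}_J$ for every $A\in\mathcal{A}_J$.

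For the squaring step, let $A=\alpha(A)$. Lemma \ref{lem:schwartz} gives $D:=\alpha(A^2)-A^2=\alpha(A^2)-\alpha(A)^2\ge0$. Applying $\alpha$ and using idempotence,
\[
\alpha(D)=\alpha(A^2)-\alpha(A^2)=0 .
\]
Since $D\ge0$ and $\alpha$ is faithful, $D=0$, i.e.\ $\alpha(A^2)=A^2$. Hence $A^2\in\mathcal{A}_J$, and $\mathcal{A}_J$ is a real Jordan algebra. The same argument yields a multiplicative-domain statement. For $A\in\mathcal{A}_J$ and $B\in\mathcal{B}_h(\mathcal{H})$, apply Lemma \ref{lem:schwartz} to $A+tB$ with $t\in\R$, giving $\alpha((A+tB)^2)\ge\alpha(A+tB)^2$. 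Expanding, the $t^0$ terms cancel because $\alpha(A^2)=A^2$, leaving
\[
2t\bigl(\alpha(A\circ B)-A\circ\alpha(B)\bigr)+t^2\bigl(\alpha(B^2)-\alpha(B)^2\bigr)\ge0\quad\forall t\in\R .
\]
Dividing by $t$ and letting $t\to0^\pm$ gives $\alpha(A\circ B)=A\circ\alpha(B)$, which is (\ref{eq:JordanCE}). Thus the restriction of $\alpha$ to $\mathcal{B}_h(\mathcal{H})$ is a unital positive map onto $\mathcal{A}_J$ with the Jordan module property, i.e.\ a real Jordan conditional expectation.

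For (\ref{eq:JordanCE2}), use the identity $ABA=2A\circ(A\circ B)-A^2\circ B$, which is the case $C=A$ of (\ref{eq:jordan3}). Applying (\ref{eq:JordanCE}) twice and using $A^2\in\mathcal{A}_J$ gives
\[
\alpha(ABA)=2A\circ(A\circ\alpha(B))-A^2\circ\alpha(B)=A\alpha(B)A .
\]

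The main obstacle is the derivation of (\ref{eq:JordanCE}) from the Schwarz inequality. One has to check that only self-adjoint combinations are used, since Lemma \ref{lem:schwartz} is stated only for $B\in\mathcal{B}_h(\mathcal{H})$, and that the cross terms really are $A\circ B$ and $A\circ\alpha(B)$. These checks hold because $(A+tB)^2=A^2+2tA\circ B+t^2B^2$ and $\alpha$ is real linear. Faithfulness enters only in the squaring step, and that step is essential: without it, the fixed-point set need not be a Jordan algebra.
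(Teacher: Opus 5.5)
Your proposal is correct and follows the paper's proof essentially step for step. Both arguments use the Schwarz inequality together with idempotence and faithfulness to get $\alpha(A^2)=A^2$, then polarization for Jordan closure, then the perturbation $A+tB$ with $t\to0^{\pm}$ for $\alpha(A\circ B)=A\circ\alpha(B)$, and finally the identity $ABA=2A\circ(A\circ B)-A^2\circ B$ to obtain $\alpha(ABA)=A\alpha(B)A$.
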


\begin{proof}
We first show that $\alpha(A^{2})=A^{2}$ for every $A\in\mathcal{A}_{J}$.
By Lemma \ref{lem:schwartz},
\[
\alpha(A^{2})-\alpha(A)^{2}=\alpha(A^{2})-A^{2}\ge0.
\]
On the other hand,
\[
\alpha\bigl(\alpha(A^{2})-A^{2}\bigr)=0.
\]
Since $\alpha$ is faithful, it follows that $\alpha(A^{2})=A^{2}$.

Now let $A,B\in\mathcal{A}_{J}$. We show that $A\circ B\in\mathcal{A}_{J}$.
Since
\[
2(A\circ B)=(A+B)^{2}-A^{2}-B^{2},
\]
we have
\[
2\alpha(A\circ B)=\alpha((A+B)^{2})-\alpha(A^{2})-\alpha(B^{2}).
\]
Using $\alpha((A+B)^{2})=(A+B)^{2}$, $\alpha(A^{2})=A^{2}$, and
$\alpha(B^{2})=B^{2}$, we obtain $2\alpha(A\circ B)=2(A\circ B)$.
Hence $A\circ B\in\mathcal{A}_{J}$, and therefore $\mathcal{A}_{J}$
is a real Jordan algebra.

Next we prove (\ref{eq:JordanCE}). By Lemma \ref{lem:schwartz},
for every $t\in\mathbb{R}$,
\[
\alpha\bigl((A+tB)^{2}\bigr)\ge\alpha(A+tB)^{2}.
\]
Expanding both sides gives
\[
\alpha(A^{2})+t^{2}\alpha(B^{2})+2t\,\alpha(A\circ B)\ge A^{2}+t^{2}\alpha(B)^{2}+2t\,A\circ\alpha(B).
\]
Since $\alpha(A^{2})=A^{2}$, for $t>0$ we obtain
\[
t\,\alpha(B^{2})+2\alpha(A\circ B)\ge t\,\alpha(B)^{2}+2A\circ\alpha(B).
\]
Letting $t\downarrow0$, we get $\alpha(A\circ B)\ge A\circ\alpha(B)$.
Similarly, applying the same argument for $t<0$ and letting $t\uparrow0$,
we obtain $\alpha(A\circ B)\le A\circ\alpha(B)$. Therefore, $\alpha(A\circ B)=A\circ\alpha(B)$. 

Finally, (\ref{eq:JordanCE2}) follows from (\ref{eq:jordan3}). Indeed,
\[
ABA=2\,A\circ(A\circ B)-A^{2}\circ B,
\]
so applying (\ref{eq:JordanCE}) twice and using $\alpha(A^{2})=A^{2}$,
we get
\[
\alpha(ABA)=2\,A\circ(A\circ\alpha(B))-A^{2}\circ\alpha(B)=A\,\alpha(B)\,A.
\]
\end{proof}

At the end of this section, we record that the orthogonal projection
onto a real Jordan algebra is a faithful positive map.
\begin{lem}
\label{lem:proj-J}Let $\mathcal{A}_{J}\subset\mathcal{B}_{h}(\mathcal{H})$
be a real Jordan algebra, and let $\Pi_{J}:\mathcal{B}_{h}(\mathcal{H})\to\mathcal{A}_{J}$
be the orthogonal projection with respect to the real Hilbert--Schmidt
inner product. Then the following hold:
\begin{description}
\item [{(i)}] $\ensuremath{\Pi_{J}}$ is positive. That is, if $B\in\mathcal{B}(\mathcal{H})_{+}$,
then $\Pi_{J}(B)\ge0$.
\item [{(ii)}] $\ensuremath{\Pi_{J}}$ is faithful. That is, if $B\ge0$
and $\Pi_{J}(B)=0$, then $B=0$.
\end{description}
\end{lem}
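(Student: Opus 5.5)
We plan to mimic the proof of positivity in Theorem \ref{thm:proj}, with spectral projections of elements of $\mathcal{A}_{J}$ playing the role of test operators. The only structural input needed is that a real Jordan algebra is closed under functional calculus of its elements. Let $A\in\mathcal{A}_{J}$ and write $A=\sum_{i}\lambda_{i}P_{i}$ for its spectral decomposition. Since $\mathcal{A}_{J}$ contains $I$ and is closed under real linear combinations and Jordan squares, it contains every power $A^{k}$. Indeed $A^{k+1}=A\circ A^{k}$, because $A$ and $A^{k}$ commute. Hence $\mathcal{A}_{J}$ contains every real polynomial in $A$. Using Lagrange interpolation on the finitely many eigenvalues, each spectral projection $P_{i}$ is such a polynomial, so $P_{i}\in\mathcal{A}_{J}$.

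\textbf{Positivity.} Let $B\ge0$. Then $\Pi_{J}(B)\in\mathcal{A}_{J}$ is self-adjoint, and we write $\Pi_{J}(B)=\sum_{i}\lambda_{i}P_{i}$ with $P_{i}\in\mathcal{A}_{J}$ by the previous step. By the defining property of the orthogonal projection,
\[
\lambda_{i}\Tr P_{i}=\Tr P_{i}\Pi_{J}(B)=\left\langle P_{i},\Pi_{J}(B)\right\rangle =\left\langle P_{i},B\right\rangle =\Tr P_{i}B\ge0 .
\]
Here we used that $P_{i}$ commutes with $\Pi_{J}(B)$ and that $\Tr P_{i}B=\Tr P_{i}BP_{i}\ge0$. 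Since $\Tr P_{i}>0$, every $\lambda_{i}\ge0$, and therefore $\Pi_{J}(B)\ge0$.

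\textbf{Faithfulness.} This is immediate from $I\in\mathcal{A}_{J}$. If $B\ge0$ and $\Pi_{J}(B)=0$, then
\[
\Tr B=\left\langle I,B\right\rangle =\left\langle I,\Pi_{J}(B)\right\rangle =0,
\]
and so $B=0$.

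\textbf{Main obstacle.} The only nontrivial point is justifying $P_{i}\in\mathcal{A}_{J}$, because Jordan algebras are not associative. Power-associativity for a single self-adjoint $A$ is straightforward here, since all powers of $A$ commute with one another and so $A\circ A^{k}=A^{k+1}$. The rest of the argument is a direct transcription of the proof of Theorem \ref{thm:proj}, with $\Pi$ replaced by $\Pi_{J}$ acting on $\mathcal{B}_{h}(\mathcal{H})$.
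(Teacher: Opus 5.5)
Your proposal is correct and follows the paper's proof essentially verbatim: you test $\Pi_J(B)=\sum_i\lambda_i P_i$ against its spectral projections to get $\lambda_i\Tr P_i=\Tr P_iB\ge0$, and you obtain faithfulness from $\Tr B=\langle I,\Pi_J(B)\rangle=0$. Your Lagrange-interpolation argument that $P_i\in\mathcal{A}_J$, via $A^{k+1}=A\circ A^{k}$, fills in a step the paper uses without comment.
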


\begin{proof}
We first prove (i). Let $B\in\mathcal{B}(\mathcal{H})_{+}$. Since
$\Pi_{J}(B)$ is self-adjoint, write its spectral decomposition as
$\Pi_{J}(B)=\sum_{i}\lambda_{i}P_{i},$ where $\lambda_{i}\in\mathbb{R}$
and $P_{i}\in\mathcal{A}_{J}$ are the spectral projections. Then
\[
\Tr P_{i}\Pi_{J}(B)=\langle P_{i},\Pi_{J}(B)\rangle=\langle P_{i},B\rangle=\Tr P_{i}B\ge0.
\]
Since $\Tr P_{i}\Pi_{J}(B)=\lambda_{i}\Tr P_{i},$ it follows that
$\lambda_{i}\ge0$ for all $i$, and hence $\Pi_{J}(B)\ge0$.

We next prove (ii). Let $B\ge0$ and assume that $\Pi_{J}(B)=0$.
Then
\[
\Tr B=\langle I,B\rangle=\langle I,\Pi_{J}(B)\rangle=\Tr\Pi_{J}(B)=0.
\]
Since $B\ge0$, this implies $B=0$. Therefore $\Pi_{J}$ is faithful.
\end{proof}

\section{Sufficient algebras derived from a sufficient real positive map\label{sec:pos2ce}}

The goal of this section is to prove Theorem \ref{thm:pos_to_CE}. 

Let $\mathcal{S}$ be a model, and let $\alpha:\mathcal{B}(\mathcal{H}_{\mathcal{S}})\to\mathcal{B}(\mathcal{H}_{\mathcal{S}})$
be a sufficient real unital positive map. Note that, unlike in the
conventional setting, a model in our sense is not restricted to families
of states or positive operators. By Lemma \ref{lem:H2H}, we may consider
the real-linear map $\alpha_{h}:\mathcal{B}_{h}(\mathcal{H}_{\mathcal{S}})\to\mathcal{B}_{h}(\mathcal{H}_{\mathcal{S}}),\,B\mapsto\alpha(B).$

\begin{lem}
\label{lem:faithful} The map $\alpha_{h}$ is faithful. That is,
if $B\in\mathcal{B}(\mathcal{H}_{\mathcal{S}})_{+}$ and $\alpha_{h}(B)=0$,
then $B=0$.
\end{lem}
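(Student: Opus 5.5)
The plan is to convert $\alpha_{h}(B)=0$ into the vanishing of a large family of test operators under $\alpha$. Sufficiency (\ref{eq:sufficient}) then turns this into linear constraints on every $X\in\mathcal{S}$, and these constraints force every vector in the range of $B$ to be orthogonal to $\mathcal{H}_{\mathcal{S}}$. Since all operators live on $\mathcal{H}_{\mathcal{S}}$, this gives $B=0$.

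\textbf{Step 1.} Let $B\ge0$ with $\alpha(B)=0$, and let $v\in B(\mathcal{H}_{\mathcal{S}})$, so $v$ lies in the range of $B$. There is $c>0$ with $0\le\ket{v}\bra{v}\le cB$. By positivity and real-linearity, $0\le\alpha(\ket{v}\bra{v})\le c\,\alpha(B)=0$, so $\alpha(\ket{v}\bra{v})=0$.

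\textbf{Step 2.} This is the key step: extend the vanishing to off-diagonal operators built from $v$. Fix any $w\in\mathcal{H}_{\mathcal{S}}$ and put
\[
D_{1}=\ket{v}\bra{w}+\ket{w}\bra{v},\qquad D_{2}=\ii\ket{v}\bra{w}-\ii\ket{w}\bra{v}.
\]
Both are self-adjoint. For every $t>0$ and every sign, the operators $t\ket{v}\bra{v}+t^{-1}\ket{w}\bra{w}\pm D_{k}$ are positive; for instance the first is $(\sqrt{t}\ket{v}\pm t^{-1/2}\ket{w})(\cdots)^{*}$, and similarly for $D_{2}$ with a phase. Applying $\alpha$ and using Step 1 gives
\[
\pm\alpha(D_{k})\le t^{-1}\alpha(\ket{w}\bra{w}).
\]
Here $\alpha(D_{k})$ is self-adjoint by Lemma \ref{lem:H2H}. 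Letting $t\to\infty$ yields $\alpha(D_{1})=\alpha(D_{2})=0$. I expect this to be the main obstacle. The naive route, using only self-adjoint $D$ supported in $\supp B$, yields only $sXs=0$, where $s=\supp B$. That is too weak because $X\in\mathcal{S}$ need not be positive, so the mixed terms coupling $\mathrm{ran}\,B$ to its complement must be controlled, and the $t$, $t^{-1}$ trick does exactly this.

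\textbf{Step 3.} Apply sufficiency (\ref{eq:sufficient}) to $D_{1}$ and $D_{2}$. For each $X\in\mathcal{S}$,
\[
\re\Tr XD_{k}=\re\Tr X\alpha(D_{k})=0.
\]
Since $X$ is self-adjoint, this gives $\Tr XD_{1}=2\re\bra{w}X\ket{v}=0$ and $\Tr XD_{2}=\pm2\im\bra{w}X\ket{v}=0$. Hence $\bra{w}X\ket{v}=0$ for all $w$, that is, $Xv=0$ for every $X\in\mathcal{S}$.

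\textbf{Step 4.} Since each $X$ is self-adjoint, $\ker X=X(\mathcal{H})^{\perp}$. So $v$ is orthogonal to every $X(\mathcal{H})$, and therefore to $\mathcal{H}_{\mathcal{S}}=\Span_{\C}\{X(\mathcal{H})\mid X\in\mathcal{S}\}$. But $v\in\mathcal{H}_{\mathcal{S}}$, so $v=0$. Since $v$ was an arbitrary vector in the range of $B$, the range of $B$ is trivial and $B=0$. This is precisely where the restriction to $\mathcal{H}_{\mathcal{S}}$ is essential.
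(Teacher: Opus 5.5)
Your proof is correct and follows essentially the same route as the paper. The paper decomposes with respect to the maximal projection $\kappa_{\alpha}$ annihilated by $\alpha$ and uses the domination $\mathrm{diag}(\varepsilon I,\lambda I)\pm B\ge0$ to show that $\alpha$ kills every self-adjoint operator with vanishing $s_{\alpha}$--$s_{\alpha}$ block. It then applies sufficiency to get $X\kappa_{\alpha}=0$, contradicting the definition of $\mathcal{H}_{\mathcal{S}}$. Your argument with $t\ket{v}\bra{v}+t^{-1}\ket{w}\bra{w}\pm D_{k}\ge0$ is the same domination trick applied vector by vector, which avoids constructing $\kappa_{\alpha}$ as a supremum and does not need unitality.
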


\begin{proof}
Let $\kappa_{\alpha}=\sup\left\{ B\in\mathcal{B}(\mathcal{H}_{\mathcal{S}})_{+}\mid\|B\|\le1,\ \alpha_{h}(B)=0\right\} $,
and $s_{\alpha}=I-\kappa_{\alpha}.$ Assume $\kappa_{\alpha}\neq0$,
and derive a contradiction. Decompose $\H_{\S}$ so that $s_{\alpha}=\begin{pmatrix}I & 0\\
0 & 0
\end{pmatrix}.$ 

We first show that $\alpha_{h}\begin{pmatrix}I & 0\\
0 & 0
\end{pmatrix}=I.$ Indeed, this follows immediately from $\alpha_{h}(I)=I$ and $\alpha_{h}\begin{pmatrix}0 & 0\\
0 & I
\end{pmatrix}=0.$

Next we show that, for every $B=\begin{pmatrix}0 & B^{*}_{1}\\
B_{1} & B_{2}
\end{pmatrix}\in\mathcal{B}_{h}(\mathcal{H}_{\mathcal{S}}),$ one has $\alpha_{h}(B)=0$. For any $\varepsilon>0$, there exists
$\lambda>0$ such that
\[
\begin{pmatrix}\varepsilon I & 0\\
0 & \lambda I
\end{pmatrix}\pm\begin{pmatrix}0 & B^{*}_{1}\\
B_{1} & B_{2}
\end{pmatrix}>0.
\]
Hence
\[
\alpha_{h}\!\left(\begin{pmatrix}\varepsilon I & 0\\
0 & \lambda I
\end{pmatrix}\pm\begin{pmatrix}0 & B^{*}_{1}\\
B_{1} & B_{2}
\end{pmatrix}\right)=\varepsilon I\pm\alpha_{h}\!\left(\begin{pmatrix}0 & B^{*}_{1}\\
B_{1} & B_{2}
\end{pmatrix}\right)\ge0.
\]
Since $\varepsilon>0$ is arbitrary, it follows that $\alpha_{h}\!\left(\begin{pmatrix}0 & B^{*}_{1}\\
B_{1} & B_{2}
\end{pmatrix}\right)=0.$

Now let $X\in\mathcal{S}$, and write $X=\begin{pmatrix}X_{0} & X^{*}_{1}\\
X_{1} & X_{2}
\end{pmatrix}$. We show that $X_{1}=0$ and $X_{2}=0$. Indeed, for every $B=\begin{pmatrix}0 & B^{*}_{1}\\
B_{1} & B_{2}
\end{pmatrix}\in\mathcal{B}_{h}(\mathcal{H}_{\mathcal{S}})$, we have
\[
\Tr XB=\Tr X\alpha_{h}(B)=0.
\]
Hence $X_{1}=0$ and $X_{2}=0$. This contradicts the definition $\mathcal{H}_{\mathcal{S}}=\{X(\mathcal{H})\mid X\in\mathcal{S}\}$.
Therefore $\kappa_{\alpha}=0$, and $\alpha_{h}$ is faithful.
\end{proof}

\begin{lem}
\label{lem:pos_map_norm} $\left\Vert \alpha_{h}\right\Vert =1$. 
\end{lem}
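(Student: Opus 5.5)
The plan is to read $\left\Vert \alpha_{h}\right\Vert$ as the norm of $\alpha_{h}$ acting on the real Banach space $\mathcal{B}_{h}(\mathcal{H}_{\mathcal{S}})$ equipped with the operator norm. With this reading the statement follows directly from positivity and unitality. I expect it is used later to control the powers and Cesàro averages of $\alpha_{h}$ when constructing the fixed-point algebra $\mathcal{A}_{J}$ of Theorem \ref{thm:pos_to_CE}. Sufficiency is not needed here; the lemma holds for any real unital positive map, and its role is only to guarantee that the iterates $\alpha_{h}^{n}$ stay uniformly bounded.

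First I would prove the upper bound $\left\Vert \alpha_{h}\right\Vert \le1$. Take $B\in\mathcal{B}_{h}(\mathcal{H}_{\mathcal{S}})$ with $\|B\|\le1$. Since $B$ is self-adjoint, this is equivalent to the operator inequality $-I\le B\le I$, so both $I-B$ and $I+B$ are positive. Applying the real positive map $\alpha$ to these two operators and using $\alpha(I)=I$ gives $I-\alpha_{h}(B)\ge0$ and $I+\alpha_{h}(B)\ge0$. By Lemma \ref{lem:H2H}, $\alpha_{h}(B)$ is self-adjoint, so these inequalities say exactly that $\|\alpha_{h}(B)\|\le1$.

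For the lower bound, the unit $I$ has norm $1$ and $\alpha_{h}(I)=I$, hence $\left\Vert \alpha_{h}\right\Vert \ge1$. The two bounds together give $\left\Vert \alpha_{h}\right\Vert =1$. Applying the same estimate to each power yields $\|\alpha_{h}^{n}\|=1$ for all $n$. Consequently every eigenvalue of $\alpha_{h}$, viewed as a linear operator on a finite-dimensional space, has modulus at most $1$. This is the form in which I would use the lemma later, namely to show that $\frac{1}{N}\sum_{n<N}\alpha_{h}^{n}$ converges to an idempotent positive map.

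The only real obstacle is the choice of norm. If the Hilbert--Schmidt norm were intended, the claim would fail for a general unital positive map; for example, $B\mapsto\langle\psi|B|\psi\rangle I$ has Hilbert--Schmidt norm $\sqrt{\dim\mathcal{H}_{\mathcal{S}}}$. In that case the argument would have to use sufficiency and the faithfulness established in Lemma \ref{lem:faithful}, together with the trace-preserving dual map. I would therefore commit to the operator-norm reading, which is the one sufficient for the ergodic-averaging argument.
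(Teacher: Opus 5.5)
Your proof is correct. Your reading of $\|\alpha_{h}\|$ as the operator norm on $\mathcal{B}_{h}(\mathcal{H}_{\mathcal{S}})$ is the one the paper uses. The route differs slightly.

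The paper bounds $\alpha_{h}(B)^{2}\le\alpha_{h}(B^{2})\le\alpha_{h}(I)=I$. This invokes the Schwarz inequality of Lemma \ref{lem:schwartz}, which in turn rests on complete positivity of positive maps on the commutative algebra generated by $B$. You instead use only the order characterization $-I\le B\le I$ of $\|B\|\le1$ for self-adjoint $B$, together with positivity, unitality and Lemma \ref{lem:H2H}.

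Your argument is more elementary and shows that the Schwarz step is not needed here. The paper already applies positivity to $I-B^{2}\ge0$ to get $\alpha_{h}(B^{2})\le I$; applying the same positivity to $I\pm B$ gives the conclusion directly. You also make the lower bound $\|\alpha_{h}\|\ge1$, via $\alpha_{h}(I)=I$, explicit; the paper leaves it implicit. The paper's version has only the minor advantage of reusing a lemma it needs anyway for Theorem \ref{thm:pos2J}.

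Your follow-up observations, that $\|\alpha_{h}^{n}\|=1$ and that all eigenvalues of $\alpha_{h}$ have modulus at most $1$, are exactly what Lemma \ref{lem:suff_proj} uses. The paper also notes that the ergodic average you mention works in place of $\bigl\{\tfrac{1}{2}(\mathrm{id}+\alpha_{h})\bigr\}^{n}$.
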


\begin{proof}
Let $B\in\mathcal{B}_{h}(\mathcal{H}_{\mathcal{S}})$ satisfy $\|B\|\le1$.
By the Schwarz inequality in Lemma \ref{lem:schwartz},
\[
\alpha_{h}(B)^{2}\le\alpha_{h}(B^{2})\le\alpha_{h}(I)=I.
\]
Hence $\|\alpha_{h}(B)\|\le1$, and therefore $\|\alpha_{h}\|=1$.
\end{proof}

\begin{lem}
\label{lem:suff_proj} Let $\mathcal{A}_{J}=\{A\in\mathcal{B}_{h}(\mathcal{H}_{\mathcal{S}})\mid\alpha_{h}(A)=A\}$
be the fixed-point set of $\alpha_{h}$. Then
\[
\beta_{J}=\lim_{n\to\infty}\left\{ \frac{1}{2}(\mathrm{id}+\alpha_{h})\right\} ^{n}
\]
exists, and $\beta_{J}:\mathcal{B}_{h}(\mathcal{H}_{\mathcal{S}})\to\mathcal{B}_{h}(\mathcal{H}_{\mathcal{S}})$
is a sufficient positive map satisfying $\beta^{2}_{J}=\beta_{J}$
and $\beta_{J}(\mathcal{B}_{h}(\mathcal{H}_{\mathcal{S}}))=\mathcal{A}_{J}$.
In other words, $\beta_{J}$ is a real projection onto $\mathcal{A}_{J}$.
\end{lem}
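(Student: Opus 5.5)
Set $T=\frac{1}{2}(\mathrm{id}+\alpha_{h})$, a real-linear operator on the finite-dimensional real space $\mathcal{B}_{h}(\mathcal{H}_{\mathcal{S}})$. The plan is to show that the powers $T^{n}$ converge by analysing the spectrum of $T$, and then to read off the remaining properties from the limit.

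\emph{Convergence.} By Lemma \ref{lem:pos_map_norm}, $\|\alpha_{h}\|=1$ in the operator norm induced by the operator norm on $\mathcal{B}_{h}(\mathcal{H}_{\mathcal{S}})$. Hence every eigenvalue $\lambda$ of the complexification of $\alpha_{h}$ satisfies $|\lambda|\le1$, and the powers $\alpha_{h}^{k}$ are uniformly bounded. The latter implies that every eigenvalue of modulus one is semisimple, i.e.\ has no nontrivial Jordan block, since a block would make $\|\alpha_{h}^{k}\|$ grow linearly. The eigenvalues of $T$ are $\mu=(1+\lambda)/2$, and $|\mu|\le1$ with equality only when $\lambda=1$. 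Moreover $\|T\|\le1$, so the eigenvalue $\mu=1$ is semisimple as well. In the Jordan decomposition of $T$, every block with $|\mu|<1$ has powers tending to $0$, while the eigenspace for $\mu=1$ is $\ker(\alpha_{h}-\mathrm{id})=\mathcal{A}_{J}$, on which $T$ acts as the identity. Therefore $T^{n}$ converges to the spectral projection $\beta_{J}$ onto $\mathcal{A}_{J}$ along the sum of the other generalized eigenspaces.

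\emph{Properties of the limit.} Idempotence $\beta_{J}^{2}=\beta_{J}$ and $\beta_{J}(\mathcal{B}_{h}(\mathcal{H}_{\mathcal{S}}))=\mathcal{A}_{J}$ are immediate from $\beta_{J}$ being this spectral projection. For $B\ge0$, each $T^{n}(B)$ is a positive combination of the positive operators $\alpha_{h}^{k}(B)$, so positivity passes to the limit. Unitality follows from $T(I)=I$. For sufficiency, note that $\mathrm{Tr}\,X\alpha_{h}(B)=\mathrm{Tr}\,XB$ for $X\in\mathcal{S}$ and $B$ Hermitian; this is the real part of the defining condition, and both traces are real here. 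Hence the same identity holds for $T$, for $T^{n}$, and by continuity for $\beta_{J}$. If a map on all of $\mathcal{B}(\mathcal{H}_{\mathcal{S}})$ is needed, extend by $B\mapsto\beta_{J}(\mathrm{Re}B)+\ii\,\beta_{J}(\mathrm{Im}B)$. This extension is sufficient because $\mathrm{Re}\,\mathrm{Tr}\,X(\ii C)=0$ for Hermitian $X,C$.

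\emph{Main obstacle.} The delicate step is the spectral argument that peripheral eigenvalues are semisimple and that $\lambda=1$ is the only eigenvalue of $\alpha_{h}$ mapped to $|\mu|=1$. The first relies on the norm bound from Lemma \ref{lem:pos_map_norm}, which in turn uses the Schwarz inequality. The second is elementary: $|1+\lambda|=2$ with $|\lambda|\le1$ forces $\lambda=1$. Faithfulness (Lemma \ref{lem:faithful}) is not needed here, but it will be used afterwards. With it, Theorem \ref{thm:pos2J} applies to $\beta_{J}$, which is faithful because $\mathrm{Tr}\,\beta_{J}$-type arguments via $T^{n}(B)\ge\frac{1}{2^{n}}B$ give $\beta_{J}(B)=0\Rightarrow$ contradiction with faithfulness of $\alpha_{h}$-iterates.
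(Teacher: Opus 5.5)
Your proof is correct and follows the paper's route. The norm bound of Lemma \ref{lem:pos_map_norm} confines the spectrum of $\alpha_h$ to the unit disk, so $\mu=1$ is the only peripheral eigenvalue of $T=\frac{1}{2}(\mathrm{id}+\alpha_h)$, and power-boundedness then gives convergence of $T^n$ to the projection onto the fixed points; you prove Lemma \ref{lem:lim_proj} inline via Jordan blocks and check positivity and sufficiency of the limit explicitly, which the paper leaves implicit.

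One caveat concerns your closing aside, which lies outside the statement. Faithfulness of $\beta_J$ cannot be obtained from $T^n(B)\ge 2^{-n}B$ or from faithfulness of $\alpha_h$ alone. For example, $\alpha(B)=\mathrm{diag}\bigl(B_{11},\tfrac{1}{2}(B_{11}+B_{22})\bigr)$ is faithful, unital and positive, yet its limit $B\mapsto B_{11}I$ kills the matrix unit $E_{22}$. Faithfulness of $\beta_J$ instead follows by applying Lemma \ref{lem:faithful} directly to $\beta_J$, since that proof only uses that the map is unital, positive and sufficient on $\mathcal{B}(\mathcal{H}_{\mathcal{S}})$.
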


\begin{proof}
By Lemma \ref{lem:pos_map_norm}, every eigenvalue $\lambda$ of $\alpha_{h}$
satisfies $|\lambda|\le1$. Hence every eigenvalue of $\frac{1}{2}(\mathrm{id}+\alpha_{h})$
with modulus one must be equal to $1$. Again by Lemma \ref{lem:pos_map_norm},
$\sup_{n\in\mathbb{N}}\left\Vert \left\{ \frac{1}{2}(\mathrm{id}+\alpha_{h})\right\} ^{n}\right\Vert \le1.$
Since $\alpha_{h}$ and $\frac{1}{2}(\mathrm{id}+\alpha_{h})$ have
the same fixed points, the conclusion follows from Lemma \ref{lem:lim_proj}.
\end{proof}

The same conclusion also holds if one uses the ergodic average $\beta_{E}=\lim_{n\to\infty}\frac{1}{n}\sum^{n}_{k=1}\alpha^{k}_{h}$. 
\begin{lem}
\label{lem:lim_proj}Let $V$ be a finite-dimensional real vector
space, and let $T:V\to V$ be a linear map whose eigenvalues satisfy
$|\lambda|\le1$, with $1$ being the only eigenvalue of modulus one.
If $\sup_{n\in\mathbb{N}}\|T^{n}\|<\infty$, then the limit $T^{\infty}=\lim_{n\to\infty}T^{n}$
exists, and $T^{\infty}$ is the projection onto the fixed-point subspace
of $T$.
\end{lem}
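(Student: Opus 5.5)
The approach is the Jordan canonical form of $T$ over $\mathbb{C}$. We complexify $V$ to $V_{\mathbb{C}}=V\otimes_{\mathbb{R}}\mathbb{C}$ and extend $T$ complex-linearly to $T_{\mathbb{C}}$. This preserves the spectrum and the uniform bound $\sup_n\|T_{\mathbb{C}}^n\|<\infty$ for any norm on $V_{\mathbb{C}}$, since all norms are equivalent in finite dimension. We then split $V_{\mathbb{C}}=E_1\oplus E_{<1}$, where $E_1$ is the generalized eigenspace for the eigenvalue $1$ and $E_{<1}$ is the sum of the generalized eigenspaces for eigenvalues with $|\lambda|<1$. Both subspaces are $T_{\mathbb{C}}$-invariant. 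Because eigenvalues come in conjugate pairs and $1$ is real, both are invariant under complex conjugation, so they descend to a real decomposition $V=V_1\oplus V_{<1}$.

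On $V_{<1}$ the spectral radius of $T$ is strictly less than $1$. By Gelfand's formula, or directly from the Jordan form, where $\|J^n\|\le C n^{d}r^n\to0$, we get $T^n|_{V_{<1}}\to0$.

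On $V_1$ we write $T|_{V_1}=I+N$ with $N$ nilpotent. This is the one place where boundedness is used, and it is the main point. If $N\neq0$, let $k\ge1$ be maximal with $N^k\neq0$. Then
\[
T^n=\sum_{j=0}^{k}\binom{n}{j}N^j,
\]
and $\binom{n}{k}N^k$ dominates, so $\|T^n\|\to\infty$, contradicting the uniform bound. Concretely, choose $v$ with $N^kv\neq0$ and note that $\|T^nv\|/\binom{n}{k}\to\|N^kv\|>0$. Hence $N=0$, so $T$ is the identity on $V_1$. It follows that $V_1$ coincides with the fixed-point subspace $\ker(T-I)$; the inclusion $\ker(T-I)\subset V_1$ is trivial.

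Therefore $T^n$ converges to the map $T^\infty$ that is the identity on $V_1$ and zero on $V_{<1}$. This is the (generally oblique) projection onto the fixed-point subspace along $V_{<1}$. It satisfies $(T^\infty)^2=T^\infty$, its range is the fixed-point set, and it commutes with $T$.

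In the application to Lemma \ref{lem:suff_proj}, one also needs that the limit inherits positivity and sufficiency. Positivity passes to the limit because the positive cone is closed. Sufficiency passes because \eqref{eq:sufficient} is preserved under composition, convex combination with $\mathrm{id}$, and limits. These facts do not belong to the proof of the present lemma.
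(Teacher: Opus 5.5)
The paper does not prove Lemma~\ref{lem:lim_proj}; it states it as a standard linear-algebra fact for use in Lemma~\ref{lem:suff_proj}. So there is no argument in the paper to compare with, and your proof is correct and complete.

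Your proof has three steps, each sound:
\begin{itemize}
\item You split $V=V_1\oplus V_{<1}$ into conjugation-invariant generalized eigenspaces, so the splitting descends to the real space $V$.
\item On $V_{<1}$ the spectral radius is below $1$, so $T^n\to 0$ there.
\item On $V_1$ you write $T=I+N$. Boundedness of $\|T^nv\|$ against the growth $\binom{n}{k}N^kv$ forces $N=0$, so $T$ is the identity on $V_1$.
\end{itemize}
This is the right use of the hypothesis $\sup_n\|T^n\|<\infty$: it makes the eigenvalue $1$ semisimple. You are also right that the limit is in general an oblique projection (onto $\ker(T-I)$ along $V_{<1}$), and the lemma claims nothing more.

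One cosmetic point. Saying the bound holds ``for any norm on $V_{\mathbb{C}}$, since all norms are equivalent'' needs one norm in which the bound visibly holds, e.g.\ $\|u+\mathrm{i}v\|:=\|u\|+\|v\|$. You never actually need this, because after descending to the real splitting you apply the bound only to real vectors $v\in V_1$.
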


Define a positive operator $\omega\in\mathcal{B}(\mathcal{H}_{\mathcal{S}})_{+}$
by 
\[
\Tr\omega B=\Tr\beta_{J}(B),\qquad B\in\mathcal{B}_{h}(\mathcal{H}_{\mathcal{S}}).
\]

\begin{lem}
\label{lem:omega}The operator $\omega$ has the following properties:
\begin{description}
\item [{(i)}] $\omega>0$.
\item [{(ii)}] For every $B\in\mathcal{B}_{h}(\mathcal{H}_{\mathcal{S}})$,
\[
\re\Tr\omega B=\re\Tr\omega\beta_{J}(B).
\]
\item [{(iii)}] For every $A\in\mathcal{A}_{J}$, 
\[
\omega A=A\omega.
\]
\end{description}
\end{lem}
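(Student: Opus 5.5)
We prove the three properties in the order (ii), (i), (iii), working throughout with the real-linear map $\beta_{J}$ on $\mathcal{B}_{h}(\H_{\S})$. By Lemma \ref{lem:suff_proj}, $\beta_{J}$ is an idempotent sufficient positive map. It is unital because $\alpha_{h}(I)=I$ gives $\frac{1}{2}(\mathrm{id}+\alpha_{h})(I)=I$, and this persists in the limit. It is also faithful: if $B\ge0$ and $\beta_{J}(B)=0$, then $\Tr\beta_{J}(B)=0$. By the construction of $\omega$ below, $\Tr\beta_{J}(B)=\Tr\omega B$, so once $\omega>0$ is known, $B=0$ follows. It is therefore cleaner to prove faithfulness directly instead. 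For this we redo the argument of Lemma \ref{lem:faithful} with $\beta_{J}$ in place of $\alpha_{h}$; this is legitimate because $\beta_{J}$ is sufficient and unital. Consequently Theorem \ref{thm:pos2J} applies to $\beta_{J}$ (extended complex-linearly to $\mathcal{B}(\H_{\S})$, which preserves positivity). Hence $\mathcal{A}_{J}$ is a real Jordan algebra, and $\beta_{J}$ is a Jordan conditional expectation satisfying $\beta_{J}(A\circ B)=A\circ\beta_{J}(B)$ and $\beta_{J}(ABA)=A\beta_{J}(B)A$.

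\textbf{Step 1: existence and positivity of $\omega$, and (ii).} The functional $B\mapsto\Tr\beta_{J}(B)$ is real-linear on $\mathcal{B}_{h}(\H_{\S})$. By Riesz with respect to the real Hilbert--Schmidt inner product, there is a unique $\omega\in\mathcal{B}_{h}(\H_{\S})$ representing it, and positivity of $\beta_{J}$ gives $\omega\ge0$. Property (ii) follows from idempotence: $\Tr\omega\beta_{J}(B)=\Tr\beta_{J}^{2}(B)=\Tr\beta_{J}(B)=\Tr\omega B$.

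\textbf{Step 2: property (i).} Let $v\in\ker\omega$ and put $P=\ket{v}\bra{v}\ge0$. Then $\Tr\beta_{J}(P)=\Tr\omega P=0$. Since $\beta_{J}(P)\ge0$, this forces $\beta_{J}(P)=0$, and faithfulness of $\beta_{J}$ gives $P=0$, i.e. $v=0$. Hence $\omega>0$.

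\textbf{Step 3: property (iii).} Take $A\in\mathcal{A}_{J}$ and $B\in\mathcal{B}_{h}(\H_{\S})$. Since $A\circ B\in\mathcal{B}_{h}$ and $A\circ\beta_{J}(B)=\beta_{J}(A\circ B)$,
\[
\Tr\omega(A\circ B)=\Tr\beta_{J}(A\circ B)=\Tr A\circ\beta_{J}(B)=\Tr A\beta_{J}(B).
\]
We want to compare this with $\Tr(\omega\circ A)B$. Applying (ii) with $A\in\mathcal{A}_{J}$ fixed is not directly helpful, so we use the following approach instead. Write $\Tr\omega(A\circ B)=\Tr(\omega\circ A)B$, so the identity above says $\Tr(\omega\circ A)B=\Tr A\beta_{J}(B)$ for all $B$. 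This determines $\omega\circ A$ as the Hilbert--Schmidt adjoint $\beta_{J}^{\dagger}(A)$.

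It remains to extract commutativity. The Jordan identity yields only $\omega\circ A$, whereas we need $\omega A=A\omega$, so we use the second identity $\beta_{J}(ABA)=A\beta_{J}(B)A$. For a projection $A=p\in\mathcal{A}_{J}$, put $q=I-p\in\mathcal{A}_{J}$. Then
\[
\Tr\omega pBq=\tfrac{1}{2}\Tr\omega\bigl((p+q)B(p+q)-pBp-qBq\bigr)-\tfrac{1}{2}\Tr\omega(qBp),
\]
and the pieces $\Tr\omega pBp=\Tr\beta_{J}(pBp)=\Tr p\beta_{J}(B)p$ are controlled. More directly, take $B=pCq+qC^{*}p$, which lies in $\mathcal{B}_{h}$. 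Then $pBp=qBq=0$, so $\beta_{J}(B)=\beta_{J}(B)-p\beta_{J}(B)p-q\beta_{J}(B)q=p\beta_{J}(B)q+q\beta_{J}(B)p$, and this is traceless. Hence $\re\Tr\omega(pCq+qC^{*}p)=0$ for all $C$, i.e. $q\omega p=0$. Thus $\omega$ commutes with every projection in $\mathcal{A}_{J}$. By the spectral theorem, the spectral projections of any $A\in\mathcal{A}_{J}$ lie in $\mathcal{A}_{J}$ (the Jordan algebra contains polynomials in $A$), so $\omega A=A\omega$.

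The main obstacle is Step 3: passing from the Jordan module property, which only controls $\omega\circ A$, to genuine commutation. This hinges on using the two-sided identity $\beta_{J}(pBp)=p\beta_{J}(B)p$ on off-diagonal test operators, together with the fact that spectral projections stay inside $\mathcal{A}_{J}$. Some care is also needed in Step 0 to justify faithfulness of $\beta_{J}$ before invoking Theorem \ref{thm:pos2J}.
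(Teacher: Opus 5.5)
Your proof is correct and follows the paper's route. You get (i) from faithfulness of $\beta_{J}$ (Lemma \ref{lem:faithful} applied to $\beta_{J}$), (ii) from idempotence, and (iii) from the two-sided identity $\beta_{J}(ABA)=A\beta_{J}(B)A$ of Theorem \ref{thm:pos2J} plus a spectral argument. The paper applies that identity to symmetries $A$ with $A^{2}=I$ to get $\Tr A\omega AC=\Tr\omega C$ directly, while you apply it to projections $p,q=I-p$ on off-diagonal test operators to get $q\omega p=0$; these are the same statement, since $A=p-q$.

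The only blemish is the abandoned display for $\Tr\omega pBq$: its right-hand side equals $\frac{1}{2}\Tr\omega pBq$, so it is wrong by a factor of $2$. It is not used, and you should simply delete it.
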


\begin{proof}
We first prove (i). Since $\beta_{J}$ is faithful by Lemma \ref{lem:faithful},
it follows immediately that $\omega>0$. 

For (ii), we compute
\[
\re\Tr\omega\beta_{J}(B)=\re\Tr\beta_{J}(\beta_{J}(B))=\re\Tr\beta_{J}(B)=\re\Tr\omega B.
\]

We next prove (iii). Let $\mathcal{A}^{e}_{J}=\{A\in\mathcal{A}_{J}\mid A^{2}=I\}$.
Since $\mathcal{A}_{J}=\Span_{\mathbb{R}}\mathcal{A}^{e}_{J}$, it
suffices to show that $\omega A=A\omega$, equivalently $A\omega A=\omega$,
for every $A\in\mathcal{A}^{e}_{J}$. For this, it suffices to prove
that $\Tr A\omega AC=\Tr\omega C$ for all $C\in\mathcal{B}_{h}(\mathcal{H}_{\mathcal{S}}))$.
Using (\ref{eq:JordanCE2}), we obtain
\begin{align*}
\Tr A\omega AC & =\Tr\omega ACA=\Tr\beta_{J}(ACA)=\Tr A\beta_{J}(C)A\\
 & =\Tr A^{2}\beta_{J}(C)=\Tr\beta_{J}(C)=\Tr\omega C.
\end{align*}
This proves the claim.
\end{proof}

Let $\mathcal{A}_{\mathbb{R}}$ and $\mathcal{A}_{\mathbb{C}}$ be
the real and complex $*$-subalgebras generated by $\mathcal{A}_{J}$,
respectively. Let $\Pi_{J}:\mathcal{B}(\mathcal{H}_{\mathcal{S}})\to\mathcal{A}_{J}$,
$\Pi_{\mathbb{R}}:\mathcal{B}(\mathcal{H}_{\mathcal{S}})\to\mathcal{A}_{\mathbb{R}}$,
$\Pi_{\mathbb{C}}:\mathcal{B}(\mathcal{H}_{\mathcal{S}})\to\mathcal{A}_{\mathbb{C}}$
be the orthogonal projections with respect to the real Hilbert--Schmidt
inner product.
\begin{lem}
\label{lem:omega_proj}The orthogonal projections $\Pi_{J}$, $\Pi_{\mathbb{R}}$,
and $\Pi_{\mathbb{C}}$ satisfy the following properties:
\begin{description}
\item [{(i)}] 
\[
\Pi_{J}(\omega)=\Pi_{\R}(\omega)=\Pi_{\C}(\omega)=I.
\]
\item [{(ii)}] For every $X\in\mathcal{S}$, 
\[
\Pi_{J}(X)=\Pi_{\mathbb{R}}(X)=\Pi_{\mathbb{C}}(X)=X\omega^{-1}.
\]
\item [{(iii)}] For every $X\in\mathcal{S}$,
\[
[\omega,X]=0.
\]
\end{description}
\end{lem}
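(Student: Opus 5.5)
Since $\omega$ is strictly positive (Lemma \ref{lem:omega}(i)) and commutes with every element of $\mathcal{A}_{J}$ (Lemma \ref{lem:omega}(iii)), we first show that $\omega$ also commutes with $\mathcal{A}_{\mathbb{R}}$ and $\mathcal{A}_{\mathbb{C}}$. Indeed, the commutant of $\omega$ is a complex $*$-subalgebra containing $\mathcal{A}_{J}$, so it contains $\mathcal{A}_{\mathbb{C}}\supset\mathcal{A}_{\mathbb{R}}\supset\mathcal{A}_{J}$. Consequently $\omega^{-1}$ commutes with these algebras as well, since it is a polynomial in $\omega$.

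For (i), let $\mathcal{A}$ be any of $\mathcal{A}_{J},\mathcal{A}_{\mathbb{R}},\mathcal{A}_{\mathbb{C}}$. We must show $\langle A,\omega\rangle=\langle A,I\rangle$ for all $A\in\mathcal{A}$. For self-adjoint $A\in\mathcal{A}_{J}$ this is exactly the definition of $\omega$: $\Tr\omega A=\Tr\beta_{J}(A)=\Tr A$. For $\mathcal{A}_{\mathbb{C}}$, write a general element as $A=H+\ii K$ with $H,K$ self-adjoint elements of $\mathcal{A}_{\mathbb{C}}$. Then $\re\Tr\omega A=\Tr\omega H$ and $\re\Tr A=\Tr H$, so it suffices to treat self-adjoint $H\in\mathcal{A}_{\mathbb{C}}$.

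The main obstacle is that such $H$ need not lie in $\mathcal{A}_{J}$; it is a real combination of symmetrized products of elements of $\mathcal{A}_{J}$. The plan is to use the conditional-expectation property of $\beta_{J}$. We extend $\beta_{J}$ complex-linearly and note that $\Tr\omega B=\Tr\beta_{J}(B)$ for all $B$. Using Lemma \ref{lem:omega}(ii) together with $[\omega,A]=0$, we want $\Tr\omega A_{1}\cdots A_{k}=\Tr A_{1}\cdots A_{k}$ for $A_{i}\in\mathcal{A}_{J}$.

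A cleaner route avoids products entirely. On each block of the decomposition, $\omega$ commutes with $\mathcal{A}_{J}$ and hence lies in its commutant. Since $\mathcal{A}_{\mathbb{C}}$ and $\mathcal{A}_{J}$ have the same commutant, $\omega$ lies in the commutant of $\mathcal{A}_{\mathbb{C}}$. Then $\Pi_{\mathbb{C}}(\omega)$ lies in the centre $Z(\mathcal{A}_{\mathbb{C}})$. I would argue that the central projections of $\mathcal{A}_{\mathbb{C}}$ are Jordan-generated, i.e.\ they lie in $\mathcal{A}_{J}$ (each is a limit of polynomials in elements of $\mathcal{A}_{J}$, or follows from the block structure (\ref{eq:jordan_deco}) with identities on each summand). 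On a factor block, $\Pi_{\mathbb{C}}(\omega)$ is a scalar multiple of a central projection $z\in\mathcal{A}_{J}$, and that scalar is fixed by $\Tr\omega z=\Tr z$. This gives $\Pi_{\mathbb{C}}(\omega)=I$. Since $I\in\mathcal{A}_{J}\subset\mathcal{A}_{\mathbb{R}}$, nested projections then give $\Pi_{\mathbb{R}}(\omega)=\Pi_{J}(\omega)=I$.

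For (ii), sufficiency of $\beta_{J}$ for $X\in\mathcal{S}$ follows as in the proof of Lemma \ref{lem:suff_proj}: it is a limit of convex combinations of iterates of $\alpha$. Hence $\Tr XB=\Tr X\beta_{J}(B)$. For $B\in\mathcal{B}_{h}$ orthogonal to $\mathcal{A}_{J}$, we have $\beta_{J}(B)\in\mathcal{A}_{J}$ and $\Tr\omega\beta_{J}(B)A=\Tr\beta_{J}(\beta_{J}(B)\circ A)$. The key identity to derive is
\[
\Tr XB=\Tr\omega\,\Pi'(X)\,B ,
\]
where $\Pi'$ is the $\omega$-weighted projection $\beta_{J}$. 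By Jordan-CE self-adjointness with respect to $\langle\cdot,\cdot\rangle_{\omega}$ (from (\ref{eq:JordanCE}) and $\Tr\omega\beta_{J}=\Tr\omega$), we get $\Tr XB=\Tr\omega\,Y B$ for the unique $Y\in\mathcal{A}_{J}$ representing the functional $B\mapsto\Tr X\beta_{J}(B)$. Thus $X=\omega Y$ with $Y\in\mathcal{A}_{J}$. Then $[\omega,Y]=0$ gives $X=\omega^{1/2}Y\omega^{1/2}$, which is self-adjoint and commutes with $\omega$; this proves (iii). It also gives $X\omega^{-1}=Y\in\mathcal{A}_{J}$.

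Finally, $\Pi_{\mathcal{A}}(X)=\Pi_{\mathcal{A}}(\omega Y)$. Since $\omega Y$ commutes with the algebra, the same centre argument as in (i), applied to $\omega$ multiplied by $Y\in\mathcal{A}_{J}$, gives $\Pi_{\mathcal{A}}(\omega Y)=Y\,\Pi_{\mathcal{A}}(\omega)=Y$. Concretely, using the conditional-expectation property from Theorem \ref{thm:proj}(ii) for $\mathcal{A}_{\mathbb{R}}$ and $\mathcal{A}_{\mathbb{C}}$, and (\ref{eq:JordanCE}) for $\Pi_{J}$ via $Y\circ\omega=Y\omega$, we obtain $\Pi_{\mathcal{A}}(\omega Y)=Y\Pi_{\mathcal{A}}(\omega)=Y$. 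This yields $\Pi_{J}(X)=\Pi_{\mathbb{R}}(X)=\Pi_{\mathbb{C}}(X)=X\omega^{-1}$.
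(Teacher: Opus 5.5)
Your proposal has a genuine gap in (i), and it is the same gap as in the paper's proof. Structurally you follow the paper. For (i) you show that $\Pi_{\mathbb{C}}(\omega)$ is central in $\mathcal{A}_{\mathbb{C}}$, and then need that central element to lie in $\mathcal{A}_J$, so that $\Pi_{\mathbb{C}}(\omega)=\Pi_J(\omega)=I$. For (ii) you get $X=\omega Y$ with $Y\in\mathcal{A}_J$ from the Jordan conditional-expectation property of $\beta_J$, and then push $\Pi_{\mathbb{R}},\Pi_{\mathbb{C}}$ through using (i). The parts $\Pi_J(\omega)=I$, $X=\omega\Pi_J(X)$, $\Pi_J(X)=X\omega^{-1}$ and (iii) are fine.

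The gap is the step ``the central projections of $\mathcal{A}_{\mathbb{C}}$ lie in $\mathcal{A}_J$''. Your first justification does not work. Functional calculus of one element of $\mathcal{A}_J$ stays in $\mathcal{A}_J$, but a central projection of $\mathcal{A}_{\mathbb{C}}$ need not be a function of a single element of $\mathcal{A}_J$. Your second justification is exactly the paper's passage to (\ref{eq:omega0_deco}). It tacitly assumes that every simple summand of $\mathcal{A}_J$ sits as $\mathbb{M}_n(\cdot)_h\otimes I_m$ in one fixed representation, which is false in general. For example, $\mathbb{M}_n(\mathbb{C})_h$ can be embedded as $\{H\oplus H^{T}\}$, and odd spin factors as the sum of their two inequivalent representations. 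In those cases the generated algebra has a larger centre on that summand.

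In fact the statement itself fails for $\Pi_{\mathbb{C}}$, so this step cannot be repaired. Here is a counterexample.
Let $\mathcal{H}_{\mathcal{S}}=\mathbb{C}^2\oplus\mathbb{C}^2$ and fix $p\in(0,1)\setminus\{1/2\}$. Put $\alpha(B)=M\oplus M^{T}$ with $M=pB_{11}+(1-p)B_{22}^{T}$, where $B_{11},B_{22}$ are the diagonal blocks of $B$. This map is unital, positive and idempotent. Take $\mathcal{S}=\{X\}$ with $X=2pI_2\oplus 2(1-p)I_2$. Then $\mathrm{Tr}\,X\alpha(B)=2\,\mathrm{Tr}\,M=\mathrm{Tr}\,XB$, so $\alpha$ is sufficient.
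It follows that $\mathcal{A}_J=\{H\oplus H^{T}\mid H\in\mathbb{M}_2(\mathbb{C})_h\}$, $\beta_J=\alpha$ and $\omega=X$. With $\gamma_k=\sigma_k\oplus\sigma_k^{T}\in\mathcal{A}_J$ ($\sigma_k$ the Pauli matrices) one gets $\gamma_1\gamma_2\gamma_3=\mathrm{i}I_2\oplus(-\mathrm{i}I_2)$. Hence $\mathcal{A}_{\mathbb{C}}=\mathbb{M}_2(\mathbb{C})\oplus\mathbb{M}_2(\mathbb{C})\ni\omega$, so $\Pi_{\mathbb{C}}(\omega)=\omega\neq I$ and $\Pi_{\mathbb{C}}(X)=X\neq I=X\omega^{-1}$.

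In this example $\Pi_{\mathbb{R}}(\omega)=I$ does hold, because $\mathcal{A}_{\mathbb{R}}=\{A\oplus\bar A\}$. However, an analogous construction with $\Gamma_5$ represented by $\gamma_k\oplus(-\gamma_k)$ puts $\gamma_1\cdots\gamma_5=\pm(I\oplus(-I))$ into $\mathcal{A}_{\mathbb{R}}$. That breaks $\Pi_{\mathbb{R}}(\omega)=I$ as well.

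What your argument, like the paper's, actually proves is the $\Pi_J$ part of (i) and (ii), together with (iii). The $\Pi_{\mathbb{R}}$ and $\Pi_{\mathbb{C}}$ claims need an extra hypothesis: the self-adjoint part of the centre of $\mathcal{A}_{\mathbb{R}}$, respectively $\mathcal{A}_{\mathbb{C}}$, must be contained in $\mathcal{A}_J$.
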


\begin{proof}
We first prove (i). To show $\Pi_{J}(\omega)=I$, let $A\in\mathcal{A}_{J}$.
Then $\Tr\omega A=\Tr\beta_{J}(A)=\Tr A,$ which implies $\Pi_{J}(\omega)=I$.

Next, let $\omega_{0}=\Pi_{\mathbb{C}}(\omega)$. We claim that $[\omega_{0},A]=0$
for $A\in\mathcal{A}_{\mathbb{C}}$. Indeed, for every $C\in\mathcal{A}_{\mathbb{C}}$,
\[
\re\Tr\omega_{0}AC=\re\Tr\omega AC=\re\Tr\omega CA=\re\Tr A\omega_{0}C.
\]
Hence $[\omega_{0},A]=0$. Thus $\omega_{0}$ belongs to the center
of $\mathcal{A}_{\mathbb{C}}$. 

If $\mathcal{A}_{J}$ is decomposed as in (\ref{eq:jordan_deco}),
then $\omega_{0}$ must be of the form
\begin{equation}
\omega_{0}=\left\{ \bigoplus_{i}a^{(i)}I^{(i)}\otimes I_{m_{i}}\right\} \oplus\left\{ \bigoplus_{j}a^{(j)}I^{(j)}\otimes I_{m_{j}}\right\} \oplus\left\{ \bigoplus_{k}a^{(k)}I^{(k)}\otimes I_{m_{k}}\right\} \oplus\left\{ \bigoplus_{l}a^{(l)}I^{(l)}\otimes I_{m_{l}}\right\} .\label{eq:omega0_deco}
\end{equation}
where $I^{(i)},I^{(j)},I^{(k)},I^{(l)}$ denote identity matrices
and $a^{(i)},a^{(j)},a^{(k)},a^{(l)}\in\mathbb{R}$. Hence $\omega_{0}\in\mathcal{A}_{J}$.
Since $\mathcal{A}_{J}\subset\mathcal{A}_{\mathbb{C}}$, we conclude
that $\omega_{0}=\Pi_{\mathbb{C}}(\omega)=\Pi_{J}(\omega)=I.$ The
proof of $\Pi_{\mathbb{R}}(\omega)=I$ is the same.

We next prove (ii). It suffices to show that $\Pi_{J}(X)\omega=X$
for $X\in\mathcal{S}$. For any $B\in\mathcal{B}_{h}(\mathcal{H}_{\mathcal{S}})$,
\[
\re\Tr\omega\,\Pi_{J}(X)B=\re\Tr\beta_{J}(\Pi_{J}(X)\circ B)=\re\Tr\Pi_{J}(X)\circ\beta_{J}(B)=\re\Tr X\,\beta_{J}(B)=\re\Tr XB.
\]
Hence $\Pi_{J}(X)\omega=X$, and the stated formula follows. Since
$\Pi_{J}(X)=X\omega^{-1}$ and $X\omega^{-1}\in\mathcal{A}_{J}\subset\mathcal{A}_{\mathbb{R}}\subset\mathcal{A}_{\mathbb{C}}$,
the identities for $\Pi_{\mathbb{R}}$ and $\Pi_{\mathbb{C}}$ follow
immediately.

Finally, (iii) follows immediately from (ii), since $X=\omega\Pi_{J}(X)$.
\end{proof}

Define maps $\beta_{\mathbb{R}}:\mathcal{B}(\mathcal{H}_{\mathcal{S}})\to\mathcal{A}_{\mathbb{R}}$
and $\beta_{\mathbb{C}}:\mathcal{B}(\mathcal{H}_{\mathcal{S}})\to\mathcal{A}_{\mathbb{C}}$
by
\[
\beta_{\mathbb{R}}(B)=\Pi_{\mathbb{R}}(\omega^{1/2}B\omega^{1/2}),\qquad\beta_{\mathbb{C}}(B)=\Pi_{\mathbb{C}}(\omega^{1/2}B\omega^{1/2}).
\]

\begin{lem}
\label{lem:pos2ce}The following hold: The maps $\beta_{\mathbb{R}}$
and $\beta_{\mathbb{C}}$ satisfy the following properties:
\begin{description}
\item [{(i)}] $\beta_{\mathbb{R}}$ is a real conditional expectation,
and $\beta_{\mathbb{C}}$ is a complex conditional expectation.
\item [{(ii)}] For every $B\in\mathcal{B}(\mathcal{H}_{\mathcal{S}})$
and every $X\in\mathcal{S}\cup\{\omega\}$,
\[
\re\Tr X\beta_{\mathbb{R}}(B)=\re\Tr X\beta_{\mathbb{C}}(B)=\re\Tr XB.
\]
\end{description}
\end{lem}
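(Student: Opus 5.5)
The plan is to mimic the strictly positive case in the proof of Theorem \ref{thm:modular}, with $\omega$ in the role of $\rho$. By Lemma \ref{lem:omega}(i) $\omega>0$, and by Lemma \ref{lem:omega_proj}(i) $\omega_{0}:=\Pi_{\mathbb{R}}(\omega)=\Pi_{\mathbb{C}}(\omega)=I$. So the formula (\ref{eq:CE0}) reduces exactly to the definition $\beta_{\mathbb{R}}(B)=\Pi_{\mathbb{R}}(\omega^{1/2}B\omega^{1/2})$, and likewise for $\beta_{\mathbb{C}}$.

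\emph{Key commutation.} The step to verify is that $\omega$ commutes with $\mathcal{A}_{\mathbb{R}}$ and $\mathcal{A}_{\mathbb{C}}$, not just with $\mathcal{A}_{J}$. By Lemma \ref{lem:omega}(iii), $\omega$ commutes with every element of $\mathcal{A}_{J}$. The commutant of $\omega$ is a complex $*$-subalgebra containing $\mathcal{A}_{J}$. Hence it contains $\mathcal{A}_{\mathbb{C}}\supset\mathcal{A}_{\mathbb{R}}$, and therefore so does the commutant of $\omega^{1/2}$, which is a polynomial in $\omega$. It follows that $\omega A\omega^{-1}=A=\omega_{0}A\omega_{0}^{-1}$ for $A\in\mathcal{A}_{\mathbb{R}}$, which is condition (ii) of Theorem \ref{thm:modular} for $\rho=\omega$.

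\emph{Part (i).} This can be checked directly rather than through Theorem \ref{thm:modular}. For $A\in\mathcal{A}_{\mathbb{R}}$ we have
\[
\beta_{\mathbb{R}}(AB)=\Pi_{\mathbb{R}}(\omega^{1/2}AB\omega^{1/2})=\Pi_{\mathbb{R}}(A\omega^{1/2}B\omega^{1/2})=A\beta_{\mathbb{R}}(B),
\]
using the conditional expectation property of $\Pi_{\mathbb{R}}$ from Theorem \ref{thm:proj}(ii). Positivity follows because $B\mapsto\omega^{1/2}B\omega^{1/2}$ is positive and $\Pi_{\mathbb{R}}$ is positive. Unitality follows from $\beta_{\mathbb{R}}(I)=\Pi_{\mathbb{R}}(\omega)=I$. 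For $\beta_{\mathbb{C}}$ the same argument applies, and complex linearity comes from the last clause of Theorem \ref{thm:proj}.

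\emph{Part (ii).} For $X=\omega$: since $\omega^{1/2}\in\mathcal{A}_{\mathbb{R}}'$ and $\Pi_{\mathbb{R}}$ is self-adjoint with $\Pi_{\mathbb{R}}(I)=I$,
\[
\re\Tr\omega\beta_{\mathbb{R}}(B)=\langle\Pi_{\mathbb{R}}(\omega),\omega^{1/2}B\omega^{1/2}\rangle=\re\Tr\omega^{1/2}B\omega^{1/2}=\re\Tr\omega B.
\]
Strictly, the first equality should be read with $\omega=\omega^{*}$, giving $\langle\omega,\Pi_{\mathbb{R}}(\cdot)\rangle=\langle\Pi_{\mathbb{R}}(\omega),\cdot\rangle$.

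For $X\in\mathcal{S}$, write $X=\omega Y$ with $Y:=X\omega^{-1}=\Pi_{\mathbb{R}}(X)\in\mathcal{A}_{J}\subset\mathcal{A}_{\mathbb{R}}$, by Lemma \ref{lem:omega_proj}(ii). Then
\[
\re\Tr X\beta_{\mathbb{R}}(B)=\re\Tr\omega Y\beta_{\mathbb{R}}(B)=\re\Tr\omega\beta_{\mathbb{R}}(YB)=\re\Tr\omega YB=\re\Tr XB.
\]
This uses the module property with $Y$ and then the $\omega$ case just proved. The same computation works for $\beta_{\mathbb{C}}$.

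\emph{Main obstacle.} The only delicate points are the ones already settled. The first is extending the commutation of $\omega$ from the Jordan algebra to the generated associative algebras, which goes through the commutant argument above. The second is ensuring $\Pi_{\mathbb{R}}(\omega)=I$, which is supplied by Lemma \ref{lem:omega_proj}(i). Everything else is routine bookkeeping with the real inner product.
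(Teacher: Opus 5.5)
Your proposal is correct and follows the paper's proof. Part (i) is the same computation: commutation of $\omega^{1/2}$ with $\mathcal{A}_{\mathbb{C}}$, plus the module property of $\Pi_{\mathbb{R}}$. You additionally spell out the commutant argument, positivity, and unitality via $\Pi_{\mathbb{R}}(\omega)=I$, which the paper leaves implicit. In part (ii) the paper treats all $X\in\mathcal{S}\cup\{\omega\}$ at once by replacing $X$ with $\Pi_{\mathbb{R}}(X)=X\omega^{-1}$ inside the trace, whereas you first prove the case $X=\omega$ and then reduce $X\in\mathcal{S}$ to it via the module property with $Y=X\omega^{-1}\in\mathcal{A}_{\mathbb{R}}$. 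This difference is only cosmetic, since both routes rest on the same facts from Lemma \ref{lem:omega_proj}.
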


\begin{proof}
We first prove (i). Since $[\omega,A]=0$ for every $A\in\mathcal{A}_{\mathbb{C}}$,
it follows immediately that $\beta_{\mathbb{R}}$ is a real conditional
expectation. Indeed, for $A\in\mathcal{A}_{\mathbb{R}}$ and $B\in\mathcal{B}(\mathcal{H}_{\mathcal{S}})$,
\[
A\beta_{\mathbb{R}}(B)=A\Pi_{\mathbb{R}}(\omega^{1/2}B\omega^{1/2})=\Pi_{\mathbb{R}}(\omega^{1/2}AB\omega^{1/2})=\beta_{\mathbb{R}}(AB).
\]
The proof that $\beta_{\mathbb{C}}$ is a complex conditional expectation
is the same.

We next prove (ii). Let $B\in\mathcal{B}(\mathcal{H}_{\mathcal{S}})$
and $X\in\mathcal{S}\cup\{\omega\}$. Then, by Lemma \ref{lem:omega_proj},
\begin{align*}
\re\Tr X\beta_{\mathbb{R}}(B) & =\re\Tr\Pi_{\mathbb{R}}(X)\beta_{\mathbb{R}}(B)=\re\Tr\Pi_{\mathbb{R}}(X)\Pi_{\mathbb{R}}(\omega^{1/2}B\omega^{1/2})\\
 & =\re\Tr(X\omega^{-1})\omega^{1/2}B\omega^{1/2}=\re\Tr XB.
\end{align*}
The proof for $\beta_{\mathbb{C}}$ is identical.
\end{proof}

This completes the proof of Theorem \ref{thm:pos_to_CE}.

\section{Minimal Sufficient Real $*$-Subalgebras and Real Jordan Algebras
via the Likelihood-Ratio Set\label{sec:min_suff}}

The goal of this section is to prove the main Theorems \ref{thm:min_suff_alg}
and \ref{thm:min_suff_J}, which characterize minimal sufficient real
$*$-subalgebras and real Jordan algebras in terms of the likelihood-ratio
set.

For an absolutely continuous model $\mathcal{S}$, let $\rho\in\mathcal{S}$
be a reference state. Then the square-root likelihood ratio can be
written as
\[
R_{X}=X^{1/2}(X^{1/2}\rho X^{1/2})^{-1/2}X^{1/2}+\sigma
\]
for $X\in\mathcal{S}$ \cite{qLAN2}. Here, for $Z\ge0$, the operator
$Z^{-1/2}$ denotes the generalized inverse of $Z^{1/2}$, and $\sigma\ge0$
is an arbitrary positive operator satisfying $\rho\sigma=0$. For
the purpose of characterizing minimal sufficient algebras, one should
choose $\sigma$ so that the algebra generated by $R_{X}$ becomes
minimal. The choice $\sigma=0$ satisfies this requirement. The following
holds for sufficient real Jordan algebras. 
\begin{thm}
\label{thm:ratio_abs}Let $\mathcal{S}$ be an absolutely continuous
model, and let $\mathcal{A}_{J}$ be a sufficient real Jordan algebra
with Jordan conditional expectation $\beta:\mathcal{B}_{h}(\mathcal{H}_{\mathcal{S}})\to\mathcal{A}_{J}$
and supporting operator $\omega>0$. Let $\rho\in\mathcal{S}$ be
the reference state. Then
\[
R_{X}\in\mathcal{A}_{J}\qquad(\forall X\in\mathcal{S}).
\]
\end{thm}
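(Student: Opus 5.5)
The plan is to write every object in the formula $R_{X}=X^{1/2}(X^{1/2}\rho X^{1/2})^{-1/2}X^{1/2}$ as $\omega$ times an element of $\mathcal{A}_{J}$, and then check that all factors of $\omega$ cancel. I take for granted that the supporting operator has the properties listed in Theorem \ref{thm:pos_to_CE} and Lemmas \ref{lem:omega}, \ref{lem:omega_proj}: $\omega>0$, $[\omega,A]=0$ for $A\in\mathcal{A}_{J}$, $[\omega,X]=0$ for $X\in\mathcal{S}$, and $\Pi_{J}(X)=X\omega^{-1}$.

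\textbf{Step 1: factor out $\omega$.} For $X\in\mathcal{S}$ put $Y:=\Pi_{J}(X)=X\omega^{-1}\in\mathcal{A}_{J}$. Since $\rho\in\mathcal{S}$, also put $P:=\Pi_{J}(\rho)=\rho\omega^{-1}\in\mathcal{A}_{J}$. Then $X=\omega Y$ and $\rho=\omega P$. Because $X\ge0$, $\omega>0$ and $[\omega,X]=0$, the operators $X$ and $\omega^{-1}$ are commuting positive operators, so $Y\ge0$; likewise $P\ge0$. Moreover $\omega$ commutes with $Y$ and $P$, hence with every continuous function of them, including square roots and generalized inverses. Therefore $X^{1/2}=\omega^{1/2}Y^{1/2}$.

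\textbf{Step 2: cancel $\omega$ in the formula for $R_{X}$.} Using commutativity of $\omega$ with everything in $\mathcal{A}_{J}$, compute
\[
X^{1/2}\rho X^{1/2}=\omega^{1/2}Y^{1/2}\,\omega P\,\omega^{1/2}Y^{1/2}=\omega^{2}Z,\qquad Z:=Y^{1/2}PY^{1/2}.
\]
Since $\omega$ is invertible and commutes with $Z\ge0$, the generalized inverse square root is $(\omega^{2}Z)^{-1/2}=\omega^{-1}Z^{-1/2}$. This holds because $\omega$ and $Z$ are simultaneously diagonalizable and $\omega$ does not change the kernel of $Z$. Hence
\[
R_{X}=\omega^{1/2}Y^{1/2}\,\omega^{-1}Z^{-1/2}\,\omega^{1/2}Y^{1/2}=Y^{1/2}Z^{-1/2}Y^{1/2}.
\]

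\textbf{Step 3: closure properties of $\mathcal{A}_{J}$.} It remains to show $Y^{1/2}Z^{-1/2}Y^{1/2}\in\mathcal{A}_{J}$, for which I need two facts.
\begin{itemize}
\item $\mathcal{A}_{J}$ is closed under functional calculus. Powers satisfy $A^{2}=A\circ A$ and $A^{n+1}=A\circ A^{n}$, so every real polynomial in $A\in\mathcal{A}_{J}$ lies in $\mathcal{A}_{J}$. In finite dimensions any function on the spectrum, including $t\mapsto t^{1/2}$ and the generalized inverse $t\mapsto t^{-1/2}$ (set to $0$ at $t=0$), agrees with a polynomial on the spectrum.
\item $\mathcal{A}_{J}$ is closed under the quadratic map $B\mapsto ABA$ for $A,B\in\mathcal{A}_{J}$. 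This is identity (\ref{eq:jordan3}) with $C=A$, i.e. $ABA=2A\circ(A\circ B)-A^{2}\circ B$.
\end{itemize}
Applying these in turn gives $Y^{1/2}\in\mathcal{A}_{J}$, then $Z\in\mathcal{A}_{J}$, then $Z^{-1/2}\in\mathcal{A}_{J}$, and finally $R_{X}\in\mathcal{A}_{J}$.

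The main point needing care is Step 2, namely justifying that the generalized inverse commutes past $\omega$. I would handle it via the joint spectral decomposition of the commuting family $\{\omega,Y,P\}$, or more simply by noting $(\omega^{2}Z)^{-1/2}$ is a function of the commuting pair $(\omega,Z)$. Note also that the formula for $R_{X}$ uses the minimal choice $\sigma=0$ fixed above.
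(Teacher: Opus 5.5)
Your proof is correct and follows the paper's argument: the paper also sets $X_{0}=X\omega^{-1}$ and $\rho_{0}=\rho\omega^{-1}$, which lie in $\mathcal{A}_{J}$ by Lemma \ref{lem:omega_proj}, and rewrites $R_{X}=X_{0}^{1/2}(X_{0}^{1/2}\rho_{0}X_{0}^{1/2})^{-1/2}X_{0}^{1/2}$. You additionally spell out the cancellation of $\omega$ and the closure of $\mathcal{A}_{J}$ under functional calculus and under $B\mapsto ABA$, both of which the paper leaves implicit.
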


\begin{proof}
Put $X_{0}=X\omega^{-1}$ and $\rho_{0}=\rho\omega^{-1}$. Then
\[
R_{X}=X^{1/2}_{0}(X^{1/2}_{0}\rho_{0}X^{1/2}_{0})^{-1/2}X^{1/2}_{0}.
\]
By Lemma \ref{lem:omega_proj}, we have $X_{0},\rho_{0}\in\mathcal{A}_{J}$.
Hence $R_{X}\in\mathcal{A}_{J}$.
\end{proof}

For a local model $\mathcal{S}$, let $\rho\in\mathcal{S}$ be a reference
state and let $X\in\mathcal{S}$. Then the corresponding SLD can be
written as
\[
L_{X}=J^{-1}_{\rho}(X)+\sigma.
\]
Here $J_{\rho}:\mathcal{B}(\mathcal{H}_{\mathcal{S}})\to\mathcal{B}(\mathcal{H}_{\mathcal{S}})$
is defined by
\[
J_{\rho}(Z)=\rho\circ Z,
\]
and is a positive operator with respect to the Hilbert--Schmidt inner
product. The map $J^{-1}_{\rho}$ denotes the generalized inverse
of $J_{\rho}$ with respect to the Hilbert--Schmidt inner product,
and $\sigma$ is an arbitrary Hermitian operator satisfying $\rho\sigma=0$.
Again, for the purpose of characterizing minimal sufficient algebras,
one should choose $\sigma$ so that the algebra generated by $L_{X}$
becomes minimal. The choice $\sigma=0$ satisfies this requirement.
The following holds for sufficient real Jordan algebras.
\begin{thm}
\label{thm:ratio_local}Let $\mathcal{S}$ be a local model, and let
$\mathcal{A}_{J}$ be a sufficient real Jordan algebra with Jordan
conditional expectation $\beta:\mathcal{B}_{h}(\mathcal{H}_{\mathcal{S}})\to\mathcal{A}_{J}$
and supporting operator $\omega>0$. Let $\rho\in\mathcal{S}$ be
the reference state. Then
\[
L_{X}\in\mathcal{A}_{J}\qquad(\forall X\in\mathcal{S}).
\]
\end{thm}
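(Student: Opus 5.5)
The plan is to mirror the proof of Theorem \ref{thm:ratio_abs}: rewrite the SLD $L_{X}=J^{-1}_{\rho}(X)$ as a function of the reduced operators $X_{0}=X\omega^{-1}$ and $\rho_{0}=\rho\omega^{-1}$. By Lemma \ref{lem:omega_proj}, both lie in $\mathcal{A}_{J}$, and $\omega$ commutes with every element of $\mathcal{S}$ and of $\mathcal{A}_{J}$. I would first check that $\omega$ commutes with $\rho$, $X$, $\rho_{0}$ and $X_{0}$, which holds because $\rho,X\in\mathcal{S}$. Then $X=\rho_{0}\omega\circ L$ can be rewritten as
\[
\tfrac12(L\rho+\rho L)=X \iff \tfrac12(L\rho_{0}+\rho_{0}L)\,\omega = X_{0}\,\omega
\]
whenever $L$ commutes with $\omega$. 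The intended claim is therefore $J^{-1}_{\rho}(X)=J^{-1}_{\rho_{0}}(X_{0})$.

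The key step is to show that the minimal-norm solution $L_{X}$ commutes with $\omega$, and to identify it with $J^{-1}_{\rho_{0}}(X_{0})$. I would use the spectral decomposition $\omega=\sum_{k}w_{k}Q_{k}$. Since $\rho$ and $X$ commute with $\omega$, both are block diagonal with respect to the $Q_{k}$. The superoperator $J_{\rho}$ maps the block $Q_{k}\mathcal{B}Q_{l}$ into itself, so its generalized inverse (the Moore--Penrose inverse, with respect to the Hilbert--Schmidt product) also preserves these blocks. Because $X$ lies in the diagonal blocks, $L_{X}=J^{-1}_{\rho}(X)$ is block diagonal as well, and hence commutes with $\omega$. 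On each diagonal block, $\rho=w_{k}\rho_{0}$ and $X=w_{k}X_{0}$, so $J_{\rho}=w_{k}J_{\rho_{0}}$ there and $J^{-1}_{\rho}(X)=J^{-1}_{\rho_{0}}(X_{0})$ blockwise. Since $\rho_{0}$ and $X_{0}$ are block diagonal too, this gives $L_{X}=J^{-1}_{\rho_{0}}(X_{0})$ globally. I expect this bookkeeping with generalized inverses to be the main obstacle, in particular matching the minimal-norm choice, i.e.\ the choice $\sigma=0$ used in the paper.

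It remains to show that $J^{-1}_{\rho_{0}}(X_{0})\in\mathcal{A}_{J}$ when $\rho_{0},X_{0}\in\mathcal{A}_{J}$. The map $J_{\rho_{0}}$ sends $\mathcal{A}_{J}$ into itself, since $\mathcal{A}_{J}$ is closed under $\circ$. It is self-adjoint and positive semidefinite on $\mathcal{B}_{h}$; positivity follows from $\langle Z,\rho_{0}\circ Z\rangle=\Tr\rho_{0}Z^{2}$ together with $\rho_{0}=\rho\omega^{-1}\ge0$, which holds because $\rho$ and $\omega$ commute. Its Moore--Penrose inverse is a polynomial $p(J_{\rho_{0}})$ in $J_{\rho_{0}}$: take $p$ equal to $1/\lambda$ on the nonzero spectrum and $0$ at $0$. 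Hence $J^{-1}_{\rho_{0}}$ maps $\mathcal{A}_{J}$ into itself, and $L_{X}=p(J_{\rho_{0}})(X_{0})\in\mathcal{A}_{J}$. An alternative route would be the integral formula $J^{-1}_{\rho_{0}}(Y)=2\int_{0}^{\infty}e^{-t\rho_{0}}Ye^{-t\rho_{0}}dt$ on the support, combined with (\ref{eq:jordan3}). The polynomial argument is cleaner and avoids support issues, since $J_{\rho_{0}}$ restricted to $\mathcal{B}_{h}$ is a single self-adjoint operator, so I would use it.
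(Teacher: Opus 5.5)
Your proposal is correct and follows the paper's own route. The paper likewise sets $X_{0}=X\omega^{-1}$ and $\rho_{0}=\rho\omega^{-1}$, asserts $L_{X}=J^{-1}_{\rho_{0}}(X_{0})$, and concludes from $X_{0},\rho_{0}\in\mathcal{A}_{J}$ (Lemma \ref{lem:omega_proj}), but it states without justification both the identity $J^{-1}_{\rho}(X)=J^{-1}_{\rho_{0}}(X_{0})$ and the invariance of $\mathcal{A}_{J}$ under $J^{-1}_{\rho_{0}}$; your block decomposition along the spectral projections of $\omega$ (which also handles the minimal-norm choice) and your polynomial functional-calculus argument supply exactly those omitted details.
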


\begin{proof}
Put $X_{0}=X\omega^{-1}$ and $\rho_{0}=\rho\omega^{-1}$. Then
\[
L_{X}=J^{-1}_{\rho_{0}}(X_{0}).
\]
By Lemma \ref{lem:omega_proj}, we have $X_{0},\rho_{0}\in\mathcal{A}_{J}$.
Hence $L_{X}\in\mathcal{A}_{J}$.
\end{proof}

For an absolutely continuous model, let $\mathcal{R}=\{R_{X}\mid X\in\mathcal{S}\}$,
and for a local model, let $\mathcal{R}=\{L_{X}\mid X\in\mathcal{S}\}$.
We now prove the main Theorem \ref{thm:min_suff_alg} concerning minimal
sufficient real $*$-subalgebras and the likelihood-ratio set $\mathcal{R}$.
\begin{proof}[Proof of Theorem \ref{thm:min_suff_alg} ]
Let $\mathcal{A}\subset\mathcal{B}(\mathcal{H}_{\mathcal{S}})$ be
a real $*$-subalgebra satisfying the conditions (i) $\rho A\rho^{-1}\in\mathcal{A}$
for every $A\in\mathcal{A}$, and (ii) $\mathcal{R}\subset\mathcal{A}$.
We show that $\mathcal{A}$ admits a sufficient real conditional expectation.
By Theorem \ref{thm:modular}, condition (i) implies that there exists
a real conditional expectation $\alpha:\mathcal{B}(\mathcal{H}_{\mathcal{S}})\to\mathcal{A}$
such that
\[
\re\Tr\rho B=\re\Tr\rho\,\alpha(B)\qquad(\forall B\in\mathcal{B}(\mathcal{H}_{\mathcal{S}})).
\]
If $\mathcal{A}$ is a complex $*$-subalgebra, then $\alpha$ may
be chosen to be a complex conditional expectation.

If $\mathcal{S}$ is an absolutely continuous model, then by condition
(ii), for every $X\in\mathcal{S}$ and $B\in\mathcal{B}(\mathcal{H}_{\mathcal{S}})$,
\begin{align*}
\re\Tr XB & =\re\Tr\rho R_{X}BR_{X}=\re\Tr\rho\,\alpha(R_{X}BR_{X})\\
 & =\re\Tr\rho\,R_{X}\alpha(B)R_{X}=\re\Tr X\alpha(B).
\end{align*}
If $\mathcal{S}$ is a local model, then by condition (ii), for every
$X\in\mathcal{S}$ and $B\in\mathcal{B}(\mathcal{H}_{\mathcal{S}})$,
\begin{align*}
\re\Tr XB & =\re\Tr(\rho\circ L_{X})B=\re\Tr\rho(B\circ L_{X})\\
 & =\re\Tr\rho\,\alpha(B\circ L_{X})=\re\Tr\rho(\alpha(B)\circ L_{X})\\
 & =\re\Tr(\rho\circ L_{X})\alpha(B)\\
 & =\re\Tr X\alpha(B).
\end{align*}
Thus $\mathcal{A}$ is a sufficient real $*$-subalgebra, and $\alpha$
is a sufficient real conditional expectation onto $\mathcal{A}$.

Finally, we show that the smallest real $*$-subalgebra satisfying
(i) and (ii) is the minimal sufficient real $*$-subalgebra. It suffices
to show that every sufficient real $*$-subalgebra necessarily satisfies
(i) and (ii). Condition (i) follows from Theorem \ref{thm:modular},
while condition (ii) follows from Theorems \ref{thm:ratio_abs} and
\ref{thm:ratio_local}.
\end{proof}

In the work \cite{qLAN3}, which established the asymptotic representation
theorem in q-LAN, the $\mathcal{D}$-invariant extension of square-root
likelihood ratios played a crucial role. If $\mathcal{A}$ is a complex
$*$-subalgebra, then condition (i) in Theorem \ref{thm:min_suff_alg}
is equivalent to $\mathcal{D}$-invariance. See Appendix \ref{sec:dinv}
for details. The following corollary may also be useful.

\begin{cor}
Let
\[
\mathcal{A}_{0}=\left\{ \rho^{n}R\rho^{-n}\mid R\in\mathcal{R},\ n\in\mathbb{N}\cup\{0\}\right\} \cup\left\{ \rho^{n}R_{1}\kappa R_{2}\rho^{-n}\mid R_{1},R_{2}\in\mathcal{R},\ n\in\mathbb{N}\right\} ,
\]
where $\kappa=I-\supp\rho$. Then the real $*$-subalgebra generated
by $\mathcal{A}_{0}$ is the minimal sufficient real $*$-subalgebra.
In particular, if $\rho>0$ or $\rho$ is a pure state, then the real
$*$-subalgebra generated by
\[
\mathcal{A}_{0}=\left\{ \rho^{n}R\rho^{-n}\mid R\in\mathcal{R},\ n\in\mathbb{N}\cup\{0\}\right\} 
\]
is minimal sufficient. 
\end{cor}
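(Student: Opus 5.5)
Let $\mathcal{B}$ denote the real $*$-subalgebra generated by $\mathcal{A}_{0}$, and let $\mathcal{A}_{\min}$ be the smallest real $*$-subalgebra satisfying conditions (i) and (ii) of Theorem \ref{thm:min_suff_alg}. By that theorem, $\mathcal{A}_{\min}$ is the minimal sufficient real $*$-subalgebra, so the corollary reduces to proving $\mathcal{B}=\mathcal{A}_{\min}$. We prove the two inclusions separately, and the special cases then follow from simple observations.

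\emph{The inclusion $\mathcal{B}\subset\mathcal{A}_{\min}$.} Since $\mathcal{R}\subset\mathcal{A}_{\min}$ and $\mathcal{A}_{\min}$ is invariant under $A\mapsto\rho A\rho^{-1}$, iterating the invariance gives $\rho^{n}R\rho^{-n}\in\mathcal{A}_{\min}$ for every $R\in\mathcal{R}$. Here we use that $\rho(\rho A\rho^{-1})\rho^{-1}=\rho^{2}A\rho^{-2}$, because $\rho^{-1}\rho=s$ and $\rho s=\rho$. Applying the map to $I$ gives $s=\rho\rho^{-1}\in\mathcal{A}_{\min}$, hence $\kappa=I-s\in\mathcal{A}_{\min}$. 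It follows that $R_{1}\kappa R_{2}\in\mathcal{A}_{\min}$, and applying the invariance $n$ times puts $\rho^{n}R_{1}\kappa R_{2}\rho^{-n}$ in $\mathcal{A}_{\min}$ as well.

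\emph{The inclusion $\mathcal{A}_{\min}\subset\mathcal{B}$.} Clearly $\mathcal{R}\subset\mathcal{B}$ (take $n=0$), so it remains to show that $\mathcal{B}$ satisfies condition (i). The map $\Delta(A)=\rho A\rho^{-1}$ is not multiplicative, since $\Delta(AB)\neq\Delta(A)\Delta(B)$ in general. The obstruction is the identity
\[
\Delta(AB)=\rho AsB\rho^{-1}=\Delta(A)\Delta(B)+\rho A\kappa B\rho^{-1}.
\]
Iterating this, $\Delta$ applied to a word $R_{1}\cdots R_{m}$ in generators from $\mathcal{R}$ expands into products of terms of the form $\rho(\cdots)\rho^{-1}$ with insertions of $s$ or $\kappa$ between letters. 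The same holds for higher powers $\rho^{n}(\cdot)\rho^{-n}$; note that $\rho^{n}A\rho^{-n}$ carries an $s$ on its right.

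The step I expect to be the main obstacle is showing that every such term lies in $\mathcal{B}$. The natural idea is to write $\kappa=I-s$ and to express $s$-insertions through $\rho^{-1}\rho$, so that each term becomes a product of elements of $\mathcal{A}_{0}$ together with their adjoints. Adjoints are needed because $(\rho^{n}R\rho^{-n})^{*}=\rho^{-n}R\rho^{n}$ is also in $\mathcal{B}$. The single-$\kappa$ terms $\rho^{n}R_{1}\kappa R_{2}\rho^{-n}$ are exactly the second family in $\mathcal{A}_{0}$. For terms with several $\kappa$'s or longer words, I would try to show that $\kappa X\kappa$ splits into products of the form $(\kappa R_{1}s)(\cdots)(sR_{2}\kappa)$. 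The pieces $\kappa Rs$ can be recovered as differences of the form $R\,\rho^{n}R'\rho^{-n}-\rho^{n}(RR')\rho^{-n}$, and expressed by $\rho^{n}(\cdot)\rho^{-n}$ conjugates and adjoints. If this bookkeeping becomes unwieldy, the alternative is a dimension argument: the increasing chain $\mathcal{B}_{k}$ obtained by adjoining $\Delta$-images stabilizes. Combining this with the Cayley--Hamilton theorem for $\Delta$ on $\mathcal{B}(\mathcal{H}_{\mathcal{S}})$ would show that finitely many powers $n$ suffice and that $\Delta(\mathcal{B})\subset\mathcal{B}$. Closure under adjoints follows by a symmetric argument, since $\Delta(A)^{*}=\rho^{-1}A^{*}\rho$.

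\emph{Special cases.} If $\rho>0$, then $\kappa=0$ and the second family in $\mathcal{A}_{0}$ vanishes. If $\rho$ is pure, then $\rho=|\psi\rangle\langle\psi|$ and $\rho^{n}R_{1}\kappa R_{2}\rho^{-n}=\langle\psi|R_{1}\kappa R_{2}|\psi\rangle\,\rho$. This term is a real multiple of $\rho=\rho R\rho^{-1}\cdot c$, for any $R$ with $\langle\psi|R|\psi\rangle\neq0$, so it already lies in the algebra generated by the first family. This requires only the mild nondegeneracy $\langle\psi|R|\psi\rangle\neq0$, which holds for $R_{\rho}=s$.
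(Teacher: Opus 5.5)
Your reduction to $\mathcal{B}=\mathcal{A}_{\min}$ and the inclusion $\mathcal{B}\subset\mathcal{A}_{\min}$ are fine, and they match the paper's one-line argument. The gap is the reverse inclusion. You never show $\rho\mathcal{B}\rho^{-1}\subset\mathcal{B}$, and neither of your suggested routes can close it as stated.

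\textbf{Where the difficulty actually sits.} The bookkeeping is easy once organized:
\begin{itemize}
\item Since $\rho\in\mathcal{S}$, the minimal-norm choice gives $R_{\rho}=s$ (resp.\ $L_{\rho}=s$), so $s,\kappa\in\mathcal{B}$.
\item Insert $I=s+\kappa$ between the letters of a word. Because $\Delta$ is multiplicative on $s\mathcal{B}(\mathcal{H})s$, everything reduces to $\Delta$ of segments $sG_{1}\kappa G_{2}\kappa\cdots\kappa G_{m}s$.
\item Every letter adjacent to a $\kappa$ lies in $\mathcal{R}$, since all other elements of $\mathcal{A}_{0}\cup\mathcal{A}_{0}^{*}$ satisfy $G=sGs$.
\item Segments with no $\kappa$ are sent into $\mathcal{A}_{0}\cup\mathcal{A}_{0}^{*}\cup s\mathcal{R}s$. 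Segments with one $\kappa$ give exactly the second family.
\end{itemize}
The real issue is segments with two or more $\kappa$'s, such as $sR_{1}\kappa R_{2}\kappa R_{3}s$.

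\textbf{Why your routes cannot work.} No argument that uses only $\mathcal{R}\subset\mathcal{B}$ can handle these segments. Take $\rho=|0\rangle\langle0|\otimes\mathrm{diag}(p,q)$ with $p\neq q$, and replace $\mathcal{R}$ by $\{s,\ \sigma_{1}\otimes I,\ |1\rangle\langle1|\otimes\sigma_{1}\}$. Then:
\begin{itemize}
\item $\mathcal{A}_{0}$ adds nothing new.
\item The generated real $*$-algebra is $\mathbb{M}_{2}(\mathbb{R})\otimes\mathrm{span}_{\mathbb{R}}\{I,\sigma_{1}\}$.
\item It contains $|0\rangle\langle0|\otimes\sigma_{1}=s\,G_{1}\kappa G_{2}\kappa G_{1}\,s$, a two-$\kappa$ segment.
\item But $\rho(|0\rangle\langle0|\otimes\sigma_{1})\rho^{-1}=|0\rangle\langle0|\otimes\bigl(\begin{smallmatrix}0&p/q\\ q/p&0\end{smallmatrix}\bigr)$ is not in it.
\end{itemize}
Your stabilization/Cayley--Hamilton idea only shows that the $\Delta$-closure is reached after finitely many steps. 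It does not show that this closure is already generated by $\mathcal{A}_{0}$.

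\textbf{The missing ingredient} is a structural property of the minimal-norm likelihood ratios.
\begin{itemize}
\item For SLDs, $L_{X}=J_{\rho}^{-1}(X)$ lies in $(\ker J_{\rho})^{\perp}=\{Z:\kappa Z\kappa=0\}$. So any segment with an SLD between two $\kappa$'s vanishes.
\item For square-root likelihood ratios, $\mathrm{ran}\,R_{X}\subset\mathrm{ran}\,X$. Also $X=R\rho R$ with $R\ge0$ forces $\mathrm{ran}\,X\cap\ker\rho=\{0\}$. For a positive block operator this means the generalized Schur complement vanishes: $\kappa R\kappa=(\kappa Rs)(sRs)^{-1}(sR\kappa)$. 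Hence
\[
sR_{1}\kappa R_{2}\kappa R_{3}s=(sR_{1}\kappa R_{2}s)\,(sR_{2}s)^{-1}\,(sR_{2}\kappa R_{3}s),
\]
and similarly for longer chains.
\end{itemize}
So $\Delta$ of such a segment is a product of elements $\rho R_{i}\kappa R_{j}\rho^{-1}$ of the second family and of $\Delta\bigl((sR_{2}s)^{-1}\bigr)=p(\rho R_{2}\rho^{-1})$. Here $p$ is a real polynomial without constant term with $p(sR_{2}s)=(sR_{2}s)^{-1}$. Your idea of splitting $\kappa X\kappa$ as $(\kappa R_{1}s)(\cdots)(sR_{2}\kappa)$ points in this direction. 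However, it holds only for $X\in\mathcal{R}$, and only because of this property, which your proposal never invokes.

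\textbf{Two smaller points.}
\begin{itemize}
\item In the pure case, $\langle\psi|R_{1}\kappa R_{2}|\psi\rangle$ need not be real; in the two-parameter example it equals $\langle l_{1}|l_{2}\rangle\notin\mathbb{R}$. So the second family is not a real multiple of $\rho$. The correct observation is that $\rho R_{1}\kappa R_{2}\rho^{-1}=sR_{1}(I-s)R_{2}s$ already lies in the algebra generated by $\mathcal{R}\ni s$. Moreover $\Delta(A)=sAs$ there, so that algebra is trivially $\Delta$-invariant.
\item In your displayed identity, the middle expression $\rho AsB\rho^{-1}$ equals $\Delta(A)\Delta(B)$, not $\Delta(AB)=\rho AB\rho^{-1}$.
\end{itemize}
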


\begin{proof}
It is straightforward to verify that the real $*$-subalgebra $\mathcal{A}$
generated by $\mathcal{A}_{0}$ is the smallest real $*$-subalgebra
satisfying conditions (i) and (ii) of Theorem \ref{thm:min_suff_alg},
namely, 
\[
\mathcal{R}\subset\mathcal{A},\qquad\rho\mathcal{A}\rho^{-1}\subset\mathcal{A}.
\]
\end{proof}

We now prove the main Theorem \ref{thm:min_suff_J} concerning minimal
sufficient real Jordan algebras and the likelihood-ratio set $\mathcal{R}$.
\begin{proof}[Proof of Theorem \ref{thm:min_suff_J}]
Let $\mathcal{A}$ be a real $*$-subalgebra satisfying conditions
(i) and (ii) of Theorem \ref{thm:min_suff_alg}, and let $\rho_{0}=\Pi_{\mathbb{R}}(\rho),$
where $\Pi_{\mathbb{R}}:\mathcal{B}(\mathcal{H}_{\mathcal{S}})\to\mathcal{A}$
is the orthogonal projection with respect to the real Hilbert--Schmidt
inner product. Let $\beta_{\mathbb{R}}:\mathcal{B}(\mathcal{H}_{\mathcal{S}})\to\mathcal{A}$
be a sufficient real conditional expectation.

We first show that the real Jordan algebra $\mathcal{A}_{J}$ generated
by $\mathcal{R}\cup\{\rho_{0}\}$ is sufficient. Let $\Pi_{J}:\mathcal{B}_{h}(\mathcal{H}_{\mathcal{S}})\to\mathcal{A}_{J}$
be the orthogonal projection with respect to the real Hilbert--Schmidt
inner product, and define
\[
\beta_{J}(B)=\Pi_{J}\bigl(\beta_{\mathbb{R}}(B)\bigr),\qquad B\in\mathcal{B}_{h}(\mathcal{H}_{\mathcal{S}}).
\]
By Lemma \ref{lem:proj-J}, $\beta_{J}$ is positive. Since $\mathcal{A}_{J}$
is the fixed-point set of $\beta_{J}$ and $\beta^{2}_{J}=\beta_{J}$,
Theorem \ref{thm:pos2J} implies that $\beta_{J}$ is a real Jordan
conditional expectation. Moreover, since $\rho_{0}\in\mathcal{A}_{J}\subset\mathcal{A}$,
we have $\rho_{0}=\Pi_{\mathbb{R}}(\rho)=\Pi_{J}(\rho)$. Therefore,
for every $B\in\mathcal{B}_{h}(\mathcal{H}_{\mathcal{S}})$,
\begin{align*}
\re\Tr\rho\,\beta_{J}(B) & =\re\Tr\rho\,\Pi_{J}(\beta_{\mathbb{R}}(B))\\
 & =\re\Tr\Pi_{J}(\rho)\,\beta_{\mathbb{R}}(B)\\
 & =\re\Tr\rho_{0}\,\beta_{\mathbb{R}}(B)\\
 & =\re\Tr\rho B.
\end{align*}
Thus $\beta_{J}$ is a sufficient real Jordan conditional expectation.

Next, assume that $\mathcal{A}$ is the minimal sufficient real $*$-subalgebra.
To show that $\mathcal{A}_{J}$ is the minimal sufficient real Jordan
algebra, it suffices to prove that any sufficient real Jordan algebra
$\mathcal{A}_{J}'\subset\mathcal{A}$ admitting a sufficient real
Jordan conditional expectation must contain $\mathcal{R}\cup\{\rho_{0}\}$.
By Theorem \ref{thm:pos_to_CE}, the real $*$-subalgebra generated
by $\mathcal{A}_{J}'$ coincides with $\mathcal{A}$, and moreover
$\rho_{0}\in\mathcal{A}_{J}'$. The inclusion $\mathcal{R}\subset\mathcal{A}_{J}'$
follows from Theorems \ref{thm:ratio_abs} and \ref{thm:ratio_local}.
The same argument applies when $\mathcal{A}$ is the minimal sufficient
complex $*$-subalgebra. This completes the proof.
\end{proof}

\section{Real Jordan Koashi--Imoto decomposition\label{sec:KI}}

In this section, we prove Theorem \ref{thm:KI_deco}. For sufficient
complex algebras, the Koashi--Imoto decomposition is well known \cite{KI-deco1,KI-deco2}.
Here we further extend it to sufficient real $*$-subalgebras and
sufficient real Jordan algebras. As before, the model $\mathcal{S}$
is not assumed to be a family of states; it may be any family of self-adjoint
operators.

We first discuss the Koashi--Imoto decomposition for sufficient real
$*$-subalgebras. Let $\mathcal{A}\subset\mathcal{B}(\mathcal{H}_{\mathcal{S}})$
be a sufficient real $*$-subalgebra for a model $\mathcal{S}$, and
assume that $\mathcal{A}$ admits a real conditional expectation.
Then $\mathcal{A}$ admits a decomposition of the form (\ref{eq:real_decomp}),
namely,
\[
\mathcal{A}\cong\left\{ \bigoplus_{i}\mathbb{M}_{n_{i}}(\mathbb{C})\otimes I_{m_{i}}\right\} \oplus\left\{ \bigoplus_{j}\mathbb{M}_{n_{j}}(\mathbb{R})\otimes I_{m_{j}}\right\} \oplus\left\{ \bigoplus_{k}\mathbb{M}_{n_{k}}(\mathbb{H})\otimes I_{m_{k}}\right\} .
\]
Correspondingly, the following holds. Here $\mathcal{A}$ may be either
the minimal sufficient real $*$-subalgebra or the minimal sufficient
complex $*$-subalgebra.
\begin{thm}
\label{thm:KI_real_alg}Every $X\in\mathcal{S}$ can be decomposed
as
\[
X=\left\{ \bigoplus_{i}X^{(\mathbb{C})}_{i}\otimes P^{(\mathbb{C})}_{i}\right\} \oplus\left\{ \bigoplus_{j}X^{(\mathbb{R})}_{j}\otimes P^{(\mathbb{R})}_{j}\right\} \oplus\left\{ \bigoplus_{k}X^{(\mathbb{H})}_{k}\otimes P^{(\mathbb{H})}_{k}\right\} ,
\]
where
\[
X^{(\mathbb{C})}_{i}\in\mathbb{M}_{n_{i}}(\mathbb{C})_{h},\qquad X^{(\mathbb{R})}_{j}\in\mathbb{M}_{n_{j}}(\mathbb{R})_{h},\qquad X^{(\mathbb{H})}_{k}\in\mathbb{M}_{n_{k}}(\mathbb{H})_{h},
\]
and
\[
P^{(\mathbb{C})}_{i}\in\mathbb{M}_{m_{i}}(\mathbb{R}),\qquad P^{(\mathbb{R})}_{j}\in\mathbb{M}_{m_{j}}(\mathbb{R}),\qquad P^{(\mathbb{H})}_{k}\in\mathbb{M}_{m_{k}}(\mathbb{R}).
\]
The matrices $P^{(\C)}_{i},P^{(\R)}_{j}$, and $P^{(\mathbb{H})}_{k}$
can be taken to be diagonal matrices with strictly positive eigenvalues
and do not depend on $X$, whereas $X^{(\mathbb{C})}_{i}$, $X^{(\mathbb{R})}_{j}$,
and $X^{(\mathbb{H})}_{k}$ depend on $X$.
\end{thm}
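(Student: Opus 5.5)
The plan is to use the supporting operator $\omega$ from Theorem \ref{thm:pos_to_CE}, specifically the identities $[\omega,A]=0$ for all $A\in\mathcal{A}$ and $X=\omega\,\Pi_{\mathbb{R}}(X)=\Pi_{\mathbb{R}}(X)\omega$ for $X\in\mathcal{S}$ from (\ref{eq:sigma-omega}). These are available here because $\mathcal{A}$ admits a sufficient real conditional expectation $\beta$. That map is real unital positive (Lemma \ref{lem:CE2CP}) and idempotent with fixed-point set containing $\mathcal{A}_h$, so we may apply the construction of Section \ref{sec:pos2ce} to $\alpha=\beta$. If its fixed-point Jordan algebra generates a strictly smaller algebra than $\mathcal{A}$, we would instead take the fixed-point set to be $\mathcal{A}_h$ directly. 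For a conditional expectation the fixed points in $\mathcal{B}_h$ are exactly $\mathcal{A}_h$, since $\beta(A)=A\beta(I)=A$ for $A\in\mathcal{A}$ and $\beta(\mathcal{B}_h)\subset\mathcal{A}$. The real $*$-subalgebra generated by $\mathcal{A}_h$ is $\mathcal{A}$ itself, because $A=\frac{A+A^*}{2}+\frac{A-A^*}{2}$ and the skew part lies in the algebra generated by Hermitian elements (for example through products of Hermitian elements in each block). We therefore obtain $\omega>0$ commuting with $\mathcal{A}$ and $X=\Pi_{\mathbb{R}}(X)\,\omega$ with $\Pi_{\mathbb{R}}(X)\in\mathcal{A}_h$.

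Next we describe $\omega$. It lies in the commutant $\mathcal{A}'$ and is self-adjoint and strictly positive. The key step is to compute this commutant in the decomposition (\ref{eq:real_decomp}). For a block $\mathbb{M}_{n}(\mathbb{K})\otimes I_{m}$, the commutant within $\mathcal{B}(\mathbb{C}^{n}\otimes\mathbb{C}^m)$ (or $\mathbb{C}^{2n}\otimes\mathbb{C}^{m}$ for quaternions) is $I\otimes\mathbb{M}_m(\mathbb{C})$ when $\mathbb{K}=\mathbb{C}$. For $\mathbb{K}=\mathbb{R}$ or $\mathbb{H}$ it is larger: $\mathbb{M}_n(\mathbb{R})$ generates $\mathbb{M}_n(\mathbb{C})$ as a complex algebra, so its commutant is again $I_n\otimes\mathbb{M}_m(\mathbb{C})$. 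For $\mathbb{H}$, the complex span of $\mathbb{M}_n(\mathbb{H})$ is $\mathbb{M}_{2n}(\mathbb{C})$, which gives the commutant $I_{2n}\otimes\mathbb{M}_m(\mathbb{C})$. Off-diagonal intertwiners between distinct inequivalent blocks vanish, since different simple summands are not isomorphic as complex representations after complexification. If two real blocks became equivalent after complexification, we would absorb this into the multiplicity in the standard form (\ref{eq:real_decomp}); this is the step I expect to be the main obstacle, so I would check the complexification argument carefully, in particular that a $\mathbb{C}$-type block and its conjugate representation are accounted for as separate summands. Granting this, $\omega=\bigoplus I\otimes\omega_i$ with $\omega_i>0$ Hermitian in $\mathbb{M}_{m_i}(\mathbb{C})$.

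We then diagonalise each $\omega_i=U_i D_i U_i^*$ and apply the unitary $\bigoplus I\otimes U_i$, which does not change the form of $\mathcal{A}$. This makes $P_i:=D_i$ real diagonal with strictly positive entries, independent of $X$. Finally, $\Pi_{\mathbb{R}}(X)\in\mathcal{A}_h$ has the form $\bigoplus X_i\otimes I_{m_i}$ with $X_i$ Hermitian in the appropriate $\mathbb{M}_{n_i}(\mathbb{K})_h$, so
\[
X=\Pi_{\mathbb{R}}(X)\,\omega=\bigoplus_i X_i\otimes P_i ,
\]
which is the claimed decomposition with the $X_i$ depending on $X$ and the $P_i$ fixed.
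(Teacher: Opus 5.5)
Your route is the paper's. You take the supporting operator $\omega$ of Theorem~\ref{thm:pos_to_CE} for the sufficient conditional expectation $\beta$, use $X\omega^{-1}\in\mathcal{A}_h$, read off $\omega=\bigoplus I\otimes C_i$ from commutation, diagonalize, and multiply.

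The gap is the passage from ``$\omega$ commutes with the algebra generated by the fixed-point set $\mathcal{A}_h$'' to ``$\omega$ commutes with $\mathcal{A}$''. Your claim that the real $*$-subalgebra generated by $\mathcal{A}_h$ is $\mathcal{A}$ is false for summands $\mathbb{M}_1(\mathbb{H})\otimes I_m$. A quaternion $wI_2+i(xX+yY+zZ)$ is Hermitian only if $x=y=z=0$, so the Hermitian part of such a summand is $\mathbb{R}I_{2m}$. It generates only $\mathbb{C}I_{2m}$, even as a complex algebra, so Theorem~\ref{thm:pos_to_CE} places no constraint on that block of $\omega$ beyond positivity. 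Your commutant $I_2\otimes\mathbb{M}_m(\mathbb{C})$ presupposes exactly the missing commutation with the quaternion units. Without it, $X=x\,\omega$ on that block need not have the form $xI_2\otimes P$.

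Such blocks do occur. Take $\mathcal{S}=\{I_2\otimes\tau\}$ with $\tau>0$ and $\mathcal{A}=\mathbb{M}_1(\mathbb{H})\otimes I_m$; these satisfy (i) and (ii) of Theorem~\ref{thm:min_suff_alg}. Your fallback changes nothing, because the fixed-point set already is $\mathcal{A}_h$. The paper's proof quotes $[\omega,A]=0$ for all $A\in\mathcal{A}$ directly from Theorem~\ref{thm:pos_to_CE} and passes over the same point. That is harmless when $\mathcal{A}$ is complex or minimal, but not for a general sufficient real $*$-subalgebra.

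The step can be supplied using the module property over skew-Hermitian elements. For idempotent $\beta$, the map $\beta_J$ of Lemma~\ref{lem:suff_proj} is $\beta$ itself, so $\Tr\omega B=\Tr\beta(B)$ on $\mathcal{B}_h(\mathcal{H}_{\mathcal{S}})$. Let $K\in\mathcal{A}$ with $K^{*}=-K$, let $C$ be Hermitian, and put $G=\beta(K\circ C)$. Since $KC=\tfrac12[K,C]+K\circ C$ is the Hermitian/skew-Hermitian splitting and $\beta$ preserves Hermiticity, comparing skew parts in $\beta(KC)=K\beta(C)$ gives $\tfrac12(G-G^{*})=K\circ\beta(C)$. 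Next expand $\beta\bigl(K(K\circ C)\bigr)=KG$, using $K(K\circ C)=\tfrac12(K^{2}C+KCK)$ and $\beta(K^{2}C)=K^{2}\beta(C)$. This yields
\[
\tfrac12K(G+G^{*})=\tfrac12\bigl(\beta(KCK)-K\beta(C)K\bigr).
\]
The right side is Hermitian, while the trace of the left side is purely imaginary, so $\Tr\beta(KCK)=\Tr K^{2}\beta(C)$. Now apply the Jordan property of Theorem~\ref{thm:pos2J}, which is available since $\beta$ is faithful by Lemma~\ref{lem:faithful}. The identity becomes $\Tr(K\omega K)C=\Tr(K^{2}\circ\omega)C$ for all Hermitian $C$, that is, $[K,[K,\omega]]=0$. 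Since $[K,\omega]$ is Hermitian and $\Tr[K,\omega]^{2}=-\Tr\omega[K,[K,\omega]]=0$, it follows that $[K,\omega]=0$. Together with $[\omega,\mathcal{A}_h]=0$ this gives $\omega\in\mathcal{A}'$, and the rest of your argument goes through.

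Your worry about cross-block terms is unnecessary. The unit of each summand in (\ref{eq:real_decomp}) is a central projection lying in $\mathcal{A}_h$, so $\omega$ is block diagonal without any complexification argument. ``Absorbing'' a conjugate $\mathbb{C}$-type representation into a multiplicity would not be legitimate anyway, since $\{zI_m\oplus\bar zI_{m'}\}$ is not unitarily of the form $\mathbb{C}I_{m+m'}$.
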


\begin{proof}
Since $\mathcal{A}$ admits a real conditional expectation $\gamma:\mathcal{B}(\mathcal{H}_{\mathcal{S}})\to\mathcal{A}$,
Theorem \ref{thm:pos_to_CE} yields a supporting operator $\omega\in\mathcal{B}(\mathcal{H}_{\mathcal{S}})_{++}$.
Because $[\omega,A]=0$ for every $A\in\mathcal{A}$, $\omega$ must
be of the form
\[
\omega=\left\{ \bigoplus_{i}I_{n_{i}}\otimes C^{(\mathbb{C})}_{i}\right\} \oplus\left\{ \bigoplus_{j}I_{n_{j}}\otimes C^{(\mathbb{R})}_{j}\right\} \oplus\left\{ \bigoplus_{k}I_{n_{k}}\otimes C^{(\mathbb{H})}_{k}\right\} ,
\]
where $C^{(\mathbb{C})}_{i}\in\mathbb{M}_{m_{i}}(\mathbb{C}),\,C^{(\mathbb{R})}_{j}\in\mathbb{M}_{m_{j}}(\mathbb{C}),\,C^{(\mathbb{H})}_{k}\in\mathbb{M}_{m_{k}}(\mathbb{C})$
are positive operators. By changing the basis of $\mathcal{H}_{\mathcal{S}}$,
these may be taken to be diagonal matrices with strictly positive
eigenvalues; we denote them by $P^{(\mathbb{C})}_{i}$, $P^{(\mathbb{R})}_{j}$,
and $P^{(\mathbb{H})}_{k}$.

By Theorem \ref{thm:pos_to_CE}, every $X\in\mathcal{S}$ satisfies
$X\omega^{-1}=\omega^{-1}X\in\mathcal{A}$. Hence
\[
X\omega^{-1}=\left\{ \bigoplus_{i}X^{(\mathbb{C})}_{i}\otimes I_{m_{i}}\right\} \oplus\left\{ \bigoplus_{j}X^{(\mathbb{R})}_{j}\otimes I_{m_{j}}\right\} \oplus\left\{ \bigoplus_{k}X^{(\mathbb{H})}_{k}\otimes I_{m_{k}}\right\} ,
\]
and multiplying by $\omega$ gives the desired decomposition of $X$.
\end{proof}

We next show that a sufficient real Jordan algebra $\mathcal{A}_{J}\subset\mathcal{B}_{h}(\mathcal{H}_{\mathcal{S}})$
admitting a sufficient real Jordan conditional expectation $\gamma_{J}:\mathcal{B}_{h}(\mathcal{H}_{\mathcal{S}})\to\mathcal{A}_{J}$
also yields a Koashi--Imoto type decomposition. In other words, we
prove Theorem \ref{thm:KI_deco}.
\begin{proof}[Proof of Theorem \ref{thm:KI_deco}]
 A real Jordan algebra $\mathcal{A}_{J}\subset\mathcal{B}_{h}(\mathcal{H}_{\mathcal{S}})$
always admits a decomposition of the form (\ref{eq:jordan_deco}),
namely,
\[
\mathcal{A}_{J}\cong\left\{ \bigoplus_{i}\mathbb{M}_{n_{i}}(\mathbb{C})_{h}\otimes I_{m_{i}}\right\} \oplus\left\{ \bigoplus_{j}\mathbb{M}_{n_{j}}(\mathbb{R})_{h}\otimes I_{m_{j}}\right\} \oplus\left\{ \bigoplus_{k}\mathbb{M}_{n_{k}}(\mathbb{H})_{h}\otimes I_{m_{k}}\right\} \oplus\left\{ \bigoplus_{l}\Gamma_{n_{l}}\otimes I_{m_{l}}\right\} .
\]
By Theorem \ref{thm:pos_to_CE}, there exists a supporting operator
$\omega\in\mathcal{B}(\mathcal{H}_{\mathcal{S}})_{++}$. Since $[\omega,A]=0$
for every $A\in\mathcal{A}_{J}$, $\omega$ must be of the form
\[
\omega=\left\{ \bigoplus_{i}I_{n_{i}}\otimes C^{(\mathbb{C})}_{i}\right\} \oplus\left\{ \bigoplus_{j}I_{n_{j}}\otimes C^{(\mathbb{R})}_{j}\right\} \oplus\left\{ \bigoplus_{k}I_{n_{k}}\otimes C^{(\mathbb{H})}_{k}\right\} \oplus\left\{ \bigoplus_{l}I_{n_{l}}\otimes C^{(\Gamma)}_{l}\right\} ,
\]
where $C^{(\mathbb{C})}_{i}\in\mathbb{M}_{m_{i}}(\mathbb{C}),\quad C^{(\mathbb{R})}_{j}\in\mathbb{M}_{m_{j}}(\mathbb{C}),\quad C^{(\mathbb{H})}_{k}\in\mathbb{M}_{m_{k}}(\mathbb{C}),\quad C^{(\Gamma)}_{l}\in\mathbb{M}_{m_{l}}(\mathbb{C})$
are positive operators. After a change of basis of $\mathcal{H}_{\mathcal{S}}$,
these may be taken to be diagonal matrices with strictly positive
eigenvalues; we denote them by $P^{(\mathbb{C})}_{i}$, $P^{(\mathbb{R})}_{j}$,
$P^{(\mathbb{H})}_{k}$, and $P^{(\Gamma)}_{l}$.

Again by Theorem \ref{thm:pos_to_CE}, every $X\in\mathcal{S}$ satisfies
$X\omega^{-1}=\omega^{-1}X\in\mathcal{A}_{J}.$ Hence
\[
X\omega^{-1}=\left\{ \bigoplus_{i}X^{(\mathbb{C})}_{i}\otimes I_{m_{i}}\right\} \oplus\left\{ \bigoplus_{j}X^{(\mathbb{R})}_{j}\otimes I_{m_{j}}\right\} \oplus\left\{ \bigoplus_{k}X^{(\mathbb{H})}_{k}\otimes I_{m_{k}}\right\} \oplus\left\{ \bigoplus_{l}X^{(\Gamma)}_{l}\otimes I_{m_{l}}\right\} .
\]
Multiplying by $\omega$ yields the required decomposition of $X$.
\end{proof}

\section{Sufficient support size of measurements for quantum estimation\label{sec:support}}

An important application of sufficient real Jordan algebras is the
problem of sufficient support size of measurements for quantum estimation
\cite{support}. Let $\mathcal{S}(\mathcal{H})$ denote the set of
all density operators on a Hilbert space $\mathcal{H}$. Consider
a smooth family of quantum states $\{\rho_{\theta}\in\mathcal{S}(\mathcal{H})\mid\theta\in\Theta\subset\mathbb{R}^{d}\},$
and let $M=\{M_{x}\}^{s}_{x=1}$ be a POVM with $M_{x}\in\mathcal{B}(\mathcal{H})_{+}$
and $\sum^{s}_{x=1}M_{x}=I$. The corresponding classical Fisher information
matrix is defined by
\[
J^{(M)}_{\theta,ij}=\sum^{s}_{x=1}\frac{\bigl(\Tr\partial_{i}\rho_{\theta}\,M_{x}\bigr)\bigl(\Tr\partial_{j}\rho_{\theta}\,M_{x}\bigr)}{\Tr\rho_{\theta}M_{x}}.
\]
In a neighborhood of a fixed point $\theta_{0}\in\Theta$, a POVM
minimizing
\[
\tr W_{\theta_{0}}\bigl(J^{(M)}_{\theta_{0}}\bigr)^{-1}
\]
is regarded as a good measurement for estimation \cite{helstrom}.
Here $W_{\theta_{0}}$ is a given $d\times d$ real positive definite
matrix. In general, however, such an optimization can only be carried
out numerically. Since the support size $s$ of the POVM is a priori
unbounded, it is impossible to optimize over all $s$. In \cite{support},
it was shown that it is sufficient to search only among POVMs satisfying
\[
s\le\dim\mathcal{A}_{J}+\frac{1}{2}d(d+1)-1,
\]
where $\mathcal{A}_{J}$ is a sufficient real Jordan algebra for the
local model $\mathcal{S}=\{\rho_{\theta_{0}}\}\cup\{\partial_{i}\rho_{\theta_{0}}\}^{d}_{i=1}$.
Using the real Jordan Koashi--Imoto decomposition in Theorem \ref{thm:KI_deco},
one obtains
\[
\dim\mathcal{A}_{J}=\sum_{i}n^{2}_{i}+\sum_{j}\frac{n_{j}(n_{j}+1)}{2}+\sum_{k}(2n^{2}_{k}-n_{k})+\sum_{l}(n_{l}+1).
\]

In Bayesian estimation, one is given a prior distribution $\pi:\Theta\to\mathbb{R}_{+}$,
and a pair consisting of a POVM $M=\{M_{x}\}^{s}_{x=1}$ and a map
$\hat{\theta}:\{1,2,\dots,s\}\to\mathbb{R}^{d}$ is regarded as an
estimator. One then seeks to minimize
\[
\int_{\Theta}d\theta\,\pi(\theta)\sum_{i,j,x}W_{\theta,ij}\bigl(\hat{\theta}^{i}(x)-\theta^{i}\bigr)\bigl(\hat{\theta}^{j}(x)-\theta^{j}\bigr)\Tr\rho_{\theta}M_{x}.
\]
Again, since the support size $s$ of the POVM is unbounded, it is
impossible to perform the optimization over all $s$. In \cite{support},
it was shown that in this case it is sufficient to search only among
POVMs satisfying
\[
s\le\dim\mathcal{A}_{J},
\]
where $\mathcal{A}_{J}$ is a sufficient real Jordan algebra for the
model $\mathcal{S}=\{\rho_{\theta}\in\mathcal{S}(\mathcal{H})\mid\theta\in\Theta\subset\mathbb{R}^{d}\}.$

\section{Examples\label{sec:examples}}

In this section, we illustrate the strength of the present theory
of sufficient algebras through several examples.
\begin{example}[One-parameter pure-state local model]
Consider the one-parameter family of pure states on a two-dimensional
Hilbert space
\[
\left\{ \rho_{\theta}=\frac{1}{2}\bigl(I+\cos\theta\,Z+\sin\theta\,X\bigr)\ |\ -a<\theta<a\right\} ,
\]
where $a>0$ is sufficiently small, and
\[
X=\begin{pmatrix}0 & 1\\
1 & 0
\end{pmatrix},\qquad Z=\begin{pmatrix}1 & 0\\
0 & -1
\end{pmatrix}.
\]
We consider the local model at $\theta=0$, $\mathcal{S}=\{\rho_{0},\partial\rho_{0}\}.$
Here
\[
\rho=\rho_{0}=\ket 0\bra 0,\qquad\ket 0=\begin{pmatrix}1\\
0
\end{pmatrix},\qquad\partial\rho_{0}=\frac{1}{2}X.
\]
The SLD associated with $\partial\rho_{0}$ is $L=X$. The real $*$-subalgebra
generated by $X$ and $\rho$ is $\mathbb{M}_{2}(\mathbb{R})$, and
it satisfies
\[
\rho\,\mathbb{M}_{2}(\mathbb{R})\,\rho^{-1}\subset\mathbb{M}_{2}(\mathbb{R}).
\]
Hence, by Theorem \ref{thm:min_suff_alg}, the minimal sufficient
real $*$-subalgebra is $\mathbb{M}_{2}(\mathbb{R})$. Moreover, by
Theorem \ref{thm:min_suff_J}, the real Jordan algebra generated by
$\rho$ and $X$ is $\mathbb{M}_{2}(\mathbb{R})_{h}$, which is therefore
the minimal sufficient real Jordan algebra.

More generally, consider an arbitrary smooth one-parameter family
of pure states $\{\rho_{\theta}\mid-a<\theta<a\}$ on a finite-dimensional
Hilbert space, and the corresponding local model $\mathcal{S}=\{\rho_{0},\partial\rho_{0}\}.$
Then
\[
L=2\partial\rho_{0}
\]
is the SLD associated with $\partial\rho_{0}$. Writing
\[
\rho_{0}=\ket{\psi}\bra{\psi},\qquad\ket l=L\ket{\psi},
\]
we have
\[
L=\ket{\psi}\bra l+\ket l\bra{\psi},
\]
and $\ket{\psi}$ and $\ket l$ are orthogonal. Therefore, on the
real linear span of $\ket{\psi}$ and $\ket l$, one can identify
$\rho_{0}$ and $L$ with elements of $\mathbb{M}_{2}(\mathbb{R})$.
Hence the minimal sufficient real $*$-subalgebra is $\mathbb{M}_{2}(\mathbb{R})$,
independently of the original Hilbert-space dimension. The same structure
remains unchanged under i.i.d. extension, such as the model $\rho^{\otimes n}_{\theta}$. 
\end{example}

\begin{example}[Two-parameter pure-state local model]
 Consider a smooth two-parameter family of quantum states $\{\rho_{\theta}\mid\theta\in\Theta\subset\mathbb{R}^{2}\}$
on a finite-dimensional Hilbert space $\mathcal{H}$, and fix $\theta_{0}\in\Theta$.
We consider the local model
\[
\mathcal{S}=\{\rho_{\theta_{0}},\partial_{1}\rho_{\theta_{0}},\partial_{2}\rho_{\theta_{0}}\}.
\]
If $\rho_{\theta_{0}}$ is pure, then
\[
L_{i}=2\partial_{i}\rho_{\theta_{0}},\qquad i=1,2
\]
are the SLDs associated with $\partial_{i}\rho_{\theta_{0}}$. Writing
\[
\rho_{\theta_{0}}=\ket{\psi}\bra{\psi},\qquad\ket{l_{i}}=L_{i}\ket{\psi},
\]
we have
\[
L_{i}=\ket{\psi}\bra{l_{i}}+\ket{l_{i}}\bra{\psi},\qquad i=1,2.
\]
The vectors $\ket{\psi}$ and $\ket{l_{i}}$ are orthogonal.

If $\braket{l_{1}}{l_{2}}\in\mathbb{R}$, then $\rho_{\theta_{0}},L_{1},L_{2}$
are contained in a real $*$-subalgebra isomorphic to $\mathbb{M}_{3}(\mathbb{R})$
acting on the real linear span of $\ket{\psi},\ket{l_{1}},\ket{l_{2}}$.
If $\ket{l_{1}}$ and $\ket{l_{2}}$ are complex-linearly dependent,
say $\ket{l_{1}}=c\ket{l_{2}}$ with $c\in\mathbb{C}$, then $\rho_{\theta_{0}},L_{1},L_{2}$
are contained in a real $*$-subalgebra isomorphic to $\mathbb{M}_{2}(\mathbb{C})$.
If $\ket{l_{1}}$ and $\ket{l_{2}}$ are complex-linearly independent
and $\braket{l_{1}}{l_{2}}\notin\mathbb{R}$, then $\rho_{\theta_{0}},L_{1},L_{2}$
are contained in a real $*$-subalgebra isomorphic to $\mathbb{M}_{3}(\mathbb{C})$.
Thus, depending on the relative position of $\ket{l_{1}}$ and $\ket{l_{2}}$,
the minimal sufficient real $*$-subalgebra may be $\mathbb{M}_{3}(\mathbb{R})$,
$\mathbb{M}_{2}(\mathbb{C})$, or $\mathbb{M}_{3}(\mathbb{C})$. The
corresponding minimal sufficient real Jordan algebra is respectively
$\mathbb{M}_{3}(\mathbb{R})_{h}$, $\mathbb{M}_{2}(\mathbb{C})_{h}$,
or $\mathbb{M}_{3}(\mathbb{C})_{h}$.

This structure is unchanged under i.i.d. extension. Indeed, for the
local model of
\[
\rho^{\otimes n}_{\theta_{0}+h/\sqrt{n}},\qquad h\in\mathbb{R}^{2},
\]
one obtains
\[
\ket{\psi^{(n)}}=\ket{\psi}^{\otimes n},
\]
and
\[
\ket{l^{(n)}_{i}}=\frac{1}{\sqrt{n}}\sum^{n}_{k=1}\ket{\psi}^{\otimes(k-1)}\otimes\ket{l_{i}}\otimes\ket{\psi}^{\otimes(n-k)},
\]
so that
\[
\braket{l^{(n)}_{i}}{l^{(n)}_{j}}=\braket{l_{i}}{l_{j}}.
\]
Hence the same classification persists for the i.i.d. model.

The same phenomenon persists for local models of pure states with
an arbitrary number of parameters. Matsumoto \cite{matsumoto_pure}
showed that, for pure-state models, the attainable lower bound for
the weighted mean-square error in quantum estimation is unchanged
by i.i.d. extension. The present theory suggests that this is because
the underlying sufficient algebraic structure itself remains unchanged
under i.i.d. extension.
\end{example}

\begin{example}[Quantum exponential family]
Let $\mathcal{H}$ be a finite-dimensional Hilbert space, and let
$\rho$ be a density operator on $\mathcal{H}$, not necessarily strictly
positive. Let $\{L_{i}\}^{d}_{i=1}\subset\mathcal{B}_{h}(\mathcal{H})$
be self-adjoint operators satisfying
\[
\Tr\rho L_{i}=0,\qquad(I-s)L_{i}(I-s)=0,
\]
where $s=\supp\rho$. Consider the quantum statistical model
\[
\mathcal{S}=\left\{ \rho_{\theta}=e^{\frac{1}{2}(\theta^{i}L_{i}-\phi(\theta))}\,\rho\,e^{\frac{1}{2}(\theta^{i}L_{i}-\phi(\theta))}\ \mid\ \theta\in\mathbb{R}^{d}\right\} ,
\]
where $\phi(\theta)$ is chosen so that $\Tr\rho_{\theta}=1$. This
may be regarded as a quantum analogue of an exponential family. In
contrast to the classical case, the rank may be degenerate; for instance,
the family may consist entirely of pure states.

The model $\mathcal{S}$ is absolutely continuous in the sense of
the present paper, namely through the existence of square-root likelihood
ratios. Indeed, the square-root likelihood ratio of $\rho_{\theta}$
with respect to $\rho$ is
\[
e^{\frac{1}{2}(\theta^{i}L_{i}-\phi(\theta))}.
\]
Moreover, $\{L_{i}\}^{d}_{i=1}$ are the SLDs at $\theta=0$. By Theorem
\ref{thm:min_suff_alg}, the minimal sufficient real $*$-subalgebra
is the smallest real $*$-subalgebra $\mathcal{A}$ satisfying
\[
\{L_{i}\}^{d}_{i=1}\subset\mathcal{A},\qquad\rho\mathcal{A}\rho^{-1}\subset\mathcal{A}.
\]
Furthermore, by Theorem \ref{thm:min_suff_J}, if $\rho_{0}$ denotes
the real Hilbert--Schmidt projection of $\rho$ onto $\mathcal{A}$,
then the real Jordan algebra generated by $\{L_{i}\}^{d}_{i=1}\cup\{\rho_{0}\}$
is the minimal sufficient real Jordan algebra. If this minimal sufficient
real Jordan algebra is of type $\mathbb{M}_{n}(\mathbb{R})_{h}$,
$\mathbb{M}_{n}(\mathbb{H})_{h}$, or a spin factor $\Gamma_{n}$,
then one sees structures that are invisible in the conventional theory
of sufficient algebras based only on complex $*$-subalgebras.

This example shows that our theory of square-root likelihood ratios
and sufficiency based on self-adjoint operators provides a natural
formulation of quantum exponential families, including those with
degenerate rank. By contrast, if one adopts cocycles as the analogue
of likelihood ratios in the conventional way, a natural characterization
of quantum exponential families appears to be out of reach.
\end{example}

\section{Conclusion}

In this work, we developed a theory of quantum sufficiency based on
real $*$-subalgebras and real Jordan algebras. In contrast to the
conventional formulation, which is centered on families of states
and Radon--Nikodym cocycles associated with faithful reference states,
our framework allows models consisting of general self-adjoint operators,
including derivatives of states. Within this framework, self-adjoint
likelihood-type objects, such as square-root likelihood ratios and
symmetric logarithmic derivatives, arise naturally as fundamental
quantities. This makes it possible to treat ordinary quantum statistical
models and local quantum statistical structures within a unified framework.

A central point of the present approach is that self-adjoint likelihood-type
objects, rather than Radon--Nikodym cocycles, play the role of noncommutative
analogues of classical likelihood ratios, while the genuinely quantum
aspect is encoded separately by the $\rho$-modular structure. This
viewpoint removes the need to assume that the reference state is strictly
positive and leads naturally to corresponding notions of sufficient
real $*$-subalgebras and sufficient real Jordan algebras. Within
this setting, we established characterizations in terms of real positive
maps, real and complex conditional expectations, minimal sufficient
structures, and Koashi--Imoto type decompositions.

These results suggest that real Jordan structure provides a natural
framework for the statistical aspect of quantum theory beyond the
conventional complex $*$-algebraic setting. Possible directions for
future work include extensions to asymptotic statistical theory, in
particular to local asymptotic analysis and q-LAN type frameworks,
as well as infinite-dimensional extensions, more general classes of
statistical models, and further applications to quantum estimation,
quantum testing, and asymptotic quantum statistics.

\section*{ACKNOWLEDGMENTS}

This work was supported by JSPS KAKENHI Grant Numbers JP23H01090 and
JP22K03466 and JST ERATO Grant Number JPMJER2402, Japan. The author
thanks Lauritz van Luijk for helpful correspondence and insightful
comments. 

\appendix

\section{An example showing that the restriction to $\mathcal{H}_{\mathcal{S}}$
is necessary\label{sec:ill_ex}}

Let $\{\rho_{\theta}\mid\theta\in\Theta\}\subset\mathcal{S}(\mathcal{H})$
be a family of quantum states on a Hilbert space $\mathcal{H}$, and
consider the model
\[
\begin{pmatrix}\rho_{\theta} & 0\\
0 & 0
\end{pmatrix}=\begin{pmatrix}1 & 0\\
0 & 0
\end{pmatrix}\otimes\rho_{\theta}.
\]
Assume that all $\rho_{\theta}$ commute and that $\rho_{\theta}>0$
for every $\theta\in\Theta$. Consider the $*$-subalgebra $\mathcal{A}=I_{2}\otimes\mathcal{B}(\mathcal{H}).$
Then the map
\[
\alpha:\begin{pmatrix}A_{1} & A_{3}\\
A_{2} & A_{4}
\end{pmatrix}\longmapsto\frac{1}{2}\begin{pmatrix}A_{1}+A_{4} & 0\\
0 & A_{1}+A_{4}
\end{pmatrix}
\]
is a sufficient CPTP map.

Fix $\theta_{0}\in\Theta$. Then one finds that 
\[
\begin{pmatrix}\rho_{\theta}\rho^{-1}_{\theta_{0}} & 0\\
0 & 0
\end{pmatrix}\notin\mathcal{A}.
\]
At first sight, this seems to contradict Theorem \ref{thm:ratio_abs},
which states that the likelihood ratio belongs to a sufficient algebra.
The point is that this pathology occurs because we have not restricted
the Hilbert space to $\mathcal{H}_{\mathcal{S}}$. In the present
example, the correct formulation should first pass to $\mathcal{H}_{\mathcal{S}}$.
Similar pathological examples may arise even in classical statistics.

\section{$\protect\D_{\rho}$-Invariance and $\rho$-Modular Invariance \label{sec:dinv}}

Let $\{\rho_{\theta}\mid\theta\in\Theta\subset\mathbb{R}^{d}\}$ be
a smoothly parameterized family of quantum states on a finite-dimensional
Hilbert space $\mathcal{H}$, and fix a point $\theta_{0}\in\Theta$.
Let $L_{1},L_{2},\dots,L_{d}$ be the symmetric logarithmic derivatives
(SLDs) at $\theta_{0}$, and write $\rho=\rho_{\theta_{0}}$. In the
preliminary discussion below, we assume that $\rho>0$, although in
many situations this assumption can be removed. 

Let $X_{1},X_{2},\dots,X_{r}$ be observables satisfying $L_{i}\in\mathcal{X}:=\Span_{\mathbb{C}}\{X_{k}\}^{r}_{k=1}\,(1\le i\le d),$
and $\mathcal{D}_{\rho}(\mathcal{X})\subset\mathcal{X}.$ Then $X_{1},X_{2},\dots,X_{r}$
are called a $\mathcal{D}_{\rho}$-invariant extension of the SLDs.
Here $\ensuremath{\mathcal{D}_{\rho}:\mathcal{B}(\mathcal{H})\to\mathcal{B}(\mathcal{H})}$
is the linear map defined by
\begin{equation}
\D_{\rho}=\ii\frac{\mathcal{L_{\rho}}-\mathcal{R}_{\rho}}{\mathcal{L_{\rho}}+\mathcal{R}_{\rho}},\label{eq:d-ope}
\end{equation}
where $\mathcal{L}_{\rho}(Z)=\rho Z$, and $\mathcal{R}_{\rho}(Z)=Z\rho.$
Note that $\mathcal{L}_{\rho}$ and $\mathcal{R}_{\rho}$ commute.

In quantum estimation theory, $\mathcal{D}_{\rho}$-invariant extensions
of the SLDs often play an important role. For example, when estimating
$\theta$, a pair $(M,\hat{\theta})$ consisting of a POVM $M$ and
an estimator map $\hat{\theta}$ is regarded as an estimator. If $(M,\hat{\theta})$
is locally unbiased at $\theta_{0}\in\Theta$, then its covariance
matrix $V_{\theta_{0}}[M,\hat{\theta}]$ satisfies
\[
\tr WV_{\theta_{0}}[M,\hat{\theta}]\ge c^{(H)},
\]
where $W>0$ is a given real positive matrix. The quantity $c^{(H)}$
is called the Holevo Cramér--Rao bound, and it can be computed efficiently
using a $\mathcal{D}_{\rho}$-invariant extension of the SLDs \cite{qLAN1,holevo}.
Furthermore, in the asymptotic representation theorem of \cite{qLAN3},
for the i.i.d. extension $\rho^{\otimes n}_{\theta}$ of the model,
the observables
\[
X^{(n)}_{i}=\frac{1}{\sqrt{n}}\sum^{n}_{k=1}I^{\otimes(k-1)}\otimes X_{i}\otimes I^{\otimes(n-k)}
\]
have the same asymptotic behavior as those of a quantum Gaussian state,
and this makes it possible to translate arbitrary sequences of statistics
into statistics on a quantum Gaussian model. In this q-LAN theory,
SLDs and square-root likelihood ratios play the roles of classical
logarithmic derivatives and likelihood ratios, while $\mathcal{D}_{\rho}$-invariant
extension plays an important auxiliary role. Note also that q-LAN
applies beyond the i.i.d. setting.

In connection with this, the following theorem relates $\mathcal{D}_{\rho}$-invariance
and $\rho$-modular invariance. Let $\rho\in\mathcal{S}(\mathcal{H})$
be a density operator, not necessarily faithful, and define $\mathcal{D}_{\rho}:\mathcal{B}(\mathcal{H})\to\mathcal{B}(\mathcal{H})$
by (\ref{eq:d-ope}), where the inverse of $\mathcal{L}_{\rho}+\mathcal{R}_{\rho}$
is understood as the Moore--Penrose generalized inverse with respect
to the Hilbert--Schmidt inner product. We also introduce
\begin{equation}
\widetilde{\mathcal{D}}_{\rho}=\frac{\mathcal{L_{\rho}}-\mathcal{R}_{\rho}}{\mathcal{L_{\rho}}+\mathcal{R}_{\rho}}.\label{eq:d-ope-1}
\end{equation}

\begin{thm}
Let $\mathcal{A}\subset\mathcal{B}(\mathcal{H})$ be a real $*$-subalgebra.
Then the following conditions are equivalent:
\begin{description}
\item [{(i)}] $\rho A\rho^{-1}\in\A$ for all $A\in\A$.
\item [{(ii)}] $\widetilde{\mathcal{D}}_{\rho}(A)\in\A$ for all $A\in\A$.
\end{description}
If, moreover, $\mathcal{A}$ is a complex $*$-subalgebra, then these
conditions are also equivalent to
\begin{description}
\item [{(ii)'}] $\D_{\rho}(A)\in\A$ for all $A\in\A$
\end{description}
\end{thm}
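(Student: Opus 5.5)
The plan is to treat both conditions as invariance of the real subspace $\mathcal{A}$ under functions of the commuting pair $(\mathcal{L}_{\rho},\mathcal{R}_{\rho})$, and to pass between them by functional calculus. Fix an eigenbasis $\{\ket{i}\}$ of $\rho$ with $\rho\ket{i}=\lambda_{i}\ket{i}$. The matrix units $\ket{i}\bra{j}$ diagonalize both $\Delta(Z)=\rho Z\rho^{-1}$ and $\widetilde{\D}_{\rho}$:
\begin{itemize}
\item $\Delta$ multiplies $\ket{i}\bra{j}$ by $\lambda_{i}/\lambda_{j}$ when $\lambda_{j}>0$, and by $0$ otherwise;
\item $\widetilde{\D}_{\rho}$ multiplies $\ket{i}\bra{j}$ by $(\lambda_{i}-\lambda_{j})/(\lambda_{i}+\lambda_{j})$ when $\lambda_{i}+\lambda_{j}>0$, and by $0$ otherwise.
\end{itemize}
Both are real-linear, diagonalizable maps with real spectra. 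I will use one elementary fact repeatedly. If a real-linear map $T$ on $\mathcal{B}(\H)$ leaves $\mathcal{A}$ invariant, then so does every real polynomial in $T$. In particular, every spectral projection of $T$ and every $f(T)$ for real-valued $f$ on its finite spectrum leaves $\mathcal{A}$ invariant, by Lagrange interpolation.

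First I treat the faithful case $\rho>0$, which is the setting of the preliminary discussion in this appendix. Here $\Delta$ has spectrum in $(0,\infty)$ and $\widetilde{\D}_{\rho}=h(\Delta)$ with $h(\mu)=(\mu-1)/(\mu+1)$. Since $h$ is a bijection from $(0,\infty)$ onto $(-1,1)$, we also have $\Delta=h^{-1}(\widetilde{\D}_{\rho})=(\id+\widetilde{\D}_{\rho})(\id-\widetilde{\D}_{\rho})^{-1}$. The spectrum of $\widetilde{\D}_{\rho}$ lies in $(-1,1)$, so $\id-\widetilde{\D}_{\rho}$ is invertible and its inverse is a polynomial in $\widetilde{\D}_{\rho}$. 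Each map is therefore a real polynomial in the other, which gives (i)$\Leftrightarrow$(ii) at once. For the complex case, $\D_{\rho}=\ii\,\widetilde{\D}_{\rho}$. If $\mathcal{A}$ is a complex $*$-subalgebra, it is closed under multiplication by $\ii$, so invariance under $\D_{\rho}$ and under $\widetilde{\D}_{\rho}$ coincide, giving (ii)$\Leftrightarrow$(ii)'.

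For degenerate $\rho$, put $s=\supp\rho$ and $\kappa=I-s$. For (i)$\Rightarrow$(ii), note that $s=\rho I\rho^{-1}\in\mathcal{A}$, so $A\mapsto sAs,\,sA\kappa,\,\kappa As,\,\kappa A\kappa$ all preserve $\mathcal{A}$. From the eigenvalue table, $\widetilde{\D}_{\rho}$ acts on each block as follows:
\begin{itemize}
\item on $s\mathcal{B}s$ it equals $h(\Delta)$, which is a polynomial in $\Delta$ because $\Delta$ has positive spectrum there;
\item on $s\mathcal{B}\kappa$ it acts as $+1$;
\item on $\kappa\mathcal{B}s$ it acts as $-1$;
\item on $\kappa\mathcal{B}\kappa$ it acts as $0$.
\end{itemize}
Hence
\[
\widetilde{\D}_{\rho}(A)=p(\Delta)(sAs)+sA\kappa-\kappa As\in\mathcal{A}.
\]

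The converse (ii)$\Rightarrow$(i) is the main obstacle, and it is not true as stated without extra input. Take $\mathcal{A}=\R I$. Then $\widetilde{\D}_{\rho}(I)=0\in\mathcal{A}$, yet $\rho I\rho^{-1}=s\notin\mathcal{A}$ when $\rho$ is not faithful. So I would prove (ii)$\Rightarrow$(i) under the additional hypothesis $s\in\mathcal{A}$, which is harmless since it holds for every sufficient algebra by Theorem~\ref{thm:modular}. Given $s\in\mathcal{A}$, the same block decomposition applies. On $s\mathcal{B}s$, $\Delta=h^{-1}(\widetilde{\D}_{\rho})$ is a polynomial in $\widetilde{\D}_{\rho}$, because the spectrum of $\widetilde{\D}_{\rho}$ there stays in $(-1,1)$. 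On $\kappa\mathcal{B}s$ and $\kappa\mathcal{B}\kappa$, $\Delta$ vanishes. On $s\mathcal{B}\kappa$, $\Delta$ also vanishes, since it multiplies $\ket{i}\bra{j}$ by $0$ whenever $\lambda_{j}=0$. Therefore
\[
\rho A\rho^{-1}=q(\widetilde{\D}_{\rho})(sAs)\in\mathcal{A}.
\]
The complex statement (ii)' follows exactly as in the faithful case. The step I expect to need the most care is writing out the eigenvalue bookkeeping on the four blocks, together with justifying the caveat $s\in\mathcal{A}$ in the degenerate case.
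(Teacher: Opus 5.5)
Your proposal is correct, and your counterexample is valid. For non-faithful $\rho$ and $\A=\R I$ (or $\C I$), $\widetilde{\mathcal{D}}_{\rho}(I)=0$, because $\widetilde{\mathcal{D}}_{\rho}$ annihilates each $\ket{i}\bra{i}$; yet $\rho I\rho^{-1}=s\notin\A$. So (ii)$\Rightarrow$(i) and (ii)'$\Rightarrow$(i) are false as stated, and the paper's proof has the matching gap. The paper obtains $A_{1}\oplus A_{4}\in\A$ from $\lim_{n}\widetilde{\mathcal{D}}^{2n}_{\rho}(A)$ and deduces $\widetilde{\mathcal{D}}_{\rho_{1}}(A_{1})\oplus0\in\A$. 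It then concludes $\Delta_{\rho_{1}}(A_{1})\oplus0=\rho A\rho^{-1}\in\A$ from $\Delta_{\rho_{1}}=(I+\widetilde{\mathcal{D}}_{\rho_{1}})(I-\widetilde{\mathcal{D}}_{\rho_{1}})^{-1}$. Written as a polynomial in $\widetilde{\mathcal{D}}_{\rho_{1}}$, this map has constant term $1$: it is the identity on $\ker\widetilde{\mathcal{D}}_{\rho_{1}}$, the commutant of $\rho_{1}$. So the last step needs $A_{1}\oplus0=sAs\in\A$, while only $A_{1}\oplus A_{4}$ is available. For $A=I$ the missing input is exactly $s\in\A$.

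Your amendment is the right one. Since (i) forces $s=\rho I\rho^{-1}\in\A$, your argument shows that (i) holds if and only if (ii) holds and $s\in\A$. Apart from this, your route is the paper's. It uses the same block decomposition with respect to $s$ and $\kappa$, the same eigenvalues $+1,-1,0$ on $s\B\kappa$, $\kappa\B s$, $\kappa\B\kappa$, and the same functional calculus $\widetilde{\mathcal{D}}_{\rho_{1}}\leftrightarrow\Delta_{\rho_{1}}$ on $s\B s$. Your Lagrange-interpolation remark makes explicit why these functions preserve $\A$. Once $s\in\A$ is available, you can cut out the blocks with $s$ and $\kappa$ directly instead of passing to limits of powers. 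The amendment costs nothing in the paper's application. With the minimal-norm convention, $R_{\rho}=s$ for absolutely continuous models and $L_{\rho}=s$ for local models. Hence $s\in\mathcal{R}$ already lies in every algebra satisfying condition (ii) of Theorem~\ref{thm:min_suff_alg}.
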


\begin{proof}
We first prove (ii)$\Rightarrow$(i). Write $\rho=\begin{pmatrix}\rho_{1} & 0\\
0 & 0
\end{pmatrix}$, $\rho_{1}>0,$ and decompose $A=\begin{pmatrix}A_{1} & A_{2}\\
A_{3} & A_{4}
\end{pmatrix}$ accordingly. Then for every $n\in\mathbb{N}$,
\[
\widetilde{\mathcal{D}}^{n}_{\rho}(A)=\begin{pmatrix}\widetilde{\mathcal{D}}^{n}_{\rho_{1}}(A_{1}) & A_{2}\\
(-1)^{n}A_{3} & 0
\end{pmatrix}.
\]
If $\{p_{i}\}_{i}$ are the eigenvalues of $\rho_{1}$, then the eigenvalues
of $\widetilde{\mathcal{D}}_{\rho_{1}}$ are $\left\{ \frac{p_{i}-p_{j}}{p_{i}+p_{j}}\right\} _{i,j},$
so $\widetilde{\mathcal{D}}_{\rho_{1}}$ is a strict contraction.
Hence
\[
\lim_{n\to\infty}\widetilde{\mathcal{D}}^{2n}_{\rho}(A)=\begin{pmatrix}0 & A_{2}\\
A_{3} & 0
\end{pmatrix}\in\mathcal{A}.
\]
It follows that $\begin{pmatrix}A_{1} & 0\\
0 & A_{4}
\end{pmatrix}\in\mathcal{A}.$ Moreover,
\begin{equation}
\widetilde{\mathcal{D}}_{\rho_{1}}=\frac{\mathcal{L}_{\rho_{1}}-\mathcal{R}_{\rho_{1}}}{\mathcal{L}_{\rho_{1}}+\mathcal{R}_{\rho_{1}}}=\frac{\Delta_{\rho_{1}}-I}{\Delta_{\rho_{1}}+I},\qquad\Delta_{\rho_{1}}=\frac{I+\widetilde{\mathcal{D}}_{\rho_{1}}}{I-\widetilde{\mathcal{D}}_{\rho_{1}}},\label{eq:d2Delta1}
\end{equation}
where $\Delta_{\rho_{1}}(Z)=\rho_{1}Z\rho^{-1}_{1}.$ Since $\widetilde{\mathcal{D}}_{\rho}\left(\begin{pmatrix}A_{1} & 0\\
0 & A_{4}
\end{pmatrix}\right)=\begin{pmatrix}\widetilde{\mathcal{D}}_{\rho_{1}}(A_{1}) & 0\\
0 & 0
\end{pmatrix}\in\mathcal{A},$ we conclude from (\ref{eq:d2Delta1}) that $\begin{pmatrix}\Delta_{\rho_{1}}(A_{1}) & 0\\
0 & 0
\end{pmatrix}=\rho A\rho^{-1}\in\mathcal{A}.$ Thus (ii)$\Rightarrow$(i).

Next we prove (i)$\Rightarrow$(ii). By Theorem \ref{thm:modular},
we have $s=\supp\rho\in\mathcal{A}$. Hence for $A=\begin{pmatrix}A_{1} & A_{2}\\
A_{3} & A_{4}
\end{pmatrix}\in\mathcal{A},$ we have
\[
sA\kappa=\begin{pmatrix}0 & A_{2}\\
0 & 0
\end{pmatrix}\in\mathcal{A},\qquad\kappa As=\begin{pmatrix}0 & 0\\
A_{3} & 0
\end{pmatrix}\in\mathcal{A},
\]
where $\kappa=I-s$. On the other hand,
\[
\rho A\rho^{-1}=\begin{pmatrix}\Delta_{\rho_{1}}(A_{1}) & 0\\
0 & 0
\end{pmatrix}\in\mathcal{A},
\]
so by (\ref{eq:d2Delta1}), $\begin{pmatrix}\widetilde{\mathcal{D}}_{\rho_{1}}(A_{1}) & 0\\
0 & 0
\end{pmatrix}\in\mathcal{A}.$ Therefore,
\[
\widetilde{\mathcal{D}}_{\rho}(A)=\begin{pmatrix}\widetilde{\mathcal{D}}_{\rho_{1}}(A_{1}) & A_{2}\\
-A_{3} & 0
\end{pmatrix}\in\mathcal{A},
\]
and (i)$\Rightarrow$(ii) follows.

Finally, if $\mathcal{A}$ is a complex $*$-subalgebra, then (ii)
and (ii)' are equivalent, since $\mathcal{D}_{\rho}=\ii\,\widetilde{\mathcal{D}}_{\rho}.$
\end{proof}

\end{document}